\documentclass[11pt]{article}
\usepackage[margin=1in]{geometry}
\usepackage[T1]{fontenc}
\usepackage[utf8]{inputenc}
\usepackage{mathpazo}
\usepackage{amsmath,amsthm,mathtools,amssymb}
\usepackage[shortlabels]{enumitem}
\usepackage[svgnames,table]{xcolor}
\usepackage{graphicx}
\usepackage{booktabs}
\usepackage{multirow}
\usepackage{caption}
\usepackage{algorithm}
\usepackage{algorithmic}

\usepackage{natbib}
\usepackage{bm}
\usepackage{pifont}
\definecolor{DarkBlue}{rgb}{0.1,0.1,0.5}
\definecolor{DarkGreen}{rgb}{0.1,0.5,0.1}
\usepackage{hyperref}
\hypersetup{colorlinks=true, linkcolor=DarkBlue, urlcolor=DarkBlue, citecolor=DarkGreen}
\usepackage[capitalize]{cleveref}

\usepackage{booktabs}

\newtheorem{lemma}{Lemma}
\newtheorem{theorem}{Theorem}

\newtheorem{corollary}{Corollary}
\newtheorem{definition}{Definition}
\newtheorem{observation}{Observation}

\newtheorem{remark}{Remark}
\newtheorem{proposition}{Proposition}
\newtheorem{OP}{Open Problem}
\crefname{OP}{open problem}{open problems}
\Crefname{OP}{Open Problem}{Open Problems}

\newcommand{\cmark}{\ding{51}}   % check mark
\newcommand{\xmark}{\ding{55}}   % cross mark
\newcommand{\qmark}{\textbf{?}}
\newcommand{\rfp}[1]{{\scriptsize [Prop.~\ref{#1}]}}
\newcommand{\rft}[1]{{\scriptsize [Thm.~\ref{#1}]}}
\newcommand{\rfc}[1]{{\scriptsize [Cor.~\ref{#1}]}}
\newcommand{\EFX}{\textsc{EFX}}
\newcommand{\EF}{\textsc{EF}}
\newcommand{\EQ}{\textsc{EQ}}
\newcommand{\EQX}{\textsc{EQX}}
\newcommand{\EQone}{\textsc{EQ1}}
\newcommand{\EFone}{\textsc{EF1}}
\newcommand{\EFonew}{\makebox[2.4em][c]{\EF{1}}}
\newcommand{\EFXw}{\makebox[2.4em][c]{\EFX}}
\newcommand{\EQonew}{\makebox[2.4em][c]{\EQ{1}}}
\newcommand{\EQXw}{\makebox[2.4em][c]{\EQX}}
\newcommand{\MMS}{MMS}
\newcommand{\MNW}{MNW}
\newcommand{\PO}{\textsc{PO}}
\newcommand{\NPH}{\textsc{NP-hard}}

\title{Simultaneous Envy and Equitability Guarantees}

\author{
    \parbox[t]{0.45\textwidth}{%
    \centering
    Hadi Hosseini \\ 
    Penn State University\\ 
    \small \texttt{hadi@psu.edu}
    }
    \and
    \parbox[t]{0.45\textwidth}{%
    \centering
    Shraddha Pathak\\
    Penn State University\\ \small \texttt{ssp5547@psu.edu}
    } 
    \vspace{1em}
    \and
    \parbox[t]{0.45\textwidth}{%
    \centering
    Lirong Xia\\
    Rutgers University\\ \small \texttt{lirong.xia@rutgers.edu}
    }
    \and
    \parbox[t]{0.45\textwidth}{%
    \centering
    Chengkai Zhang\\
    Rutgers University\\ \small \texttt{cz521@scarletmail.rutgers.edu}
    }
}

\date{}

\begin{document}
\maketitle

\begin{abstract}
Recent work in fair division has focused on either simultaneously satisfying closely related fairness notions or achieving a single notion across the ex-ante and ex-post worlds. 
We study the compatibility of two fundamentally different fairness notions: envy-freeness and equitability. 
For indivisible goods-only and chores-only settings, we study the existence and complexity of simultaneously satisfying their relaxations, revealing sharp contrasts between the two settings. We show that \EFone{}+\EQone{} may fail to exist even for normalized binary goods: we construct an instance with 113 agents and 341 goods in which every agent approves exactly 165 goods, but no complete allocation satisfies both notions. Our main algorithmic result computes an \EFone{}+\EQone{} allocation for every normalized binary goods instance with at most seven agents. Thus, the smallest number of agents admitting a counterexample lies between 8 and 113, leaving the cases from 8 through 112 unresolved. In sharp contrast, binary chores admit the stronger \EFX{}+\EQX{} guarantee for any number of agents, even without normalization.
We further initiate the study of cross-notion ex-ante--ex-post guarantees, asking whether randomized allocations can provide ex-ante guarantees for one notion while preserving ex-post guarantees for another.
\end{abstract}

\section{Introduction} \label{sec:intro}

Fair division of indivisible items is a fundamental problem concerning the allocation of resources or tasks among a group of agents with potentially different idiosyncratic preferences over those items.
The literature in this field has given rise to a rich landscape of fairness notions, motivated by both normative principles of distributive justice \citep{rawls1971theory} and practical considerations such as computational and existential limitations (see, e.g., \citet{amanatidis2023fair}).

Most existing work studies fairness notions in isolation. However, simultaneously satisfying multiple notions can better accommodate the diversity of metrics (e.g. in human preferences as discussed in \citet{herreiner2009envy,hosseini2024fairness}) while providing stronger fairness guarantees.
The simultaneous guarantees have often emerged as a byproduct of  axiomatic implications or algorithmic procedures, with a few recent notable exceptions \citep{akrami2025achieving,akrami2026simultaneous,akrami2026achieving,babaioff2021fair}. 
A separate line of research explores best-of-both-worlds guarantees, aiming to achieve a \textit{single} fairness notion (e.g., envy-freeness) ex ante while preserving an approximate guarantee for the same notion ex post \citep{aziz2024best,bhaskar2025best,babaioff2021best}.
This raises a natural question: 

\begin{quote}
\textit{Can we simultaneously guarantee multiple, fundamentally different notions of fairness, both in deterministic allocations and across ex-ante and ex-post worlds?}
\end{quote}

We consider two prominent fairness notions. The first, \textit{envy-freeness} (\EF{}) \citep{foley1966resource}, is based on \textit{intra}personal comparisons: each agent evaluates its own bundle against others' bundles using its \textit{own valuation function}. The second, \textit{equitability} (EQ) \citep{dubins1961cut}, relies on \textit{inter}personal comparisons, requiring all agents to derive equal \textit{subjective} value from their respective bundles.
This distinction leads to different invariance properties. Envy-freeness is individually \textit{scale-invariant}: rescaling an agent's valuation preserves envy-freeness. Equitability, however, can be affected by scaling up or down a single individual's valuation. Thus, we consider a mild \textit{normalization}: all agents agree on the value of the grand bundle---a well-motivated assumption in divisible cake cutting \citep{brams1996fair}, online allocations \citep{gkatzelis2021fair}, and in allocating mixtures of goods and chores \citep{barman2026fair,hosseini2026landscape}.

In the presence of indivisible items, exact EF and EQ allocations may fail to exist ex-post (e.g., one item and two agents).
Thus, we focus on their natural relaxations: \textit{envy-freeness up to one item} (EF1) and \textit{envy-freeness up to any item} (EFX), as well as their equitability counterparts EQ1 and EQX. These notions require the corresponding fairness guarantee to hold after the hypothetical removal of one item from an appropriate bundle.

Our work opens a new direction for investigating the interplay between fairness notions. 
We study the compatibility of two envy-based and equitability-based notions in a \textit{deterministic} sense under structured valuations. Importantly, we initiate, to the best of our knowledge, the study of achieving distinct fairness guarantees across the ex-ante and ex-post worlds.
In particular, we ask whether probabilistic guarantees for one notion (e.g., EF) can coexist with deterministic guarantees for another (e.g., EQ1).

\subsection{Contributions}

We investigate the compatibility of relaxed notions of envy-freeness and equitability in deterministic allocations of indivisible items, as well as their relationships with probabilistic ex-ante guarantees. We focus on \textit{additive} valuations in two settings: when all items are goods (with non-negative values) and when all items are chores (with non-positive values). 
\Cref{tab:results} summarizes our main results.

\paragraph{Deterministic guarantees.}

We show that an allocation satisfying \EFone{} + \EQone{} may fail to exist for $n\ge 3$, even for normalized bivalued valuations (\Cref{prop:non-exist-3-and-more}). Furthermore, deciding whether such an allocation exists is \NPH{} for unnormalized instances with two agents  (\Cref{thm:hardness}).
Nonetheless, for two agents, \EFone{}+\EQone{} allocations always exist and admit polynomial-time algorithms for both goods (\Cref{thm:two_agents_ef1_eq1}) and chores (\Cref{cor:two_agents_chores}) under normalization. We further show that the stronger \EFX{}+\EQX{} guarantee exhibits a sharp distinction between the two settings: it is achievable for goods, but may not exist for chores (\Cref{prop:two_agents_efx_eqx}).

Our main results in this section concern normalized binary goods.
We design an algorithm achieving \EFone{}+\EQone{} for up to seven agents (\Cref{thm:binary-7}) and, under laminar approval structures, for any number of agents (\Cref{thm:laminar_binary_goods}). However, normalization and binary valuations do not guarantee compatibility in general: \Cref{thm:binary-113-counterexample} gives a 113-agent, 341-good instance in which every agent approves 165 goods and no complete \EFone{}+\EQone{} allocation exists. The construction uses orthogonality in a five-dimensional vector space over the two-element field. Competition among identical agents caps all realized utilities through \EQone{}, while overlapping approval sets cap every bundle's size through \EFone{}; the goods then exceed the total bundle capacity. Together with our seven-agent algorithm, this gives $8\le N_{\min}\le113$ for the smallest counterexample population. The remaining challenge is to narrow this gap by finding a counterexample with 8--112 agents or extending the existence and algorithmic guarantees to larger populations (\Cref{op:binary-gap}).
For binary chores, we show that the stronger \EFX{}+\EQX{} guarantee can always be satisfied (\Cref{thm:binary_chores_efx_eqx}).

\paragraph{Ex-ante and ex-post guarantees.}

For two agents with normalized valuations, both goods and chores admit randomized allocations achieving ex-ante \EF{} + \EQ{}, with every allocation in the support satisfying \EFone{} ex-post (\Cref{prop:ex-ante-EQQ-ex-post-EF1-2-agents}).
For normalized binary instances, we further observe a sharp distinction between goods and chores. While we construct randomized allocations achieving ex-ante \EF{} + \EQ{} in both settings, the ex-post guarantees differ: we obtain \EQone{} for goods (\Cref{thm:binary-lottery}), but the stronger \EFone{}+\EQone{} guarantee for chores, which holds even \textit{without normalization} and extends to restricted-additive chores (\Cref{thm:chores-binary-ex-ante-ex-post}); the latter guarantee is due to \citet{sun2025randomized}, and we include a short self-contained proof of the form we need.
The new binary-goods counterexample also rules out an ex-post \EFone{}+\EQone{} guarantee for arbitrary normalized binary goods, regardless of the ex-ante requirements (\Cref{cor:binary-113-lottery}). Finally, for laminar binary goods the ex-post \EFone{} guarantee cannot be strengthened to \EFX{} (\Cref{prop:laminar-expost-efx-tight}), and for restricted-additive chores neither \EFone{} nor \EQone{} can be strengthened to its ``up to any item'' counterpart (\Cref{prop:chores-expost-tight}).

\begin{table*}[t]
\centering
\small
\caption{Summary of our main results: 
\cmark{} = an allocation always exists; \xmark{} = it may fail to exist; P = it exists and can be computed in polynomial time; $\qmark$ = it is an open problem; and (general) refers to cases where results hold \textit{without normalization}. Unless stated otherwise, all instances are \textit{additive} and \textit{normalized}.  
A randomized guarantee is a randomized allocation that satisfies the stated ex-ante notions exactly, while every allocation in its support satisfies the stated ex-post guarantee.
For binary goods, the general nonexistence entry is witnessed by 113 agents; the cases $8\le n\le112$ remain unresolved. \Cref{thm:chores-binary-ex-ante-ex-post} and \Cref{cor:rest-add-chores-ef1-eq1} follow from \citet{sun2025randomized}; all other resolved entries are established in this paper.}
\label{tab:results}
\setlength{\tabcolsep}{3pt}
\begin{tabular}{clllllll}
\toprule
\multirow{2}{*}{Domains} & \multirow{2}{*}{Valuation Types} & \multirow{2}{*}{Agents}
& \multicolumn{2}{l}{Deterministic Guarantees} & \multicolumn{3}{l}{Randomized Guarantees}\\
\cmidrule(lr){4-5}\cmidrule(lr){6-8}
& & & Guarantee & Ref. & Ex-ante & Ex-post & Ref.\\
\midrule

\multirow{6}{*}{\textbf{Goods}}
& \multirow{2}{*}{Additive} & \multirow{2}{*}{$2$}
  & \EFXw+\EQXw~\cmark & \rfp{prop:two_agents_efx_eqx}
  & \EQ{} & \EFXw~\xmark & \rfp{prop:no-ex-ante-EQ-ex-post-EFX}\\
&&& \EFonew+\EQonew~P & \rft{thm:two_agents_ef1_eq1}
  & \EF{}+\EQ{} & \EFonew~P & \rft{prop:ex-ante-EQQ-ex-post-EF1-2-agents}\\

\cmidrule(lr){2-8}

& \multirow{3}{*}{Binary} & $n\le7$
  & \EFonew+\EQonew~P & \rft{thm:binary-7}
  &  &  & \\
&& $8\le n\le112$
  & \EFonew+\EQonew~\qmark & \scriptsize [Open]
  &  &  & \\
&& Arbitrary
  & \EFonew+\EQonew~\xmark & \rft{thm:binary-113-counterexample}
  & \EF{}+\EQ{} & \EQonew~P & \rft{thm:binary-lottery}\\

\cmidrule(lr){2-8}

& Binary \& laminar & Arbitrary
  & \EFonew+\EQonew~P & \rft{thm:laminar_binary_goods}
  & \EF{}+\EQ{} & \EFonew+\EQonew~P & \rft{thm:laminar-binary-lottery}\\

\midrule

\multirow{4}{*}{\textbf{Chores}}
& \multirow{2}{*}{Additive} & \multirow{2}{*}{$2$}
  & \EFXw+\EQXw~\xmark & \rfp{prop:two_agents_efx_eqx}
  & \multirow{2}{*}{\EF{}+\EQ{}} & \multirow{2}{*}{\EFonew~P} & \multirow{2}{*}{\rft{prop:ex-ante-EQQ-ex-post-EF1-2-agents}}\\
&&& \EFonew+\EQonew~P & \rfc{cor:two_agents_chores} &&&\\

\cmidrule(lr){2-8}

& Binary (general) & \multirow{2}{*}{Arbitrary}
  & \EFXw+\EQXw~P & \rft{thm:binary_chores_efx_eqx}
  & \multirow{2}{*}{\EF{}+\EQ{}} & \multirow{2}{*}{\EFonew+\EQonew~P} & \multirow{2}{*}{\rft{thm:chores-binary-ex-ante-ex-post}}\\
& Rest.-add. (general) && \EFonew+\EQonew~P & \rfc{cor:rest-add-chores-ef1-eq1} &&&\\

\bottomrule
\end{tabular}
\end{table*}

\subsection{Related Work}\label{sec:related-work}

Fair division of indivisible items has developed into a rich research area at the intersection of economics, artificial intelligence, and theoretical computer science; we refer to \citet{brandt2016handbook} and \citet{amanatidis2023fair} for broad overviews.
Here we discuss the lines of work most relevant to this paper.

\paragraph{Envy-freeness and its relaxations.}
Envy-free allocations of indivisible items may fail to exist even in trivial instances, which has motivated a hierarchy of relaxations.
\citet{lipton2004approximately} introduced the envy-cycle-elimination procedure, which guarantees, for arbitrary monotone valuations, an allocation in which envy is bounded by a single item; the resulting notion was later formalized as \EFone{} by \citet{budish2011combinatorial} and has become the standard benchmark; \EFone{} allocations always exist and can be computed in polynomial time.
The stronger notion \EFX{} \citep{caragiannis2019unreasonable} is known to exist for two agents and for identical (general monotone) valuations via the leximin++ solution \citep{plaut2020almost}---a technique we reuse for our two-agent \EFX{}+\EQX{} guarantee---for three additive agents \citep{chaudhury2020efx}, and for dichotomous valuations, i.e., valuations with binary marginals \citep{babaioff2021fair,bu2023efx}.
Very recently, its general existence was refuted for monotone submodular valuations \citep{akrami2026counterexample,mackenzie2026counterexamples}; the additive case remains a central open problem in the field.
Our results concern the opposite frontier: even where \EFone{}---the weakest notion in this hierarchy---is trivially achievable, we show it can be incompatible with equitability.

\paragraph{Equitability and its relaxations.}
The systematic study of \EQone{} and \EQX{} for indivisible goods was initiated by \citet{freeman2019equitable}, who showed that leximin allocations are \EQX{} (and Pareto optimal when all values are strictly positive), gave a pseudopolynomial-time algorithm for \EQone{}+\PO{}, and delineated the complexity of combining equitability with welfare objectives.
The companion paper on chores \citep{freeman2020equitable} established \EQone{}+\PO{} in pseudopolynomial time and showed that, in contrast to goods, leximin may violate \EQone{}; \citet{sun2023equitability} completed the complexity landscape of equitability together with welfare maximization for both goods and chores.
Beyond additive valuations, \citet{hosseini2026landscape} showed that \EQone{} allocations exist and are efficiently computable for two agents with general valuations and for many agents under doubly monotone valuations, while existence may fail---and is intractable to decide---in general, and \citet{barman2026fair} established equitability within additive margins for non-negative valuations.
Finally, two recent works connect equitability to envy: \citet{wang2025achieving} characterize when \EQ{} and \EF{} can be achieved \emph{simultaneously} when monetary subsidies are allowed, and \citet{wang2026centralized} show that \EFone{} is always compatible with an equitability notion imposed across \emph{groups} of agents.
Our work complements these by studying the compatibility of individual-level relaxations (\EQone{}/\EQX{} with \EFone{}/\EFX{}) without transfers.

\paragraph{Simultaneous fairness guarantees.}
A recurring theme in fair division is whether several desiderata can be achieved by a single allocation.
The most prominent combinations pair fairness with \emph{efficiency}: maximum Nash welfare allocations are \EFone{}+\PO{} \citep{caragiannis2019unreasonable}, equitable analogues were given by \citet{freeman2019equitable,freeman2020equitable}, and the long-standing existence question of \EFone{}+\PO{} for chores was recently resolved positively \citep{mahara2025existence}.
Closer to our agenda are combinations of two \emph{fairness} notions.
\citet{garg2025exploring} map out the logical implications among more than twenty fairness notions; their question---whether every allocation satisfying one notion satisfies another---is complementary to ours, which asks whether \emph{some} allocation satisfies both.
\citet{akrami2025achieving} and \citet{akrami2026simultaneous} combined \MMS{} approximations with \EFX{}/\EFone{}, and \citet{akrami2026achieving} achieved \EFone{} and epistemic-\EFX{} simultaneously.
To the best of our knowledge, the envy-equitability axis has so far been addressed only with monetary transfers \citep{wang2025achieving}, at the group level \citep{wang2026centralized}, or within a single notion across the stages of a randomized allocation \citep{bhaskar2025best}; our paper provides the first systematic study of deterministic and randomized compatibility between item-level relaxations of envy-freeness and equitability, for both goods and chores.

\paragraph{Structured valuation classes.}
Restricted preference domains often admit substantially stronger guarantees.
For binary (dichotomous) marginals, truthful mechanisms yield Lorenz-dominating allocations that are simultaneously \EFX{} and \MNW{} \citep{babaioff2021fair}; for matroid-rank valuations, \MNW{} allocations are \EFone{} and utilitarian-optimal \citep{benabbou2020finding}, and a range of justice criteria can be computed efficiently, also under entitlements \citep{viswanathan2023general,suksompong2023weighted,bu2023efx}.
For bivalued goods, \MNW{} implies \EFX{} \citep{amanatidis2021maximum}, and for \emph{personalized} bivalued goods, \citet{jin2025pareto} characterized Pareto optimality and proved that \EFX{} allocations always exist.
These results sharpen the message of our impossibility construction: on normalized personalized-bivalued instances with strictly positive values, \EFX{} allocations exist \citep{jin2025pareto} and \EQX{} allocations exist via leximin \citep{freeman2019equitable}, yet we show that no allocation may satisfy even \EFone{} and \EQone{} \emph{together}.
Our positive results for normalized binary goods (up to seven agents, and any number of agents under laminar approval sets) identify conditions that restore compatibility. Binary structure alone is insufficient: \Cref{thm:binary-113-counterexample} establishes nonexistence for a normalized binary instance with 113 agents.

\paragraph{Fair division of chores.}
Fairness notions for chores are not mirror images of their goods counterparts, and the chores landscape is generally harder: \EFX{} allocations need not always exist for additive chores \citep{he2026efx}, whereas exact \EFX{} (with \PO{}) is achievable in polynomial time for binary chores \citep{tao2023existence}.
On the equitability side, \citet{freeman2020equitable} and \citet{sun2023equitability} give existence and complexity results for \EQone{} chores.
Our binary-chores algorithm complements \citet{tao2023existence}: instead of pairing \EFX{} with efficiency, we pair it with \EQX{}, again without normalization and for any number of agents; for the broader class of restricted-additive chores we obtain \EFone{}+\EQone{}.

\paragraph{Randomized allocations and best-of-both-worlds fairness.}
The ``best of both worlds'' (BoBW) paradigm asks for randomized allocations that are exactly fair in expectation while every realized allocation retains an approximate guarantee: \citet{aziz2024best} showed that ex-ante \EF{} together with ex-post \EFone{} is always achievable in polynomial time (and that adding efficiency leads to impossibilities).
For binary valuations, \citet{halpern2020fair} show that lexicographically tie-broken \MNW{} is group-strategyproof, \EFone{}, and Pareto optimal, and that fractional \MNW{} can be implemented as a lottery over deterministic \MNW{} allocations.
Subsequent work obtained share-based BoBW guarantees \citep{babaioff2021best}, extensions to subadditive valuations \citep{feldman2023breaking}, and strengthened ex-post guarantees toward \EFX{}: in particular, \citet{bu2024best} achieve ex-ante \EF{} with ex-post \EFX{} for two agents, and with ex-post \EFX{}+fractional-\PO{} for bivalued goods.
Closest to our randomized results is the recent BoBW treatment of \emph{equitability} by \citet{bhaskar2025best}: randomized allocations that are ex-ante \EQ{} and ex-post \EQone{} always exist for two agents, but may fail to exist for three or more agents, where existence is strongly \NPH{} to decide; positive results reappear under binary valuations.
For chores, \citet{sun2025randomized} give a mechanism, \textsc{RandChore}, that is group strategyproof in expectation and simultaneously ex-ante \EF{}, \EQ{} and \textsc{PROP} and ex-post \EFone{}, \EQone{} and \textsc{PROP1}, as well as ex-ante and ex-post \PO{}, whenever costs are \emph{$1$-restricted additive}---exactly the restricted-additive chores of \Cref{sec:prob_chores}.
With the exception of \citet{sun2025randomized}, all of the above relax a single fairness family across the two stages (or combine envy with shares).
Our randomized allocations are instead \emph{cross-notion}: they are simultaneously ex-ante \EF{} \emph{and} ex-ante \EQ{}, while every support allocation satisfies item-level relaxations of the two notions---ex-post \EFone{} for normalized two-agent instances (via the two-agent partitions of \citealp{kyropoulou2020almost}, which are \EFone{} under either assignment of bundles), ex-post \EQone{} for normalized binary goods (strengthened to ex-post \EFone{}+\EQone{} under laminar approval sets), and---by the result of \citet{sun2025randomized} recalled in \Cref{sec:prob_chores}---ex-post \EFone{}+\EQone{} for restricted-additive chores.
We further show that the ex-post components of the last two guarantees are best possible (\Cref{sec:expost-limits}).

 \section{Preliminaries}

\paragraph{Problem instance.}
For any $k \in \mathbb{N}$, let $[k] \coloneqq \{1,2,\dots,k\}$.
An \textit{instance} consists of a set of $n$ agents, $N = [n]$, a set of $m$ items, $M = \{o_1,o_2,\dots,o_m\}$, and a valuation profile $\{v_1,v_2,\dots,v_n\}$. 
Each agent $i \in N$ has an \textit{additive valuation} function $v_i : 2^{M} \to \mathbb{R}$ satisfying $v_i(\emptyset)=0$ and $v_i(S) = \sum_{o \in S} v_i(o)$ for every $S \subseteq M$, where we write $v_i(o)$ instead of $v_i(\{o\})$ for simplicity.
An instance is a \textit{goods-instance} if for every $i\in N$ and $o\in M$ $v_i(o)\ge 0$, and it is a \textit{chores-instance} if $v_i(o)\le 0$ for every $i\in N$ and $o\in M$.
We sometimes use $g$ or $c$ instead of $o$ to refer to a good or a chore, respectively.

\paragraph{Valuations.}
A valuation profile is \textit{normalized} if $v_i(M)=v_j(M)$ for every $i,j\in N$.
It is \textit{identical} if $v_i=v$ for every $i\in N$. 
It is \textit{binary} if for every $i\in N$ and $o\in M$, $v_i(o)\in \{0,1\}$ in case of goods, and $v_i(o)\in \{0,-1\}$ for chores. 
A valuation profile is \textit{bivalued} if, for every agent $i$, there exist $p_i,q_i\in \mathbb{R}$ such that $v_i(o)\in \{p_i,q_i\}$ for every item $o\in M$.
If every $o\in M$ has a base value $v(o)$ and $v_i(o)\in \{0,v(o)\}$ for every agent $i\in N$, then the instance is \emph{restricted additive}.
Both bivalued and restricted additive instances are generalizations of binary valuations.

\paragraph{Allocation.}
An allocation $A=(A_1,\dots,A_n)$ is an $n$-partition of the set of items $M$, i.e., $\bigcup_{i\in N} A_i = M$ and $A_i \cap A_j = \emptyset$ for every distinct $i,j\in N$.
Each bundle $A_i$ is allocated to agent $i$, and we refer to $v_i(A_i)$ as agent $i$'s \emph{(realized) utility}; in chores instances, we call $-v_i(A_i)$ her \emph{disutility} (or cost).
A \emph{randomized allocation} $\bm{A}$ (a.k.a. \emph{lottery}) is a probability distribution over deterministic allocations $A^{(1)},\dots,A^{(k)}$.
Let $p_\ell\in [0,1]$ denote the probability of $A^{(\ell)}$ for every $\ell \in [k]$ and $\sum_{\ell\in [k]} p_\ell = 1.$

\paragraph{Deterministic Fairness.} 
An allocation $A$ is \emph{equitable} (\EQ{}) if $v_i(A_i)=v_j(A_j)$ for every $i,j\in N$.
In a goods instance, an allocation $A$ is \emph{equitable up to one good \EQone{}} if for every $i,j\in N$ such that $A_j\neq\emptyset$ we have $v_i(A_i)\ge v_j(A_j\setminus \{g\})$ for some $g\in A_j$.
It is \emph{equitable up to any good} (\EQX{}) if the inequality holds for every $g\in A_j$.
For chores, $A$ is \EQone{} (respectively, \EQX{}) if, for every $i,j\in N$ with $A_i\neq\emptyset$, the inequality $ v_i(A_i\setminus\{c\})\ge v_j(A_j)$ holds for some (respectively, every) $c\in A_i$.

An allocation $A$ is \emph{envy-free} (\EF{}) if $v_i(A_i)\ge v_i(A_j)$ for all $i,j\in N$.
In a goods instance, $A$ is \emph{envy-free up to one good \EFone{}} if for every $i,j\in N$ such that $A_j\neq\emptyset$, we have $v_i(A_i)\ge v_i(A_j\setminus \{g\})$ for some good $g\in A_j$.
It is \emph{envy-free up to any good} (\EFX{}) if the inequality holds for every $g\in A_j$. For chores, $A$ is \EFone{} (respectively, \EFX{}) if, for every $i,j\in N$ with $A_i\neq\emptyset$, the inequality $ v_i(A_i\setminus\{c\})\ge v_i(A_j)$ holds for some (respectively, every) $c\in A_i$.
Note that under identical valuations the corresponding envy and equitability notions coincide.

\paragraph{Randomized (ex-ante) Fairness.}
A randomized allocation $\bm{A}$ is \emph{ex-ante equitable} if all agents have the same expected utility; i.e., for every $i,j\in N$, 
$
\sum_{\ell\in [k]} p_\ell v_i(A_i^{(\ell)})
=
\sum_{\ell\in [k]} p_\ell v_j(A_j^{(\ell)}).
$
It is \emph{ex-ante envy-free} if no agent envies another in expectation; i.e., for every $i,j\in N$,
$
\sum_{\ell\in [k]} p_\ell v_i(A_i^{(\ell)})
\ge
\sum_{\ell\in [k]} p_\ell v_i(A_j^{(\ell)}).
$
$\bm{A}$ satisfies a fairness notion \emph{ex-post} if every allocation in its support satisfies that notion; e.g., $\bm{A}$ is ex-post \EFone{} if every $A^{(\ell)}$ in the support of $\bm{A}$ is \EFone{}.

\section{Deterministic Guarantees}
\label{sec:deterministic}

\subsection{Fundamental Barriers}

To warm up, we show that \EFone{} and \EQone{} remain incompatible even under normalized and bivalued valuations, and deciding whether an instance admits such a solution is \NPH{} even for two agents.

\begin{proposition}\label{prop:non-exist-3-and-more}
    For $n\ge 3$ agents with normalized bivalued valuations, an \EFone{}+\EQone{} allocation may not exist.
\end{proposition}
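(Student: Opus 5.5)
The plan is to exhibit one explicit three-agent instance and rule out every allocation by a short case analysis on who receives a single ``heavy'' good. For $n>3$ I would pad the instance with extra agents who copy agent $3$'s valuation, or with extra goods, so that the same argument goes through. I have not yet checked that padding step in detail.

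\emph{Construction.} Take $m$ goods: one heavy good $h$ and $m-1$ light goods. Agents $1$ and $2$ have identical valuations: $v_1(h)=v_2(h)=L$ and $v_1(o)=v_2(o)=\varepsilon$ for every light good $o$. Agent $3$ is uniform, with $v_3(o)=1$ for every good. Normalization forces $L=m-(m-1)\varepsilon$, so all three agents value $M$ at $m$, and every valuation is bivalued. I will fix a concrete choice such as $m=9$, $\varepsilon=0.1$; the only requirement is that $\varepsilon$ is small relative to $1$, so that $m\varepsilon$ is much smaller than $m/3-2$. Write $s=|A_3|$, and let $k_1,k_2$ be the numbers of light goods held by agents $1$ and $2$.

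\emph{Case 1: $h$ goes to agent $1$ (agent $2$ is symmetric).} Then $v_2(A_2)=k_2\varepsilon$. EF1 for agent $3$ towards agent $2$ gives $s\ge k_2-1$. EQ1 of agent $2$ towards agent $3$ gives $k_2\varepsilon \ge v_3(A_3)-1 = s-1$. Since $\varepsilon<1$, these together force $s$ and $k_2$ to be tiny, say $s\le 1$ and $k_2\le 2$. Agent $1$ therefore holds roughly $m-3$ light goods besides $h$. EQ1 of agent $2$ towards agent $1$ then requires $k_2\varepsilon\ge v_1(A_1\setminus\{g\})$ for the best choice of $g$. The best choice is to remove $h$, which leaves $(m-1-k_2-s)\varepsilon$, and this exceeds $k_2\varepsilon$ once $m$ is moderately large. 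This is a contradiction.

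\emph{Case 2: $h$ goes to agent $3$.} EF1 of agents $1$ and $2$ towards agent $3$ holds after removing $h$. However, EQ1 requires $k_i\varepsilon\ge s-1$ for $i=1,2$. Since $k_1+k_2+s=m$, some part has size at least $m/3$. EF1 between agents $1$ and $2$ and from agent $3$ keeps the parts balanced up to one good, so $s\ge m/3-2$ roughly. Then $s-1>m\varepsilon\ge k_i\varepsilon$, which is a contradiction.

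\emph{Case 3: $h$ goes to agent $2$.} This is the same as Case 1 with the roles of agents $1$ and $2$ swapped.

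The main obstacle is Case 2, in particular justifying the balancing step. EF1 for agent $3$ gives $s\ge \max(k_1,k_2)-1$, and this alone already yields $s\ge (m-2)/3$. So the real work is choosing numbers that make the inequalities close cleanly. I would fix $m=9$, $\varepsilon=0.1$, and verify each case numerically.
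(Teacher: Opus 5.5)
Your three-agent instance is correct. With $m=9$ and $\varepsilon=0.1$ all three cases close, and up to scaling this is the paper's instance for $n=3$: one heavy good, two agents who value it highly, one flat agent. The gap is the passage to arbitrary $n\ge 3$, which the statement requires and which you leave unchecked. Moreover, the padding you suggest fails as stated.

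If you keep $m=9$ and add copies of agent~$3$, the instance stops being a counterexample once $n\ge 7$. Take five flat agents and give each of them one light good. Give agent~$1$ the good $h$ and one light good, and give agent~$2$ two light goods. The utilities are $8.3$, $0.2$ and $1$. One checks directly that this allocation is \EFone{}+\EQone{}. The tight comparisons are agent~$2$ against $A_1\setminus\{h\}$, which is worth $0.1$ to both agents~$1$ and~$2$, and each flat agent against the two-good bundles, which are worth $1$ to her after one removal. For $n\ge 9$, any allocation into singletons is trivially \EFone{}+\EQone{}. So the number of goods must grow with $n$, and the argument must actually be carried out for the padded family.

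The fix is easy once you drop the case split on who receives $h$.
\begin{enumerate}[(i)]
\item Whichever agent gets $h$, one of agents $1,2$ misses it and has utility at most $(m-1)\varepsilon<1$. This condition, not merely $\varepsilon<1$, is also what your Case~1 actually needs.
\item By \EQone{} against that agent, every flat agent holds at most one good.
\item By \EFone{} from a flat agent, every bundle then holds at most two goods.
\end{enumerate}
With $n-2$ flat agents this allows at most $(n-2)+4=n+2$ goods. Taking $m\ge n+3$ and $\varepsilon<1/(m-1)$ therefore gives a normalized bivalued counterexample for every $n\ge 3$.

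The paper pads in the other direction. It keeps a single flat agent and uses $n-1$ copies of the heavy-good agent, with $m=2n$. \EQone{} against a heavy agent who misses $g_1$ again limits the flat agent to one good. Pigeonhole then gives some heavy agent at least three goods, and the flat agent \EFone{}-envies her.
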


\begin{proof}[Proof Sketch]
The instance consists of $m=2n$ goods and is depicted below: agents $1,\ldots,n-1$ value $g_1$ at $1$ and every other good at $\varepsilon$, while agent $n$ values every good at $\delta:=\frac{1+(2n-1)\varepsilon}{2n}$, where $\varepsilon>0$ is small enough that $(2n-1)\varepsilon<\delta$.
\[
\begin{array}{c|cccc}
 & g_1 & g_2 & \cdots & g_{2n}\\
\hline
\text{agents } 1,\ldots,n-1 & 1 & \varepsilon & \cdots & \varepsilon\\
\text{agent } n & \delta & \delta & \cdots & \delta
\end{array}
\]
Since some agent $i<n$ misses $g_1$ and thus realizes utility below $\delta$, \EQone{} forces agent $n$ to receive at most one good; by pigeonhole, some agent then holds at least three goods, whom agent $n$ \EFone{}-envies.
The full proof, and an analogous construction for chores, appears in \Cref{sec:appendix_challenges}.
Notably, all but one agent have identical valuations: although the envy-based and equitability-based notions coincide under identical valuations, a single deviating agent already destroys simultaneous existence.
\end{proof}

\begin{remark}\label{rem:common-bivalued}
Personalization is essential to the particular construction in \Cref{prop:non-exist-3-and-more}: under a \emph{common} bivalued domain, where $v_i(o)\in\{p,q\}$ with the same $p>q\ge 0$ for all agents, normalization forces every agent to value the same number of items at $p$, so no agent can play the role of the flat agent $n$ above. It is not essential to nonexistence itself. This class contains binary valuations ($p=1,q=0$), and \Cref{thm:binary-113-counterexample} gives a normalized binary counterexample with 113 agents. Thus, even a common binary domain does not guarantee simultaneous \EFone{} and \EQone{} for arbitrary populations.
\end{remark}

Beyond non-existence, deciding whether a given (possibly unnormalized) instance admits an \EFone{}+\EQone{} allocation is intractable.

\begin{theorem}\label{thm:hardness}
For two agents with additive (possibly unnormalized) valuations, deciding whether an allocation that is both \EFone{} and \EQone{} exists is \NPH{}.
\end{theorem}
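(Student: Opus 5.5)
The plan is to reduce from \textsc{Partition}. Given positive integers $a_1,\dots,a_k$ with $\sum_i a_i = 2T$, I will build a two-agent goods instance in which the \EFone{} constraints force one inequality on the split of the ``number'' goods, the \EQone{} constraints force the reverse inequality, and together they force an exact split of value $T$. Since \EFone{} alone is always satisfiable, the gadget has to exploit \emph{unnormalized} valuations. The interpersonal comparisons in \EQone{} are then not scale-invariant, while those in \EFone{} are.

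\emph{Construction.} There will be $k$ number goods $g_1,\dots,g_k$, valued $a_i$ by agent~1 and $c\,a_i$ by agent~2 for a scale factor $c\neq 1$ (say $c=2$). I will add a constant number of dummy goods whose values are chosen to control which good is removed in each relaxed comparison:
\begin{itemize}
\item a few large goods that every feasible allocation must split one per agent, since otherwise some \EFone{} comparison fails by more than any number good can repair;
\item one or two calibration goods valued differently by the two agents, which set the additive offsets.
\end{itemize}
Write $x=\sum_{i\in S}a_i$ for the number-good value landing in agent~1's bundle. I will then write the four constraints as linear inequalities in $x$, with the removed good in each taken to be the relevant large dummy:
\begin{itemize}
\item \EFone{} for agent 1 toward agent 2, and for agent 2 toward agent 1;
\item \EQone{} from 1 to 2, and from 2 to 1.
\end{itemize}
The dummy values will be tuned so that one inequality reads $x\ge T$ and another reads $x\le T$, with the remaining two implied.

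\emph{Correctness.} For the forward direction, a partition $S$ with sum $T$ yields the allocation in which agent~1 gets $S$ plus its prescribed dummies. I will check the four inequalities directly. For the reverse direction, I take any \EFone{}+\EQone{} allocation and argue in three steps:
\begin{itemize}
\item the dummy goods must be allocated in the prescribed pattern, by showing every other pattern violates some constraint by more than $\max_i a_i$ (scaling all dummies by a large factor $L$);
\item in the prescribed pattern, the best good to remove in each comparison is the large dummy;
\item the two pinning inequalities then give $x=T$, hence a partition.
\end{itemize}
Because the $a_i$ are integers, I may also use half-integer offsets to turn strict and non-strict inequalities into the exact equation.

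\emph{Main obstacle.} The hard step is the reverse direction's case analysis on which good is removed and how the dummies are split. \EFone{} only requires removing the \emph{most} valuable good of the other bundle, and the two agents may rank goods differently. I must therefore ensure the designated dummy is the maximizer for both agents, i.e.\ larger than $c\max_i a_i$, while keeping the offsets exact. My first attempt is to give both large dummies the same value $L$ to both agents. All asymmetry would then sit in one small calibration good and in the factor $c$. After that, I will check that the resulting inequalities are $x\ge T$ (from agent~2's \EFone{} in scale~$c$ combined with \EQone{}) and $x\le T$ (from the symmetric pair). If the offsets do not line up, my fallback is to replace the scale factor by an additive gadget: a good valued $2T$ by one agent and $0$ by the other.
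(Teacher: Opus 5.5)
Your proposal has a genuine gap. The whole content of the reduction is the choice of gadget values, which you leave as ``the dummy values will be tuned.'' Worse, the one concrete design you commit to cannot work. Suppose each agent holds exactly one large dummy worth $L$ to both agents, and in each comparison the removed good is the other bundle's large dummy. In all four comparisons the left-hand side is the comparing agent's own utility, which includes her own dummy worth $L$. The right-hand side, after the removal, consists only of number and calibration goods. Once $L$ exceeds the total value of the small goods, every allocation that splits the large dummies is \EFone{}+\EQone{}. That threshold on $L$ is exactly what you need to force the dummy pattern. So the constructed instance is a YES-instance regardless of the \textsc{Partition} input, and no inequality in $x$ ever appears. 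The fallback (a good worth $2T$ to one agent and $0$ to the other) does not help, because the balanced one-per-agent split of the large goods is what causes the vacuity. As long as each agent keeps a large good of her own, making dummies large enough to force their placement and extracting a non-vacuous inequality work against each other.

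The missing idea is that the large good removed in a comparison must be cancelled by an equally large good that remains on the other side. This calls for an \emph{uneven} split of dummies valued by only one agent. The paper adds three dummies worth $D$ (larger than the total number-good value) to agent $a$ and $0$ to agent $b$, and values the number goods identically for both agents.
\begin{itemize}
\item \EQone{} from $b$ toward $a$ forces $a$ to hold at most one dummy; otherwise a dummy survives any single removal and exceeds $b$'s whole utility.
\item \EFone{} from $a$ toward $b$ forces $b$ to hold at most one more dummy than $a$, so the split is exactly one versus two.
\end{itemize}
Write $s_a,s_b$ for the number-good values. \EFone{} from $a$ then reads $D+s_a\ge D+s_b$ after deleting one of $b$'s two dummies. \EQone{} from $b$ reads $s_b\ge s_a$ after deleting $a$'s single dummy, since $b$ values her own two dummies at $0$. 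Hence $s_a=s_b$.

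No scale factor is needed, and your $c\neq 1$ actually hurts. With $c=2$ the same two comparisons give $x\ge T$ and $2(2T-x)\ge x$, i.e.\ $T\le x\le 4T/3$, which no longer pins $x=T$ without further calibration.
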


The proof is deferred to \Cref{sec:appendix_hardness}; it establishes a general case for the up-to-$k$-item versions of both notions.

The normalization assumption is necessary for the simultaneous satisfaction of \EFone{} and \EQone{}. Consider two agents and four goods, where one agent values every good at 1 and the other values every good at 0.
No \EFone{} solution that allocates all the items is \EQone{}.
Furthermore, we show next that even under normalized binary valuations, neither \EFone{} nor \EQone{} can, in general, be strengthened to their respective ``up to any'' counterparts.
This motivates the study of restricted valuations, such as two-agent instances and binary valuations (\Cref{sec:two_agents} and \Cref{sec:binary_deterministic}), and probabilistic compatibility across ex-ante and ex-post worlds (\Cref{sec:exante-expost}).

Towards positive results, we first consider normalized two-agent instances with general additive valuations. 
We then turn to binary valuations, where simultaneous guarantees can be obtained for larger number of agents. 

\subsection{Two-Agent Instances} \label{sec:two_agents}

Notice that the impossibility in \Cref{prop:non-exist-3-and-more} applies to instances with at least three agents. 
For two-agent goods instances, normalization is sufficient to guarantee the simultaneous existence of the stronger notions \EFX{} and \EQX{}.
However, \EFX{}+\EQX{} allocations need not exist for chores instances, even for two agents.

\begin{proposition} \label{prop:two_agents_efx_eqx}
For normalized goods instances with two agents, an \EFX{}+\EQX{} allocation always exists. However, such an allocation may not exist for normalized chores instances with two agents. 
\end{proposition}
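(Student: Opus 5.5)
For the goods part, the plan is a leximin-type argument in the spirit of the leximin++ technique of \citet{plaut2020almost}, applied to the realized utility vector. Let $V:=v_1(M)=v_2(M)$. Among all complete allocations, choose $A=(A_1,A_2)$ that first maximizes $\min\{v_1(A_1),v_2(A_2)\}$. As a secondary criterion it maximizes the size of the bundle of the agent attaining the minimum; after that it breaks ties lexicographically. By \citet{freeman2019equitable}, leximin allocations are \EQX{} when values are strictly positive. I would re-prove this directly, since the same move-one-good argument is needed below and must also handle zero-valued goods. Suppose the minimum agent $i$ has $v_i(A_i)<v_j(A_j\setminus\{g\})$ for some $g\in A_j$. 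Moving $g$ to $i$ weakly raises $i$'s utility and keeps $j$ strictly above the old minimum. This either strictly raises the minimum or keeps it while enlarging $i$'s bundle, contradicting the choice of $A$. The agent $j$ with the larger utility is automatically \EQX{} towards $i$.

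Next I would show envy-freeness for the richer agent. WLOG $u_1:=v_1(A_1)\le u_2:=v_2(A_2)$. If $u_2<V/2$, then normalization gives $v_1(A_2)=V-u_1>V/2$ and $v_2(A_1)=V-u_2>V/2$. Swapping the bundles would then make both utilities exceed $V/2>u_1$, contradicting maximality of the minimum. Hence $u_2\ge V/2\ge v_2(A_1)$, so agent $2$ is \EF{}. If $u_1\ge V/2$, agent $1$ is \EF{} too. Otherwise agent $1$ must be shown \EFX{}. Suppose $v_1(A_1)<v_1(A_2\setminus\{g\})$ for some $g$. Moving $g$ to agent $1$ gives agent $1$ utility at least $v_1(A_2\setminus\{g\})+v_1(g)>u_1$, weakly more when $v_1(g)=0$, while agent $2$ keeps $v_2(A_2\setminus\{g\})\ge u_1$ by the \EQX{} property just established. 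Either the minimum rises, or the minimum agent's bundle grows, again contradicting the choice of $A$.

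The delicate step is this zero-value bookkeeping. I expect it to be the main obstacle: I must make sure the tie-breaking rule is a well-defined strict potential under all single-good moves, including when after the move the identity of the minimum agent switches. To settle it I would either first give every good valued $0$ by one agent to the other agent, or use the potential (minimum utility, size of the minimum agent's bundle) ordered lexicographically and check that each corrective move strictly increases it.

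For chores I need an explicit two-agent normalized instance, which I would find by small enumeration (three or four chores). The idea is that one agent with a single heavy chore is forced by \EFX{} to take either nothing or only that chore, and this conflicts with \EQX{} against an agent with flat costs. My first candidate is costs $(2,1,1)$ for agent $1$ and $(\tfrac43,\tfrac43,\tfrac43)$ for agent $2$; the chore values are the negatives of these costs. I would check all $2^3$ allocations, discarding each by either an \EFX{} or an \EQX{} violation. If this instance fails, I would adjust the weights, e.g.\ $(3,1,1,1)$ against a flat $(\tfrac32,\tfrac32,\tfrac32,\tfrac32)$, until the enumeration rules out every allocation.
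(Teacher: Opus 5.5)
Your \EQX{} argument and your proof that the richer agent is \EF{} are sound and match the paper's leximin++ reasoning. The \EFX{} step for the poorer agent, however, fails.

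After $g$ moves to agent~1, her utility is $u_1+v_1(g)$, not $v_1(A_2\setminus\{g\})+v_1(g)$. More seriously, the \EQX{} property you just proved says $u_1\ge v_2(A_2\setminus\{g\})$, which is the reverse of the inequality you use. So agent~2 can drop strictly below the old minimum, and neither tie-breaking criterion improves. For example, take goods $a,g,h$ with $v_1=(0.4,0.1,0.5)$, $v_2=(0.4,0.3,0.3)$ and $A=(\{a\},\{g,h\})$. Then $A$ is \EQX{}, agent~2 is \EF{}, $u_1=0.4<1/2$, and agent~1 \EFX{}-envies via $g$. Yet moving $g$ gives utilities $(0.5,0.3)$. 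The difficulty is not zero-value bookkeeping, and pre-assigning zero-valued goods does not help.

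The missing idea is the cut-and-choose step the paper uses. Set $X=A_1\cup\{g\}$ and $Y=A_2\setminus\{g\}$, and give agent~2 her preferred part; by normalization it is worth at least $V/2>u_1$ to her. Agent~1 gets the other part. Either she gets $Y$ with $v_1(Y)>u_1$, or she gets $X$, which weakly raises her utility and, in a tie, strictly enlarges the bundle of the unique minimum agent. In both cases the pair (minimum utility, size of the minimum agent's bundle) strictly improves, which is the contradiction you need. In the example, agent~2 takes $\{a,g\}$ and agent~1 gets $\{h\}$, raising the minimum to $0.5$.

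The chores half is not yet a proof: no instance is verified, and both of your candidates admit \EFX{}+\EQX{} allocations. Write $d_i=-v_i$ for costs.
For $d_1=(2,1,1)$ and $d_2=(4/3,4/3,4/3)$, take $A_1=\{c_1\}$, $A_2=\{c_2,c_3\}$. Agent~1's singleton is trivially fine, and agent~2's cost after removing either chore is $4/3\le\min\{d_2(A_1),d_1(A_1)\}=4/3$.
For $d_1=(3,1,1,1)$ against flat $3/2$, take $A_1=\{c_2,c_3\}$, $A_2=\{c_1,c_4\}$. After one removal, agent~1 pays $1\le\min\{d_1(A_2),d_2(A_2)\}=3$ and agent~2 pays $3/2\le\min\{d_2(A_1),d_1(A_1)\}=2$.

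Pairing a flat agent with a heavy-chore agent leaves too much slack. What defeats both notions is anti-correlated costs, where each agent's cheap chore is expensive for the other. The paper uses $v_1=(0,-1,-2)$ and $v_2=(-2,-1,0)$. The symmetry that swaps the agents and $c_1\leftrightarrow c_3$ reduces the check to $|A_1|\le 1$. The cases $A_1\in\{\emptyset,\{c_1\},\{c_2\}\}$ violate \EQX{} once the zero-valued $c_3$ is removed from $A_2$. The case $A_1=\{c_3\}$ violates \EFX{} for agent~2 after removing $c_2$. Zero values are not essential to the phenomenon: for example, costs $(0.5,0.05,0.45)$ versus $(0.05,0.5,0.45)$ also rule out all eight allocations.
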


The existence guarantee follows via the leximin++ allocation for the instance that maximizes the value of the worst-off agent, and subject to that maximizes the number of items in the bundle of this agent. 
The complete proof as well as the counterexample for chores are presented in \Cref{sec:appendix_two_agent_deterministic}.

Although \EFX{}+\EQX{} may fail for chores, both goods and chores admit the weaker combination \EFone{}+\EQone{} in polynomial time.
For goods, the algorithm maintains bundles $A_1,A_2$ and realized utilities $U_i=v_i(A_i)$. 
At each iteration, an agent $i$ with lower realized utility receives a remaining good $g$ that maximizes $v_i(g)-v_{j}(g)$.

\begin{algorithm}[tb]
\caption{\textsc{Greedy-Balance} for two-agent goods}
\label{alg:two_agent_goods}
\textbf{Input}: A normalized two-agent goods instance\\
\textbf{Output}: An allocation $(A_1,A_2)$
\begin{algorithmic}[1]
\STATE Initialize $A_1,A_2\leftarrow\emptyset$,
$U_1,U_2\leftarrow 0$, and $R\leftarrow M$.
\WHILE{$R\neq\emptyset$}
    \IF{$U_1\leq U_2$}
        \STATE Choose
        $g\in\arg\max_{h\in R}\{v_1(h)-v_2(h)\}$.
        \STATE $A_1\leftarrow A_1\cup\{g\}$ and
        $U_1\leftarrow U_1+v_1(g)$.
    \ELSE
        \STATE Choose
        $g\in\arg\max_{h\in R}\{v_2(h)-v_1(h)\}$.
        \STATE $A_2\leftarrow A_2\cup\{g\}$ and
        $U_2\leftarrow U_2+v_2(g)$.
    \ENDIF
    \STATE $R\leftarrow R\setminus\{g\}$.
\ENDWHILE
\STATE \textbf{return} $(A_1,A_2)$.
\end{algorithmic}
\end{algorithm}

The following invariant connects the equitability and envy comparisons.

\begin{lemma}\label{lem:cross_bundle_dominance}
At every stage of \Cref{alg:two_agent_goods}, $v_1(A_1)\geq v_2(A_1) \text{ and } v_2(A_2)\geq v_1(A_2).$
\end{lemma}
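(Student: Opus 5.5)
The plan is to work with the difference function $d(h) \coloneqq v_1(h)-v_2(h)$ for each good $h\in M$ and to restate both inequalities in terms of it. The first inequality $v_1(A_1)\ge v_2(A_1)$ is equivalent to $d(A_1)\ge 0$, and the second inequality $v_2(A_2)\ge v_1(A_2)$ is equivalent to $d(A_2)\le 0$, where $d(S)=\sum_{h\in S}d(h)$. Normalization gives $v_1(M)=v_2(M)$, hence $d(M)=0$. This zero-sum property is the only place normalization enters.

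\textbf{Step 1 (ordering).} The first step is an ordering invariant. At every stage of \Cref{alg:two_agent_goods}, with $R$ the remaining goods, I claim
\[
d(a)\ge d(b)\ \text{ for all } a\in A_1,\ b\in A_2\cup R, \qquad d(b')\le d(c)\ \text{ for all } b'\in A_2,\ c\in A_1\cup R .
\]
I will prove this by induction over iterations. When agent $1$ takes $g$, it maximizes $d$ over $R$, so $d(g)$ is at least the value of every good still in $R$. What needs care is the comparison with goods already in $A_2$. Any $b'\in A_2$ was chosen at an earlier time when $g$ was still in $R$. Since agent $2$ picks a minimizer of $d$ (maximizing $v_2-v_1=-d$), we get $d(b')\le d(g)$. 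The case where agent $2$ picks is symmetric. This cross-time comparison, including the handling of ties in the $\arg\max$, is the only delicate point; the inequality is non-strict, so ties cause no problem.

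\textbf{Step 2 (the sum argument for $A_1$).} Suppose for contradiction that $d(A_1)<0$. Then $A_1\neq\emptyset$, and its minimum element $a^\ast$ satisfies $d(a^\ast)<0$. By Step 1, every good in $M\setminus A_1=A_2\cup R$ has $d$-value at most $d(a^\ast)<0$, so $d(M\setminus A_1)\le 0$. Adding the two parts gives $d(M)=d(A_1)+d(M\setminus A_1)<0$, which contradicts $d(M)=0$. Hence $d(A_1)\ge 0$.

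\textbf{Step 3 (the symmetric argument for $A_2$).} Now suppose $d(A_2)>0$. Then some $b^\ast\in A_2$ has $d(b^\ast)>0$. By Step 1, every good outside $A_2$ has $d$-value at least $d(b^\ast)>0$, so $d(M)>0$, again a contradiction. Hence $d(A_2)\le 0$.

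Together, Steps 2 and 3 give both inequalities of the lemma. They hold at every stage, including the initial stage where both bundles are empty.
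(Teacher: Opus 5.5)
Your proof is correct and rests on the same idea as the paper's. The paper argues by contradiction from one offending pair $g\in A_1$, $h\notin A_1$ with $\Delta(g)<0<\Delta(h)$, and excludes it with the same cross-time argument: agent~1 takes $d$-maximizers and agent~2 takes $d$-minimizers, so $h$ would have been in $A_2$ and picked while $g$ was still available. You instead package that comparison as a full ordering invariant, then conclude by the sum argument with $d(M)=0$; this is a slightly stronger statement that treats ties and both inequalities uniformly.
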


\begin{proof}
We prove the first inequality; the second is symmetric. 
Define $\Delta(g)=v_1(g)-v_2(g)$. 
Suppose that, at some stage, $\Delta(A_1)<0$. 
Since the valuations are normalized, $\Delta(M)=0$, and hence there exist $g\in A_1$ and $h\notin A_1$ such that $\Delta(g)<0<\Delta(h).$
Consider the iteration in which $g$ was assigned to agent~$1$. 
The good $h$ must already have been allocated; otherwise, the algorithm would have selected $h$ instead of $g$. 
Thus, $h$ was assigned earlier to agent~2. 
At that earlier iteration, however, $g$ was still available and satisfied $\Delta(g)<\Delta(h)$. 
Since agent~$2$ chooses a remaining good minimizing $\Delta(\cdot)$, the algorithm should have selected $g$ rather than $h$, a contradiction.
\end{proof}

\begin{theorem}
\label{thm:two_agents_ef1_eq1}
For normalized goods instances with two agents, \Cref{alg:two_agent_goods} computes an \EFone{}+\EQone{} allocation in polynomial time.
\end{theorem}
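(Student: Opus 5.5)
The plan is to prove \EQone{} first, by an invariant on the realized utilities, and then obtain \EFone{} from it using \Cref{lem:cross_bundle_dominance}. Polynomial running time is immediate: there are $m$ iterations, and each one computes an argmax over at most $m$ goods.

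\emph{Step 1 (\EQone{}).} We maintain the following invariant: at every stage, for each agent $i$ with $A_i\neq\emptyset$, the last good $g_i$ added to $A_i$ satisfies $v_i(A_i\setminus\{g_i\})\le v_j(A_j)$ for $j\neq i$. The argument goes as follows.
\begin{itemize}
\item When $g_i$ was added, $i$ was the agent with lower utility (ties go to agent 1), so at that moment $v_i(A_i\setminus\{g_i\})\le U_j$.
\item Afterwards $A_i$ does not change until $i$ receives another good, while $U_j$ only grows because goods are nonnegative.
\item If $i$ receives another good later, the new last good plays the same role, and the same reasoning applies.
\end{itemize}
Consequently, in the final allocation, removing the last good of $A_i$ brings $v_i(A_i)$ down to at most $v_j(A_j)$. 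This is exactly \EQone{} with respect to $i$'s bundle, and it holds for both ordered pairs $(i,j)$.

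\emph{Step 2 (\EFone{}).} For agent $1$ comparing against $A_2$, we chain the inequalities
\[
v_1(A_2\setminus\{g_2\})\le v_2(A_2\setminus\{g_2\})\le v_1(A_1).
\]
\begin{itemize}
\item The second inequality is Step 1 applied to $i=2$.
\item The first inequality should follow from \Cref{lem:cross_bundle_dominance} applied at the stage just before $g_2$ was assigned. At that stage agent 2's bundle was exactly $A_2\setminus\{g_2\}$, and the lemma holds at every stage, so $v_2(A_2\setminus\{g_2\})\ge v_1(A_2\setminus\{g_2\})$.
\end{itemize}
The symmetric chain handles agent 2 comparing against $A_1$.

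\emph{Main obstacle.} The delicate point is the validity of \Cref{lem:cross_bundle_dominance} at intermediate stages. Its proof uses $\Delta(M)=0$ to find some $h\notin A_1$ with $\Delta(h)>0$. At an intermediate stage, $h$ might still be unallocated rather than in $A_2$. If so, the algorithm would have preferred $h$ to $g$ when agent 1 picked, which already gives the contradiction. The lemma's proof covers this case, since it says $h$ ``must already have been allocated''. So I will invoke the lemma as stated.

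If I want to avoid relying on intermediate stages, a fallback is to apply the lemma only to the final allocation. Since the final $A_2\setminus\{g_2\}$ is not a final bundle, that route would instead require arguing $\Delta(g_2)\le 0$ or a similar fact. I would first try the stagewise invocation, because the lemma is explicitly stated for every stage.
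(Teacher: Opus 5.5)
Your proposal is correct and matches the paper's proof. \EQone{} comes from the last good given to an agent: she was the weakly poorer agent at that moment, and the other agent's utility only grew afterwards. \EFone{} then follows by chaining this with \Cref{lem:cross_bundle_dominance}, applied at the stage just before that last good was assigned.

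The only difference is cosmetic: you state the \EQone{} invariant for both agents, rather than assuming $v_1(A_1)\le v_2(A_2)$ without loss of generality. Your check that the lemma's proof also covers intermediate stages is sound, since an unallocated $h$ would have been preferred to $g$.
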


\begin{proof}
Let $A=(A_1,A_2)$ be the returned allocation. 
Without loss of generality, assume $v_1(A_1)\leq v_2(A_2).$
Let $g$ be the last good assigned to agent~$2$. 
Immediately before receiving $g$, agent~$2$ had lower realized utility than agent~$1$.
Moreover, every good allocated after $g$ was assigned to agent~$1$.
By monotonicity, $ v_1(A_1)\geq v_2(A_2\setminus\{g\}),$ establishing \EQone{}.

It remains to prove \EFone{}. 
Assume agent~$1$ envies agent~$2$. 
Since $A_2\setminus\{g\}$ was agent~2's bundle at an intermediate stage of the algorithm, we apply \Cref{lem:cross_bundle_dominance} at that stage. 
Combining with the \EQone{} guarantee above gives
\[v_1(A_1) \geq v_2(A_2\setminus\{g\}) \geq v_1(A_2\setminus\{g\}),\]
which is precisely the \EFone{} condition. 
The case in which agent~2 envies agent~1 is symmetric. 
The algorithm performs $m$ iterations and can be implemented in polynomial time.
\end{proof}

The same balancing idea, with the inequalities and selection rule adjusted for disutilities, yields the analogous result for chores; see \Cref{sec:appendix_two_agent_deterministic} for a complete proof.

\begin{corollary}
\label{cor:two_agents_chores}
For normalized chores instances with two agents, \Cref{alg:utility-balance-chores} computes an \EFone{}+\EQone{} allocation in polynomial time.
\end{corollary}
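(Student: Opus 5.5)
We mirror the proof of \Cref{thm:two_agents_ef1_eq1}, with disutilities in place of utilities. Write $d_i(c)=-v_i(c)\ge 0$ for the cost of chore $c$ to agent $i$, and let $D_i=d_i(A_i)$ be agent $i$'s realized disutility. Normalization gives $d_1(M)=d_2(M)$.

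\Cref{alg:utility-balance-chores} maintains $A_1,A_2$ and the remaining set $R$. At each step it assigns a chore to the agent with currently \emph{lower} disutility; ties go to agent~1. Agent~1 picks $c\in\arg\min_{h\in R}\{d_1(h)-d_2(h)\}$, i.e., the chore that is relatively cheapest for her. Agent~2 symmetrically picks $c\in\arg\min_{h\in R}\{d_2(h)-d_1(h)\}$. The loop runs $m$ times, so polynomial running time is immediate.

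\emph{Step 1: the analogue of \Cref{lem:cross_bundle_dominance}.} We show that at every stage $d_1(A_1)\le d_2(A_1)$ and $d_2(A_2)\le d_1(A_2)$. Let $\Delta(c)=d_1(c)-d_2(c)$, so $\Delta(M)=0$. Suppose $\Delta(A_1)>0$. Then there exist $c\in A_1$ and $h\notin A_1$ with $\Delta(c)>0>\Delta(h)$. When $c$ was given to agent~1, $h$ cannot have been available, since agent~1 minimizes $\Delta$ and $\Delta(h)<\Delta(c)$. So $h$ was given earlier to agent~2 while $c$ was still available. But agent~2 maximizes $\Delta$, and $\Delta(c)>\Delta(h)$, so she would have taken $c$ instead. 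This is a contradiction. The second inequality follows by symmetry. This step is the main thing to verify, because the exchange argument must be run with the selection rules reversed relative to goods. The argument itself goes through unchanged.

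\emph{Step 2: \EQone{}.} Assume without loss of generality that $D_1\ge D_2$ at the end, and let $c$ be the last chore assigned to agent~1. Just before receiving $c$, agent~1 had disutility at most agent~2's disutility at that time. After that point agent~2's disutility can only grow. Hence $d_1(A_1\setminus\{c\})\le d_2(A_2)$, which is exactly the \EQone{} condition $v_1(A_1\setminus\{c\})\ge v_2(A_2)$. Agent~2, the less-burdened agent, satisfies \EQone{} trivially: $v_2(A_2\setminus\{c'\})\ge v_2(A_2)\ge v_1(A_1)$.

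\emph{Step 3: \EFone{}.} Agent~1 compares herself with agent~2. We chain
\[
d_1(A_1\setminus\{c\}) \le d_1(A_1) \le d_2(A_1)\quad\text{and}\quad d_1(A_1\setminus\{c\})\le d_2(A_2)\le d_1(A_2),
\]
where the second chain uses Step~2 and then Step~1 for agent~2's bundle. This gives $v_1(A_1\setminus\{c\})\ge v_1(A_2)$.

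For agent~2, we have $D_2\le D_1$ and, by Step~1 applied to the final bundles, $d_2(A_2)\le d_1(A_2)$. We want $d_2(A_2)\le d_2(A_1)$ up to one chore. By Step~1 for agent~1's bundle, $d_2(A_1)\ge d_1(A_1)=D_1\ge D_2$. So agent~2 is in fact envy-free, and \EFone{} holds. If ties make the roles ambiguous, the same argument applies with the roles swapped.
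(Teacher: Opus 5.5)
Your proposal is correct and follows essentially the same route as the paper: the same cross-bundle dominance exchange argument, the last-chore argument for \EQone{}, and the chain of that \EQone{} bound with the dominance lemma for \EFone{}. Two small remarks: the case $A_1=\emptyset$ (where the \EQone{} and \EFone{} conditions for agent~1 are vacuous) deserves an explicit mention, which the paper includes, and the first displayed chain $d_1(A_1\setminus\{c\})\le d_1(A_1)\le d_2(A_1)$ in Step~3 plays no role in the argument and can be dropped.
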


\subsection{Binary Valuations}
\label{sec:binary_deterministic}

The incompatibility result of \Cref{prop:non-exist-3-and-more} motivates studying the more structured setting of binary valuations. Binary valuations are both practically relevant and theoretically well studied, forming the basis of many recent advances in fair division \citep{halpern2020fair,barman2018greedy,babaioff2021fair,benabbou2020finding,bu2023efx}. For normalized binary goods, we establish a polynomial-time guarantee for up to seven agents and a nonexistence result for 113 agents. For binary chores, compatibility holds for every population size even without normalization.

\subsubsection{Goods}

We focus on simultaneously achieving
\EFone{}+\EQone{}, and design a polynomial-time algorithm when there are at most seven agents.
We then show that such a guarantee cannot extend to all normalized binary goods instances: \Cref{thm:binary-113-counterexample} supplies a 113-agent counterexample. The unresolved population sizes are $8,\ldots,112$. In contrast, \Cref{thm:binary_chores_efx_eqx} gives the stronger \EFX{}+\EQX{} guarantee for binary chores for any number of agents and without normalization.

\paragraph{Algorithm description.}
The algorithm proceeds in three phases: a flow computation first constructs an egalitarian-optimal partial allocation that raises every agent to the largest utility level $t$ attainable by all agents simultaneously; utility-preserving swaps then bring the residual goods into a structured form; and, finally, each remaining good is “scattered” to an agent who does not value it.

A detailed description: Our algorithm starts by removing all goods valued by no agent and storing them in $J$.  Let $t$ be the largest integer for which every agent can simultaneously receive $t$ valued goods.  Among all partial allocations $P=(P_1,\ldots,P_n)$ satisfying
\[
    t\le |P_i|\le t+1 \qquad\text{for every }i\in N,
\]
where every good in $P_i$ is valued by $i$, choose one maximizing the number of assigned goods (equivalently, the number of agents receiving $t+1$ goods).
Both $t$ and $P$ can be found by integral network flow.

Let $R=M\setminus\bigcup_{i\in N}P_i$.  In the residual alternating graph, direct an edge from an agent $i$ to each good in $P_i$, and direct an edge from an unowned good or a good owned by another agent to each agent who values it.
Let $S$ be the set of agents reachable from $R$.  For
$
    Z:=R\cup\bigcup_{i\in S}P_i
$
and each $g\in Z$, define the type of $g$ by
\[
    \tau(g):=\{i\in S:v_i(g)=1\}.
\]
As shown below, $\tau(g)$ is a nonempty subset of $S$.
We call $g$ proper-type if $\tau(g)\subsetneq S$, and full-type if $\tau(g)=S$.

\begin{algorithm}[tp]
\caption{\textsc{Level-and-Scatter} for at most seven agents}
\label{alg:level-scatter}
\begin{algorithmic}[1]
\small
\REQUIRE A normalized binary goods instance with $n\le 7$ agents
\ENSURE An allocation $(A_1,\ldots,A_n)$

\STATE Remove from $M$ all goods that every agent values at $0$ and store them in $J$.
\STATE Compute the maximum feasible egalitarian level $t$ and set $B\gets t+1$.
\STATE Compute a maximum partial allocation $P=(P_1,\ldots,P_n)$ such that every good in $P_i$ is valued by $i$ and $t\le |P_i|\le B$ for every $i$.
\STATE Let $R\gets M\setminus\bigcup_{i\in N}P_i$.
\STATE Construct the residual alternating graph and let $S$ be the set of agents reachable from $R$.

\WHILE{there exist $g\in R$, $h\in S$, and $g'\in P_h$ such that $\tau(g)=S$ and $\tau(g')\neq S$}
    \STATE $P_h\gets (P_h\setminus\{g'\})\cup\{g\}$.
    \STATE $R\gets (R\setminus\{g\})\cup\{g'\}$.
\ENDWHILE

\STATE Initialize $A_i\gets P_i$ for every $i\in N$.
\FOR{each $g\in R$}
    \STATE Set $T\gets\tau(g)$.
    \STATE Choose $j\in N\setminus T$ such that
    $v_i(A_j\cup\{g\})\le B$ for every $i\in T$.
    \STATE $A_j\gets A_j\cup\{g\}$.
\ENDFOR
\STATE Distribute the goods in $J$ arbitrarily.
\RETURN $(A_1,\ldots,A_n)$.
\end{algorithmic}
\end{algorithm}

The difficulty is that a residual good cannot be assigned to an agent who values it without increasing that agent's realized utility from $t+1$ to $t+2$, thereby violating \EQone{}. We therefore assign each residual good $g$ to an agent outside its type $\tau(g)$, who values it at zero. Such an assignment must also ensure that no agent in $\tau(g)$ values the recipient's resulting bundle above $t+1$, as otherwise \EFone{} may fail. Lemmas~\ref{lem:closure}-\ref{lem:blocking-seven} reduce the absence of a feasible recipient to a counting obstruction. Lemma~\ref{lem:small-population-seven} rules out this obstruction for $n\leq 7$, separately for residual goods of proper and full type.

We next establish the structural and counting properties needed for this argument. The proofs of the following lemmas are deferred to the end of this subsection.

\begin{lemma}
\label{lem:closure}
Let $P$, $R$, and $S$ be as defined immediately before the swap phase.  Then:
\begin{enumerate}
    \item every agent in $S$ receives exactly $B=t+1$ goods in $P$;
    \item every good in $Z=R\cup\bigcup_{i\in S}P_i$ is valued only by agents in $S$; and
    \item every good valued by an agent outside $S$ is assigned to an agent outside $S$.  Consequently, for $k:= v_i(M)$, we have
    \[
        k\le \sum_{j\notin S}|P_j|.
    \]
\end{enumerate}
\end{lemma}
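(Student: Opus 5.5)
The plan is to argue directly from the maximality of $P$ and from the definition of reachability in the residual alternating graph. Recall that edges go from an agent $i$ to the goods in $P_i$, and from a good (unowned, or owned by someone else) to every agent who values it. A good is in $R$ because it is unassigned. $S$ is the set of agents reachable by a directed path starting at some good of $R$. The three items are then proved in order, each by an augmenting-path argument.

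For item~1, take $i\in S$ and suppose $|P_i|=t<B$. There is a path $g_0\to i_1\to g_1\to i_2\to\cdots\to i_r=i$ with $g_0\in R$, each $g_\ell\in P_{i_\ell}$, and $v_{i_{\ell+1}}(g_\ell)=1$. Shift along this path: give $g_0$ to $i_1$, and give $g_\ell$ to $i_{\ell+1}$ while removing it from $i_\ell$, for $\ell=1,\dots,r-1$. Every intermediate agent keeps the same number of goods, and every good stays with an agent who values it. Agent $i$ gains one good and so reaches $t+1\le B$. The result is a feasible partial allocation with one more assigned good, contradicting maximality. If the path contains repeated agents, shortcut it to a simple path first. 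Hence $|P_i|=t+1$ for every $i\in S$.

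For item~2, take $g\in Z$. If $g\in R$, every agent who values $g$ has an edge from $g$, hence is reachable, hence lies in $S$. If $g\in P_j$ with $j\in S$, then $j\to g$ is an edge, so $g$ is reachable. Any agent $i\ne j$ who values $g$ has the edge $g\to i$ and so lies in $S$. The owner $j$ is itself in $S$. So $\tau(g)$ contains every agent who values $g$. For nonemptiness, goods of $P_j$ are valued by $j\in S$. Goods of $R$ are valued by someone, since goods valued by no one were moved to $J$, and by the above that someone is in $S$.

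For item~3, let $g$ be valued by some $i\notin S$. Then $g\notin Z$ by item~2, so $g$ is assigned and its owner is outside $S$. Now fix $i\notin S$. All $k=v_i(M)$ goods valued by $i$ lie in $\bigcup_{j\notin S}P_j$, which gives $k\le\sum_{j\notin S}|P_j|$. Here $k$ is the common value, by normalization, and the goods in $J$ are not valued by $i$. The bound is stated with an $i$ outside $S$, so $S\ne N$ is needed. If $S=N$, the inequality should be read as vacuous or handled separately; I expect the paper intends $i\notin S$.

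The main obstacle is item~1: the augmenting path must be made simple and the shift shown to preserve the lower bound $t$ and the valued-goods constraint. Maximality of the number of assigned goods then finishes it.
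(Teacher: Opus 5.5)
Your proposal is correct and follows the paper's own argument: an augmenting shift along an alternating path for item~1, closure of the reachable set under valuers for item~2, and counting the approved goods of an agent outside $S$ inside the bundles outside $S$ for item~3. The case $S=N$ that you flag cannot occur. In that case the inequality would read $k\le 0$, so it would not hold vacuously. Instead, by item~1 and the maximality of $t$, some agent holds only $t$ goods and hence lies outside $S$; this is exactly the first step of Lemma~\ref{lem:outsideS}.
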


\begin{proof}
If some $i\in S$ had $|P_i|=t$, an alternating path from a residual good to $i$ could be augmented.
All internal agents on the path would keep the same number of valued goods, while $i$ would gain one.
The resulting partial allocation would still give every agent between $t$ and $B$ valued goods and would assign one additional good, contradicting the maximality of $P$.
This proves the first claim.

The goods reachable in the residual alternating graph are exactly $R\cup\bigcup_{i\in S}P_i$.
From every reachable good there is an outgoing edge to each agent who values it (other than its owner, who is already in $S$ when the good is assigned inside $S$); hence every such valuer is reachable and belongs to $S$.
This proves the second claim.
For the third, a good valued by an agent outside $S$ can therefore belong neither to $R$ nor to a bundle $P_i$ with $i\in S$.
All $k$ goods valued by that agent are consequently contained in the bundles outside $S$, whose total cardinality is $\sum_{j\notin S}|P_j|$.
\end{proof}

\begin{lemma}
\label{lem:outsideS}
If $R\neq\emptyset$, then $|N\setminus S|\ge 2$.
\end{lemma}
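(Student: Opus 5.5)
The plan is to combine \Cref{lem:closure} with normalization and the maximality of $t$. I will get a lower bound on the common value $k=v_i(M)$ from an agent in $S$, and an upper bound on $k$ from the agents outside $S$. The two bounds will contradict each other whenever fewer than two agents lie outside $S$.

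\emph{Lower bound on $k$.} Fix a residual good $g\in R$. Since all goods valued by nobody were moved to $J$, $g$ is valued by some agent $i$. By \Cref{lem:closure}(2), $i\in S$, because $g\in Z$. By \Cref{lem:closure}(1), $P_i$ consists of exactly $t+1$ goods, each valued by $i$. Moreover $g\notin P_i$. Hence $i$ values at least $t+2$ goods. By normalization, every agent then has $k=v_i(M)\ge t+2$.

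\emph{Case $S=N$.} By \Cref{lem:closure}(1), every agent receives $t+1$ valued goods in $P$. So level $t+1$ is simultaneously attainable by all agents, which contradicts the choice of $t$ as the largest such level. Hence $N\setminus S\neq\emptyset$.

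\emph{Case $N\setminus S=\{a\}$.} \Cref{lem:closure}(3) gives $k\le \sum_{j\notin S}|P_j| = |P_a|$. Every bundle of $P$ satisfies $|P_a|\le B=t+1$, so $k\le t+1$. This contradicts $k\ge t+2$.

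Both cases are impossible, so $|N\setminus S|\ge 2$. The only step needing care is the bound $k\ge t+2$. It relies on two facts: the valuer of the residual good lies in $S$ by \Cref{lem:closure}(2), and that valuer already holds $t+1$ valued goods, none of which is $g$. Everything else follows directly from \Cref{lem:closure} and the maximality of $t$.
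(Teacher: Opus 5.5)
Your proof is correct and follows essentially the paper's argument: it gets nonemptiness of $N\setminus S$ from the maximality of $t$, then uses Lemma~\ref{lem:closure}(1) and~(3) to rule out a single outside agent. The only difference is where the one unit of slack comes from. The paper uses maximality of $t$ again to note that the lone outsider has $|P_j|=t$, giving $k\le t<t+1\le k$; you count the residual good itself to get $k\ge t+2>t+1\ge k$, and both routes work.
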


\begin{proof}
Maximality of the egalitarian level $t$ implies that some agent receives only $t$ goods in $P$.
By Lemma~\ref{lem:closure}(1), this agent lies outside $S$, so $N\setminus S$ is nonempty.
If $N\setminus S=\{j\}$, then $|P_j|=t$ and Lemma~\ref{lem:closure}(3) gives $k\le t$.
On the other hand, $R\neq\emptyset$ implies $S\neq\emptyset$, and every agent in $S$ receives $t+1$ valued goods, so $k\ge t+1$, a contradiction.
\end{proof}

Assume henceforth that $R\neq\emptyset$, and write
\[
    s:=|S|,\qquad r:=|N\setminus S|,\qquad B:=t+1.
\]
By Lemma~\ref{lem:outsideS}, $r\ge2$.  Moreover, at least one agent outside $S$ receives $B-1$ goods, and every other outside agent receives at most $B$.
Lemma~\ref{lem:closure}(3) therefore yields the mass bounds
\begin{equation}
    k\le \sum_{j\notin S}|P_j|\le rB-1,
    \label{eq:mass-bound-seven}
\end{equation}
and, for every $i\in S$,
\begin{equation}
    k-B\le (r-1)B-1.
    \label{eq:outside-row-bound-seven}
\end{equation}

\begin{lemma}
\label{lem:swap-invariants-seven}
Each iteration of the swap phase preserves every agent's utility, the maximality of $P$, the set $Z$, and the property that every good in $Z$ is valued only by agents in $S$.  The swap phase terminates, and at termination either no good $g\in R$ has $\tau(g)=S$, or every good in $\bigcup_{i\in S}P_i$ has type $S$.
\end{lemma}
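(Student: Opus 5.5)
We argue the invariants for a single swap and then obtain termination from a potential function. Consider one iteration. It picks $g\in R$ with $\tau(g)=S$, an agent $h\in S$, and $g'\in P_h$ with $\tau(g')\neq S$, and replaces $g'$ by $g$ in $P_h$.

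\emph{Utility.} Since $h\in S=\tau(g)$, agent $h$ values $g$. Every good in $P_h$ is valued by $h$, so $h$ also values $g'$. Hence $P_h$ remains a set of $|P_h|=B$ goods all valued by $h$. Every other bundle is untouched, so every agent's utility, which equals the bundle size on $P$, is preserved.

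\emph{Maximality and $Z$.} The number of assigned goods is unchanged and the size constraints $t\le|P_i|\le B$ still hold, so $P$ remains a maximum partial allocation. The goods involved are $g\in R\subseteq Z$ and $g'\in P_h$ with $h\in S$, so $g'\in Z$. The swap only moves these two goods between $R$ and $\bigcup_{i\in S}P_i$, so $Z=R\cup\bigcup_{i\in S}P_i$ is unchanged as a set. The property that every good in $Z$ is valued only by agents in $S$ is a statement about goods and valuations alone, so it persists because $Z$ is fixed. The same fact keeps $\tau$ well defined and nonempty.

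\emph{The set $S$.} Here I would treat $S$ and $\tau$ as frozen at their values computed before the swap phase, which is how the while loop uses them. If one wants the reachable set to be literally invariant, the argument goes as follows. Every good in $Z$ still has edges only to agents in $S$. Every agent in $S$ still points to its bundle, which lies inside $Z$. Hence the set reachable from the new $R$ stays inside $S$. The reverse inclusion is the delicate part: I would show that each $i\in S$ is still reachable, using $g'\in R$ together with $g\in P_h$ where $\tau(g)=S$. From $g'$ one reaches some agent $a\in\tau(g')\subseteq S$, and since $a$ values the full-type good $g$ and... one would need to reach $h$ and then $g$, after which $g$ points to every agent of $S$. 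I expect this reachability bookkeeping to be the main obstacle, and I would fall back on the frozen-$S$ reading if it fails.

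\emph{Termination.} Each iteration strictly increases the number of full-type goods in $\bigcup_{i\in S}P_i$. That number is bounded by $|Z|$, so there are at most $|Z|$ iterations.

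\emph{State at termination.} At termination the loop guard is false: there is no triple with $g\in R$ of full type, $h\in S$, and $g'\in P_h$ of proper type. So either $R$ contains no full-type good, or every good in every $P_h$ with $h\in S$ is full type. This is exactly the stated dichotomy.
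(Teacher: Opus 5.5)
Your proof is correct and follows the paper's argument step for step: utility is preserved because $h$ values both $g$ and $g'$, the number of assigned goods is unchanged, the exchange stays inside $Z$, and the termination dichotomy is the negated loop guard. Your termination potential (full-type goods in $\bigcup_{i\in S}P_i$ increasing) is the complement, within the fixed set $Z$, of the paper's count of full-type goods in $R$. Like you, the paper treats $S$ and $\tau$ as frozen, so the reachability digression is unnecessary. It would close at once anyway: $h$ values $g'$, which gives the path $g'\to h\to g$, and $g$ then has edges to every agent of $S$.
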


\begin{proof}
The agent $h$ values both exchanged goods: $g$ is valued by every agent in $S$, while $g'\in P_h$ is valued by its recipient.
Thus $h$'s utility and bundle size remain $B$, and all other bundles are unchanged.
Exactly one assigned and one residual good are exchanged, so the number of assigned goods, and hence the maximality of $P$, are preserved.
The exchange takes place entirely within $Z$, so $Z$ and its closure property are unchanged.
Finally, the number of full-type goods in $R$ decreases by one in every swap, because $g$ has type $S$ and $g'$ does not; hence the phase terminates after at most $|M|$ swaps.
If a full-type residual good remains at termination, the absence of a further swap implies that no bundle $P_h$ with $h\in S$ contains a good whose type is a proper subset of $S$.
\end{proof}

\begin{lemma}
\label{lem:blocking-seven}
Immediately before any iteration of the scattering phase,
\begin{equation}
    v_i(A_j)\le B\qquad\text{for all }i,j\in N.
    \label{eq:scattering-invariant-seven}
\end{equation}
Let $g$ be the residual good processed in that iteration, let $T=\tau(g)$, and write $q=|T|$. If no feasible recipient exists, then some
agent $i\in T$ values at least
\begin{equation}
    B\left\lceil\frac{n-q}{q}\right\rceil
    \label{eq:blocking-count-seven}
\end{equation}
goods contained in bundles whose recipients lie outside $T$.
\end{lemma}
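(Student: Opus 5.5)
We prove the two parts of \Cref{lem:blocking-seven} separately: first the invariant \eqref{eq:scattering-invariant-seven} by induction over the scattering iterations, then the counting bound by a pigeonhole argument over the infeasible candidates $j\in N\setminus T$.

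\emph{The invariant.} At the start of scattering we have $A_i=P_i$ for every $i$. Every good in $P_j$ is valued by $j$, and $|P_j|\le B$. Hence $v_i(A_j)\le |P_j|\le B$ for all $i,j$, so \eqref{eq:scattering-invariant-seven} holds initially.

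For the inductive step, suppose a good $g$ with $T=\tau(g)$ is placed into $A_j$, where $j\notin T$ is a feasible recipient.
\begin{itemize}
\item For $i\in T$, feasibility guarantees $v_i(A_j\cup\{g\})\le B$.
\item For $i\notin T$, we have $v_i(g)=0$. Here $\tau(g)$ is exactly the set of all valuers of $g$, by \Cref{lem:closure}(2), which the swap phase preserves by \Cref{lem:swap-invariants-seven}. So $v_i(A_j)$ does not change.
\end{itemize}
All other bundles are unchanged, so the invariant persists.

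\emph{The counting bound.} Suppose no feasible recipient exists. Then every candidate $j\in N\setminus T$ is blocked by some $i_j\in T$ with $v_{i_j}(A_j\cup\{g\})>B$. Since $j\notin T$ means $v_{i_j}(g)=0$, this gives $v_{i_j}(A_j)>B$. If we could conclude $v_{i_j}(A_j)\ge B$ instead, we would be done: assign each of the $n-q$ blocked candidates to a blocking agent in $T$. Since $|T|=q$, pigeonhole gives some $i\in T$ that blocks at least $\lceil (n-q)/q\rceil$ candidates. The corresponding bundles are disjoint and their recipients lie outside $T$. So $i$ values at least $B\lceil (n-q)/q\rceil$ goods in them, which is \eqref{eq:blocking-count-seven}.

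\emph{The obstacle.} The difficulty is that the invariant already gives $v_{i_j}(A_j)\le B$, while blocking demands $v_{i_j}(A_j)>B$, which contradicts it outright.

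I read this as a signal about how the definitions were meant. Either the feasibility test is meant as $v_i(A_j\cup\{g\})\le B$ being violated whenever $v_i(A_j)\ge B$ (for instance, if $g$ is counted for agents in $T$), or, more likely, the blocking condition is intended as $v_i(A_j)=B$, i.e.\ saturation. I would check the intended convention and adopt the reading under which a blocked candidate $j$ satisfies $v_{i_j}(A_j)\ge B$ for some $i_j\in T$.

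Under that reading the derivation above gives exactly $B$ per blocked bundle, and pigeonhole yields \eqref{eq:blocking-count-seven}. Under the literal reading, the infeasible case is vacuous and the implication holds trivially.
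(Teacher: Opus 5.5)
Your proof of the invariant and your closing pigeonhole count both match the paper. The step in between is wrong, and it is the step that carries the lemma. You assert that $j\notin T$ gives $v_{i_j}(g)=0$, but $j\notin T$ only says that the \emph{recipient} values $g$ at zero, $v_j(g)=0$. The blocking agent $i_j$ lies in $T=\tau(g)=\{i\in S: v_i(g)=1\}$, so $v_{i_j}(g)=1$. Since $g$ is still unassigned, this gives $v_{i_j}(A_j\cup\{g\})=v_{i_j}(A_j)+1$. Hence $j$ is infeasible exactly when $v_{i_j}(A_j)+1>B$ for some $i_j\in T$. By integrality this means $v_{i_j}(A_j)\ge B$, and the invariant sharpens it to $v_{i_j}(A_j)=B$. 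This is the paper's observation: otherwise adding $g$ would keep every valuer within the cap. With it, your charging argument over the $n-q$ disjoint bundles of $N\setminus T$ finishes the proof under the definitions as stated, and no alternative convention is needed.

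Because of this slip, the obstacle you describe is illusory. Your fallback claim, that ``under the literal reading the infeasible case is vacuous,'' is also false. The literal test in \Cref{alg:level-scatter} already counts $g$ for every agent of $T$, so a candidate is blocked whenever some valuer of $g$ already sees $B$ goods in its bundle. The eight-agent configuration in Remark~\ref{srem:beyond-seven} exhibits a state, consistent with the invariant, in which every candidate is blocked at once. If infeasibility were vacuous, Lemmas~\ref{lem:proper-type-seven} and~\ref{lem:full-type-seven}, and with them the restriction $n\le 7$, would be superfluous. As submitted, the proposal does not establish \eqref{eq:blocking-count-seven}. It defers to an unspecified ``intended convention'' where a one-line derivation from the definition of $\tau(g)$ was required.
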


\begin{proof}
The invariant \eqref{eq:scattering-invariant-seven} holds initially because $|P_j|\le B$ for every $j$.
When a good of type $T$ is assigned, its value is zero to agents outside $T$, while the choice of its recipient preserves the bound for agents in $T$.
Thus the invariant is maintained inductively.

There are $n-q$ candidate recipients in $N\setminus T$.
If a candidate $j$ cannot receive $g$, then, by the invariant, some $i\in T$ must satisfy $v_i(A_j)=B$; otherwise adding $g$ would preserve the cap for every valuer.
Charge each blocked bundle to one such agent.
Some $i\in T$ is charged at least $\lceil(n-q)/q\rceil$ bundles.
These disjoint bundles each contain $B$ goods valued by $i$, which proves the claim.
\end{proof}

The following elementary inequalities are the only point at which the bound of seven agents is used.

\begin{lemma}
\label{lem:small-population-seven}
Let $n=s+r\le7$ with $r\ge2$.  Then
\begin{align}
    \left\lceil\frac{n-q}{q}\right\rceil&\ge r-1
    &&\text{for every }1\le q\le s-1,
    \label{eq:proper-type-inequality-seven}\\
    \left\lceil\frac{r}{s}\right\rceil&\ge r-s.
    \label{eq:full-type-inequality-seven}
\end{align}
\end{lemma}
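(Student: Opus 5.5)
Both inequalities are finite arithmetic facts, so the plan is to reduce each one to its worst case using monotonicity and then check the few remaining pairs $(s,r)$ by hand. Throughout I assume $s\ge1$. This holds in the context where the lemma is applied: $R\neq\emptyset$ forces $S\neq\emptyset$, as in the proof of Lemma~\ref{lem:outsideS}. Without $s\ge1$, the expression $\lceil r/s\rceil$ in \eqref{eq:full-type-inequality-seven} is not defined.

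\emph{Proper-type inequality \eqref{eq:proper-type-inequality-seven}.} This is only non-vacuous when $s\ge2$. The quantity $(n-q)/q=n/q-1$ is decreasing in $q$, so over $1\le q\le s-1$ it suffices to treat $q=s-1$, where $(n-q)/q=(r+1)/(s-1)$.
\begin{itemize}[nosep]
\item If $r\le2$, the target $r-1$ is at most $1$. The ceiling of a positive number is at least $1$, so the inequality holds.
\item If $s=2$, then $\lceil r+1\rceil=r+1\ge r-1$.
\item Otherwise $r\ge3$ and $s\ge3$. Together with $s+r\le7$, this leaves only $(s,r)\in\{(3,3),(3,4),(4,3)\}$. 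The values are $\lceil 4/2\rceil=2\ge2$, $\lceil5/2\rceil=3\ge3$, and $\lceil4/3\rceil=2\ge2$, all as required.
\end{itemize}

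\emph{Full-type inequality \eqref{eq:full-type-inequality-seven}.} Since $r,s\ge1$, the left side $\lceil r/s\rceil$ is at least $1$. So the claim is automatic whenever $r-s\le1$, and it remains to treat $r\ge s+2$ with $s+r\le7$.
\begin{itemize}[nosep]
\item For $s=1$ we have $3\le r\le6$, and $\lceil r\rceil=r\ge r-1$.
\item For $s=2$ we have $r\in\{4,5\}$. Then $\lceil 4/2\rceil=2\ge2$ and $\lceil5/2\rceil=3\ge3$.
\item For $s\ge3$, the condition $r\ge s+2\ge5$ gives $s+r\ge8>7$, so no cases arise.
\end{itemize}

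There is no real obstacle here. The only points needing care are the monotonicity reduction to $q=s-1$ and noting that $s\ge1$ is needed for \eqref{eq:full-type-inequality-seven} to make sense. The tight cases $(s,r)=(3,4)$ in the first inequality and $(2,5)$ in the second show exactly where the bound $n\le7$ is used.
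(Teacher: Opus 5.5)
Your proof is correct. For \eqref{eq:full-type-inequality-seven} you argue exactly as the paper does: $\lceil r/s\rceil\ge1$ settles the case $r\le s+1$, and $r\ge s+2$ together with $s+r\le7$ leaves only $s\in\{1,2\}$. For \eqref{eq:proper-type-inequality-seven} your route differs in the second half. You both reduce to the worst case $q=s-1$; the paper does this through $q(r-1)\le(s-1)(r-1)$. You then enumerate the surviving pairs $(s,r)$. The paper instead closes the argument uniformly: it uses the identity $(s-1)(r-1)=sr-n+1$ and the bound $sr\le\lfloor n^2/4\rfloor\le 2n-2$ to get $q(r-1)\le n-1<n$, that is, $(n-q)/q>r-2$.

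Your enumeration is rigorous and easy to check, and you are right to state the hypothesis $s\ge1$ explicitly, since the paper's argument for \eqref{eq:full-type-inequality-seven} also needs it. What the paper's version adds is an explanation of the threshold. It reduces the proper-type case to the single comparison $\lfloor n^2/4\rfloor\le 2n-2$. This pits the quadratic number of blocking configurations against the linear mass budget, and it holds exactly for $n\le 7$. That is the mechanism Remark~\ref{srem:beyond-seven} later uses to explain why the analysis fails at eight agents. Your case check confirms the inequalities, but it exposes this mechanism only indirectly, through the equality cases at $n=7$.
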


\begin{proof}
For $1\le q\le s-1$,
\[
    q(r-1)\le(s-1)(r-1)=sr-n+1.
\]
Since $s+r=n$ and $n\le7$,
\[
    sr\le\left\lfloor\frac{n^2}{4}\right\rfloor\le2n-2.
\]
Therefore $q(r-1)\le n-1<n$, or equivalently $(n-q)/q>r-2$, which proves
\eqref{eq:proper-type-inequality-seven}.

For \eqref{eq:full-type-inequality-seven}, if $r\le s+1$, then $r-s\le1\le\lceil r/s\rceil$.
If $r\ge s+2$, the inequality $s+r\le7$ implies $s\le2$.
For $s=1$ the claim is immediate.
For $s=2$, we have $r\le5$ and $\lceil r/2\rceil\ge r-2=r-s$.
\end{proof}

The two possible kinds of residual goods---proper type and full type---are handled separately next.

\begin{lemma}
\label{lem:proper-type-seven}
If a residual good $g$ satisfies $\tau(g)=T\subsetneq S$, then the scattering phase has a feasible recipient for $g$.
\end{lemma}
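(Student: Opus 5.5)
We argue by contradiction: suppose that, when $g$ with $\tau(g)=T\subsetneq S$ is processed, no agent in $N\setminus T$ is a feasible recipient. Write $q=|T|$, so $1\le q\le s-1$. By Lemma~\ref{lem:blocking-seven}, some agent $i\in T$ values at least $B\lceil (n-q)/q\rceil$ goods contained in bundles whose recipients lie outside $T$. By \eqref{eq:proper-type-inequality-seven} of Lemma~\ref{lem:small-population-seven}, this is at least $B(r-1)$. The plan is to compare this lower bound with an upper bound on how many goods $i$ can value in those bundles, using \eqref{eq:outside-row-bound-seven}.

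For the upper bound, we use that $i\in S$ values exactly $k=v_i(M)$ goods in total. Among the bundles counted in Lemma~\ref{lem:blocking-seven}, none belongs to $i$ itself, because $i\in T$ and the counted recipients lie outside $T$. Meanwhile $i$'s own current bundle $A_i$ contains $P_i$, which by Lemma~\ref{lem:closure}(1) and Lemma~\ref{lem:swap-invariants-seven} consists of exactly $B$ goods, all valued by $i$. Since these $B$ goods are disjoint from the counted bundles, $i$ values at most $k-B$ goods outside its own bundle. Applying \eqref{eq:outside-row-bound-seven} gives $k-B\le (r-1)B-1$.

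Combining the two bounds yields $B(r-1)\le k-B\le (r-1)B-1$, a contradiction. Hence a feasible recipient exists.

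The point needing care is that both facts used above survive the earlier stages. First, $|P_i|=B$ with every good valued by $i$ must still hold after the swap phase; Lemma~\ref{lem:swap-invariants-seven} guarantees that utilities and bundle sizes are preserved by the swaps. Second, $A_i\supseteq P_i$ must hold throughout scattering, which is immediate since scattering only adds goods. With these checked, the bound \eqref{eq:outside-row-bound-seven}, which was derived from the mass bound via Lemma~\ref{lem:closure}(3) together with $k\ge B$ for agents of $S$, applies verbatim to $i\in T\subseteq S$.
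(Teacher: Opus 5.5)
Your proposal is correct and matches the paper's proof essentially verbatim. Lemma~\ref{lem:blocking-seven} together with \eqref{eq:proper-type-inequality-seven} gives some $i\in T$ who values at least $(r-1)B$ goods outside $P_i$, which contradicts \eqref{eq:outside-row-bound-seven} because $P_i$ already contains $B$ of the $k$ goods $i$ values; your checks that $|P_i|=B$ survives the swap phase and that $A_i\supseteq P_i$ during scattering make explicit what the paper leaves implicit.
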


\begin{proof}
Let $q=|T|\le s-1$ and suppose that $g$ has no feasible recipient.
By Lemmas~\ref{lem:blocking-seven} and~\ref{lem:small-population-seven},
some $i\in T$ values at least $(r-1)B$ goods in bundles of agents outside
$T$, and therefore outside her original bundle $P_i$.
However, $P_i$ contains $B$ of the $k$ goods valued by $i$, so
\eqref{eq:outside-row-bound-seven} implies that at most $(r-1)B-1$ goods
valued by $i$ lie outside $P_i$, a contradiction.
\end{proof}

\begin{lemma}
\label{lem:full-type-seven}
If a residual good $g$ satisfies $\tau(g)=S$ after the swap phase, then the scattering phase has a feasible recipient for $g$.
\end{lemma}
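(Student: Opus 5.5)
We argue by contradiction, as in the proof of Lemma~\ref{lem:proper-type-seven}. This time we compare the blocking count of Lemma~\ref{lem:blocking-seven} with the global mass bound \eqref{eq:mass-bound-seven}, rather than with the per-row bound \eqref{eq:outside-row-bound-seven}. The extra ingredient is the terminal structure of the swap phase from Lemma~\ref{lem:swap-invariants-seven}.

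First, we record what the swap phase gives. Suppose a residual good $g$ with $\tau(g)=S$ is processed in the scattering phase. Then $g$ was already in $R$ when the swap phase terminated, since scattering never adds goods to $R$. By Lemma~\ref{lem:swap-invariants-seven}, every good in $\bigcup_{h\in S}P_h$ therefore has type $S$. Fix any $i\in S$. Then $i$ values every good in $P_h$ for every $h\in S$, and by Lemma~\ref{lem:closure}(1) each such bundle has exactly $B$ goods. This gives $sB$ goods valued by $i$ inside the bundles $P_h$ with $h\in S$. It also gives one more valued good, namely $g$ itself, which lies in none of the current bundles.

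Second, suppose $g$ has no feasible recipient. Here $T=S$ and $q=s$, so the candidate recipients are exactly the $r$ agents outside $S$. Lemma~\ref{lem:blocking-seven} yields some $i\in S$ that values at least $B\lceil r/s\rceil$ goods lying in the current bundles $A_j$ with $j\notin S$. These goods are disjoint from $g$, which is still unassigned. They are also disjoint from $\bigcup_{h\in S}P_h$, because goods of the original $P_h$ never leave the bundle of $h\in S$. Counting the three disjoint families valued by this agent $i$ gives
\[
k=v_i(M)\ \ge\ sB+B\left\lceil\frac{r}{s}\right\rceil+1 .
\]

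Third, we apply inequality \eqref{eq:full-type-inequality-seven} from Lemma~\ref{lem:small-population-seven}, namely $\lceil r/s\rceil\ge r-s$. This gives $k\ge sB+(r-s)B+1=rB+1$, which contradicts the mass bound $k\le rB-1$ from \eqref{eq:mass-bound-seven}. So a feasible recipient exists.

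The main point needing care is the disjointness bookkeeping in the second step. One must confirm that no good from any $P_h$ with $h\in S$ ends up in a bundle outside $S$, which holds because scattering only adds goods. One must also confirm that $g$ is not already counted. Both facts are immediate from the description of Algorithm~\ref{alg:level-scatter}, so the argument should go through cleanly.
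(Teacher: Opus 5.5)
Your proof is correct and follows the paper's argument. Both use the terminal structure of the swap phase to place $sB$ goods valued by every $i\in S$ inside the core $\bigcup_{h\in S}P_h$, and then contradict the mass bound \eqref{eq:mass-bound-seven} via the blocking count $B\lceil r/s\rceil\ge(r-s)B$ on the outside bundles. The differences are cosmetic: you also count $g$ itself, which adds one unit of slack you do not need, and you apply $\lceil r/s\rceil\ge r-s$ uniformly. The paper instead first disposes of the case $r\le s$ via $1\le k-sB\le(r-s)B-1$ and invokes the blocking count only for $r\ge s+1$.
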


\begin{proof}
By Lemma~\ref{lem:swap-invariants-seven}, every good in the $sB$-good core $\bigcup_{h\in S}P_h$ is valued by every agent in $S$.
Thus, for each $i\in S$, the number of $i$-valued goods outside this core is at most
\begin{equation}
    k-sB\le(r-s)B-1,
    \label{eq:full-type-mass-supp}
\end{equation}
where the inequality follows from \eqref{eq:mass-bound-seven}.
Because the residual full-type good itself lies outside the core and is valued by $i$, we also have $k-sB\ge1$.
Thus, if $r\le s$, the displayed upper bound is already a contradiction and no full-type residual good can remain; it remains to consider $r\ge s+1$.

A full-type good can be assigned only to one of the $r$ agents outside $S$.
If every one of these bundles were blocked,
Lemma~\ref{lem:blocking-seven} would give some $i\in S$ that values at least
\[
    B\left\lceil\frac{r}{s}\right\rceil\ge(r-s)B
\]
goods in outside bundles, where the last inequality follows from Lemma~\ref{lem:small-population-seven}.
This contradicts \eqref{eq:full-type-mass-supp}.
Hence a feasible recipient exists.
\end{proof}

The preceding lemmas cover all configurations.  
Indeed, if $R\neq\emptyset$, then $S\neq\emptyset$ and Lemma~\ref{lem:outsideS} gives $r\ge2$.  
Every residual type is nonempty and is either a proper subset of $S$ or equal to $S$; Lemmas~\ref{lem:proper-type-seven} and \ref{lem:full-type-seven} handle these two exhaustive cases.   
For $n\le2$, these conditions are incompatible, so $R$ must be empty.  
Lemma~\ref{lem:small-population-seven} applies to every listed pair $(s,r)$.

\begin{theorem}
\label{thm:binary-7}
For normalized binary goods instances with at most seven agents, \Cref{alg:level-scatter} computes an \EFone{} + \EQone{} allocation in polynomial time.
\end{theorem}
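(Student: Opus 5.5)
The plan is to read off the final allocation's properties from the lemmas already in place. First I will show the algorithm never gets stuck. If $R=\emptyset$, the scattering loop is empty. Otherwise $S\neq\emptyset$ and $r\ge 2$ by Lemma~\ref{lem:outsideS}, and every residual good has a nonempty type contained in $S$ by Lemma~\ref{lem:closure}(2). That type is preserved through swaps by Lemma~\ref{lem:swap-invariants-seven}, so it is either proper or full. Lemmas~\ref{lem:proper-type-seven} and~\ref{lem:full-type-seven} then supply a feasible recipient in every iteration. Here I need to check one point: the counting in those lemmas uses only the invariant \eqref{eq:scattering-invariant-seven} and the mass bounds, and both remain valid after earlier scatterings, since a scattered good is assigned outside its type and adds no value to its recipient.

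Polynomial time follows because the flow computations are polynomial, the swap phase runs at most $|M|$ times, and each scatter step is a scan over agents.

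Next I will record the utility structure of the final allocation $A$. For every agent $i$, $v_i(A_i)=|P_i|\in\{t,B\}$. Scattered goods and goods in $J$ go to agents who value them at $0$, and swaps preserve utilities. Moreover, by \eqref{eq:scattering-invariant-seven} at termination, $v_i(A_j)\le B$ for all $i,j$. Distributing $J$ cannot break this, because goods in $J$ are valued by nobody.

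\EQone{} then follows. Every realized utility lies in $\{B-1,B\}$. For any $i,j$, either $v_j(A_j)\le v_i(A_i)$, or $v_j(A_j)=B$ and $v_i(A_i)=B-1$. In the latter case $A_j$ contains a good $j$ values at $1$ (since $B\ge 1$), and removing it gives $v_j(A_j\setminus\{g\})=B-1\le v_i(A_i)$.

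\EFone{} follows similarly. Fix $i,j$. If $v_i(A_j)\le v_i(A_i)$ we are done. Otherwise, since $v_i(A_i)\ge B-1$ and $v_i(A_j)\le B$, we must have $v_i(A_j)=B$ and $v_i(A_i)=B-1$. Then $A_j$ contains a good valued $1$ by $i$, and removing it gives \EFone{}.

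The main obstacle is confirming that the cap $v_i(A_j)\le B$ really holds for the initial $P$ together with the swaps, and for every agent including those outside $S$. Initially $v_i(P_j)\le|P_j|\le B$, and swaps keep $|P_h|=B$, so the cap survives. A subsidiary point is the degenerate case $t=0$, where an agent might be empty. There \EQone{} is vacuous on removal, since a bundle worth $B=1$ drops to $0$, so the same argument goes through.
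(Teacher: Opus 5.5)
Your proposal is correct and follows essentially the same route as the paper. Every scattering step is feasible by Lemmas~\ref{lem:proper-type-seven} and~\ref{lem:full-type-seven}; realized utilities stay in $\{t,t+1\}$ because scattered and junk goods are worth zero to their recipients; and the cap \eqref{eq:scattering-invariant-seven} yields \EFone{} under binary valuations, with your extra checks (the cap surviving the swaps, and the $t=0$ case) only making explicit what the paper leaves implicit.
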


\begin{proof}
If $R=\emptyset$, no scattering is needed.  Every realized utility is $t$ or $B=t+1$, and each $P_j$ has at most $B$ goods.  Adding junk goods does not affect any value. Hence the allocation is \EQone{}, and $v_i(A_j)=v_i(P_j)\le B\le v_i(P_i)+1=v_i(A_i)+1$ for all $i,j$, which implies \EFone{}.

Now suppose $R\neq\emptyset$.  Lemma~\ref{lem:swap-invariants-seven} shows that the swap phase terminates without changing any utility or closure property.  
During the scattering phase, every residual good has either proper or full type, so Lemmas~\ref{lem:proper-type-seven} and \ref{lem:full-type-seven} guarantee a feasible recipient at every iteration, regardless of the processing order.  
The returned allocation is therefore complete.

Every scattered good is assigned to an agent outside its type and is thus worth zero to its recipient.  The junk goods are worth zero to everyone.
Consequently, all realized utilities remain in $\{t,t+1\}$, which implies \EQone{}.  Moreover, the invariant \eqref{eq:scattering-invariant-seven} continues to hold after all residual and junk goods are assigned.  Therefore, for all agents $i,j$,
\[
    v_i(A_j)\le t+1\le v_i(A_i)+1.
\]
If $i$ envies $j$, the integer-valued utilities differ by exactly one and $A_j$ contains a good valued by $i$; removing that good eliminates the envy.
Thus the allocation is \EFone{}.

Finally, $t$ and the maximum partial allocation $P$ can be computed by polynomial-time integral flow.  The alternating reachable set is found by graph search, the swap phase performs at most $|M|$ swaps, and the scattering phase performs at most $|M|$ assignments with polynomially many value checks.
Hence the algorithm runs in polynomial time.
\end{proof}

The seven-agent bound enters the analysis only through
\Cref{lem:small-population-seven}; Remark~\ref{srem:beyond-seven} in \Cref{sec:appendix_7_agent_barriers} explains how the counting argument breaks for $n\ge 8$. Its eight-agent example still admits an \EFone{}+\EQone{} allocation and obstructs only a particular scattering step. The following construction, in contrast, rules out every complete allocation satisfying both notions.

\paragraph{Nonexistence with 113 agents.}
\label{sec:binary-counterexample}
For binary goods, writing $u_i=v_i(A_i)$, the definitions of \EFone{} and \EQone{} are equivalent to
\begin{equation}
\label{eq:binary-fairness-cap}
    v_i(A_j)\le u_i+1
    \quad\text{and}\quad
    u_j\le u_i+1,
    \qquad\text{for all }i,j\in N,
\end{equation}
respectively. Indeed, removing one good reduces the relevant value by at most one; conversely, whenever a comparison is violated before removal, a good valued at one can be removed to eliminate a gap of at most one. No efficiency or non-wastefulness restriction is imposed: goods may be assigned to agents who value them at zero.

\begin{theorem}[A normalized binary counterexample]
\label{thm:binary-113-counterexample}
There is a normalized binary goods instance with $n=113$ agents and $m=341$ goods, in which every agent approves exactly $k=165$ goods, that admits no complete \EFone{}+\EQone{} allocation.
\end{theorem}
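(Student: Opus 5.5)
The plan is a pure counting contradiction built on the reformulation \eqref{eq:binary-fairness-cap}: in a binary instance \EFone{}+\EQone{} holds if and only if $u_j\le u_i+1$ and $v_i(A_j)\le u_i+1$ for all $i,j$. The numbers in the statement suggest using $V=\mathbb{F}_2^5$. We have $341=11\cdot 31$, which is $11$ copies of each nonzero vector of $V$, and $165=11\cdot 15$, since a hyperplane $w^\perp$ contains exactly $15$ nonzero vectors.

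\textbf{Construction.} Goods are pairs $(x,c)$ with $x\in V\setminus\{0\}$ and $c\in[11]$. Every agent has a type $w\in V\setminus\{0\}$ and approves exactly the goods $(x,c)$ with $w\cdot x=0$. So each agent approves $11\cdot 15=165$ goods, and the instance is normalized and binary. The $113$ agents are distributed over the $31$ types with multiplicities $d_w$, $\sum_w d_w=113$. These multiplicities are the free parameters. I expect them to be non-uniform (for example, a few types carrying large groups of identical agents), because that is what produces the utility cap below.

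\textbf{Step 1: utilities are capped by identical competitors.} Fix a type $w$ with $d_w$ identical agents. They compete for the same $165$ goods, so their realized utilities sum to at most $165$, and some member has utility at most $\lfloor 165/d_w\rfloor$. By \EQone{} every agent's utility lies in $\{u_0,u_0+1\}$, where $u_0$ is the minimum realized utility. Hence there is a global cap $U$ with $u_i\le U$ for all $i$. I would then sharpen this by also using \EFone{} inside the group: each bundle $A_j$ satisfies $v_w(A_j)\le u_0+1$. Summing over all $113$ bundles gives $165\le 113(u_0+1)$, and combining this with the group bound pins down $u_0$.

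\textbf{Step 2: overlapping approvals cap bundle sizes.} Every good is approved by exactly the $15$ types in $x^\perp$, so no good is junk. For any bundle $A_j$, summing \EFone{} over one representative of each present type gives
\[
\sum_{w} v_w(A_j)=15|A_j|\le \sum_{w\text{ present}}(u_w+1).
\]
A sharper version uses the character identity $\#\{x\in A_j: w\cdot x=0\}=\tfrac12(|A_j|+\hat f(w))$ together with $\sum_{w\ne0}\hat f(w)=-|A_j|$. This shows that the most-loaded hyperplane contains at least $\lceil 15|A_j|/31\rceil$ goods of the bundle. Those goods must respect the cap of the agent of that type whose utility is smallest, namely $u_0+1$, and for the bundle holder itself the cap is $u_j\le U$. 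The result is a bound $|A_j|\le L$.

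\textbf{Step 3: contradiction.} The total capacity is then at most $113L$, and I need this to be $<341$, i.e. an average bundle size below about $3$. The main obstacle is making Steps 1 and 2 fit together numerically. The crude bound $15|A_j|\le 31(U+1)$ is far too weak when $U$ is around $165/113$ or larger, so the cap from Step 1 has to be very small, with $u_0$ equal to $0$ or $1$. The bundle bound must also exploit per-type caps rather than a single global one.

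My first attempt would be to choose the $d_w$ so that, for every type, $\lfloor165/d_w\rfloor$ together with the \EQone{} window forces $u_0+1\le 2$. I would then argue that a bundle of size $\ge 3$ necessarily places two of its goods in a common hyperplane $w^\perp$ containing a third good, since any $3$ nonzero vectors of $\mathbb{F}_2^5$ share an orthogonal nonzero $w$. That gives $v_w(A_j)\ge 3>u_0+1$, so every bundle has at most $2$ goods. Then $113\cdot 2=226<341$, which is the desired contradiction. The remaining work is to verify that some multiplicity vector really achieves $u_0\le1$ under all the constraints; if it does not, I would fall back on the finer per-type caps from the character identity.
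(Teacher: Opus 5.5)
Your construction is the right one: the paper uses exactly these goods and types, with $83$ agents of one type $a^\star$ and one agent of each of the other $30$ types. The decisive step in your Step~3, however, fails. For an agent $i$ of type $w$, the \EFone{} constraint in \eqref{eq:binary-fairness-cap} is $v_w(A_j)\le u_i+1$, where $u_i$ is \emph{her own} utility, not $u_0+1$. Achieving $u_0\le 1$ only gives, via \EQone{}, $u_i\le 2$ for everyone, so the cap is $3$. A three-good bundle whose common approvers all have utility $2$ therefore violates nothing. Your conclusion $|A_j|\le 2$, and with it the count $226<341$, does not follow.

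You cannot push to $u_0=0$ either. Forcing some member of a type down to utility $0$ by pigeonhole needs more than $165$ identical agents, and there are only $113$. The same confusion appears in your Step~1 sharpening, where $v_w(A_j)\le u_0+1$ should read $v_w(A_j)\le u_i+1$. Your fallback via the character identity is also too weak: under cap $3$ it gives only $|A_j|\le 6$.

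The missing idea is to bound bundles by $3$ rather than $2$, using four goods instead of three. Any four labels $x_1,\dots,x_4\in\mathbb{F}_2^5\setminus\{0\}$ (repetitions allowed) give a homogeneous system of at most four equations in five unknowns. So some nonzero $a$ is orthogonal to all of them. If every type is present, an agent $i$ of type $a$ would value a four-good bundle at $4>3\ge u_i+1$. Hence $|A_j|\le 3$ for every $j$, and completeness gives $341\le 3\cdot 113=339$, a contradiction.

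This also settles the multiplicities, which you left open and misidentified as the remaining difficulty. You need $u_0\le 1$, which holds as soon as $d_{a^\star}\ge 83$, since $165<2\cdot 83$. You also need all $31$ types present for the four-wise covering: each nonzero $a$ is the only nonzero vector orthogonal to a basis of $a^\perp$. Together these leave exactly $83+30=113$ agents. The fact that $341$ exceeds $3\cdot 113$ by only $2$ is the signal that the intended bundle cap is $3$, not $2$.
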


\begin{proof}
Let $P=\mathbb{F}_2^5\setminus\{0\}$ be the 31 nonzero vectors in the five-dimensional vector space over the two-element field. All dot products below are evaluated modulo two. For each $x\in P$, create 11 distinct goods $g_{x,1},\ldots,g_{x,11}$. For every $a\in P$, define an additive valuation type by
\[
    v_a(g_{x,\ell})=
    \begin{cases}
        1,&a\cdot x=0,\\
        0,&a\cdot x=1.
    \end{cases}
\]
Since $a\ne0$, the kernel of $x\mapsto a\cdot x$ has dimension four and contains $2^4=16$ vectors. Removing the zero vector leaves 15 approved labels, so each type approves $15\cdot11=165$ goods. Fix a type $a^\star\in P$, create 83 agents of that type, and create one agent of each of the other 30 types. Thus all 31 types occur, $n=83+30=113$, and $m=31\cdot11=341$. In particular, the instance is normalized.

Suppose a complete allocation $A$ satisfies \EFone{} and \EQone{}, and write $u_i=v_i(A_i)$.

\emph{\EQone{} caps every realized utility at two.}
The 83 agents of type $a^\star$ approve the same set of 165 goods. Their bundles are disjoint, so
\[
    \sum_{i:\,\operatorname{type}(i)=a^\star}u_i
    \le165<2\cdot83.
\]
Hence one of them has integer utility at most one. By \EQone{} and \eqref{eq:binary-fairness-cap}, every agent has utility at most two.

\emph{Every four distinct goods have a common approving agent.}
Let their labels be $x_1,x_2,x_3,x_4\in P$; labels may repeat because goods with the same label remain distinct objects. The homogeneous system
\[
    a\cdot x_1=a\cdot x_2=a\cdot x_3=a\cdot x_4=0
\]
has at most four independent equations in five coordinates over $\mathbb{F}_2$. Its solution space has dimension at least one and therefore contains a nonzero vector $a$. An agent of type $a$ is present and approves all four goods.

\emph{\EFone{} caps every bundle's size at three.}
If a bundle $A_j$ contained four goods, their common approving agent $i$ would satisfy $v_i(A_j)\ge4$, whereas \EFone{} would require $v_i(A_j)\le u_i+1\le3$. Thus $|A_j|\le3$ for every $j$. Completeness now gives the contradiction
\[
    341=\sum_{j=1}^{113}|A_j|\le3\cdot113=339.
\]
\end{proof}

The proof combines a utility bound obtained from identical agents with a bundle-size bound obtained from overlapping approval sets. The following lemma isolates this construction principle.

\begin{lemma}[A construction principle]
\label{lem:binary-cover-counterexample}
Suppose $s$ binary valuation types on $m$ goods each approve exactly $k$ goods. Let $B\ge1$ be an integer, and suppose every set of $B+2$ distinct goods is jointly approved by at least one type. Then there is a normalized instance with no complete \EFone{}+\EQone{} allocation and
\[
    n=s+\left\lfloor\frac{k}{B}\right\rfloor
    \qquad\text{whenever}\qquad
    m>(B+1)\left(s+\left\lfloor\frac{k}{B}\right\rfloor\right).
\]
\end{lemma}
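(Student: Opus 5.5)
We generalize the proof of \Cref{thm:binary-113-counterexample} directly. Build the instance from the $s$ given types by adding $\lfloor k/B\rfloor$ further agents of one fixed type $a^\star$, so that type $a^\star$ has $c:=\lfloor k/B\rfloor+1$ agents and every other type has exactly one. Then $n=s+\lfloor k/B\rfloor$. The instance is normalized because every agent approves exactly $k$ goods. Since every type is represented by at least one agent, the covering hypothesis transfers from types to agents: every set of $B+2$ distinct goods is approved by some agent.

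Now suppose a complete allocation $A$ satisfies \EFone{}+\EQone{}, and write $u_i=v_i(A_i)$. We use the binary reformulation \eqref{eq:binary-fairness-cap} throughout.

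The first step is to cap utilities via \EQone{}. The $c$ agents of type $a^\star$ approve the same $k$ goods and hold disjoint bundles, so their utilities sum to at most $k$. Since $k<B\cdot c=B(\lfloor k/B\rfloor+1)$, some agent of this type has $u_i\le B-1$. By \EQone{}, every agent then satisfies $u_j\le B$.

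The second step is to cap bundle sizes via \EFone{}. If some bundle $A_j$ contained $B+2$ goods, an agent $i$ approving all of them would have $v_i(A_j)\ge B+2>u_i+1$, contradicting \eqref{eq:binary-fairness-cap}. Hence $|A_j|\le B+1$ for every $j$.

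The last step is counting. Completeness gives $m=\sum_j|A_j|\le (B+1)n$, which contradicts the hypothesis $m>(B+1)n$.

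The only delicate point is the choice of multiplicity, which must be large enough that $k<Bc$ forces utility at most $B-1$. This is exactly why $\lfloor k/B\rfloor$ extra copies are added. We should also check the degenerate case where $k$ is divisible by $B$; the strict inequality $k<B(\lfloor k/B\rfloor+1)$ still holds there. Finally, note that duplicating a type preserves normalization and does not affect the covering property.
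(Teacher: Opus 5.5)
Your proposal is correct and follows the paper's argument essentially step for step. You raise one type to $\lfloor k/B\rfloor+1$ copies so that pigeonhole forces some utility of at most $B-1$. \EQone{} then caps every utility at $B$, the covering property with \EFone{} caps every bundle at $B+1$ goods, and counting contradicts $m>(B+1)n$.
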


\begin{proof}
Use one agent of each type and add $\lfloor k/B\rfloor$ copies of one chosen type. The $r=\lfloor k/B\rfloor+1$ agents of that type share an approval set of size $k<rB$, so at least one has utility at most $B-1$. In an \EQone{} allocation every agent then has utility at most $B$. The covering assumption and \EFone{} prohibit any bundle containing $B+2$ goods, since a common approving agent would value it above $B+1$. Completeness would therefore imply $m\le(B+1)n$, contrary to the assumed inequality.
\end{proof}

\Cref{thm:binary-113-counterexample} uses $s=31$, $B=2$, $k=165$, and $m=341$, so $n=31+\lfloor165/2\rfloor=113$. The ancillary file \texttt{ef1\_eq1\_113\_certificate.py} provides a standalone structural check: it generates the instance, verifies all approval counts and type multiplicities, and checks all $\binom{34}{4}=46{,}376$ four-label multisets, including repeated labels, for a common approving type. It verifies the capacity shortfall $341-339=2$; it does not enumerate allocations or establish minimality. The proof above is self-contained and does not depend on this check.

\paragraph{The remaining population gap.}
Define $N_{\min}$ to be the smallest number of agents for which some normalized binary goods instance, allowing any number of goods and any common approval count, admits no complete \EFone{}+\EQone{} allocation. \Cref{thm:binary-7,thm:binary-113-counterexample} give
\begin{equation}
\label{eq:binary-population-gap}
    8\le N_{\min}\le113.
\end{equation}
The upper bound is an example, not a proof that 113 is the minimum. Nor does it assert nonexistence for every population size above 113.

\begin{OP}
\label{op:binary-gap}
Narrow the gap for normalized binary goods by resolving population sizes in $[8,112]$: find a counterexample with fewer than 113 agents, or extend existence and algorithmic guarantees from $n\le7$ to $n\le r$ for some $r\in\{8,\ldots,112\}$. More generally, identify broader structural classes that admit \EFone{}+\EQone{} allocations and algorithms for computing them.
\end{OP}
An algorithm guaranteed to output such an allocation on every normalized binary goods instance is impossible, since \Cref{thm:binary-113-counterexample} has no feasible output. Extending the positive results must therefore respect a population bound or an additional structural restriction.

\paragraph{Laminar approval sets.}

The bound on the number of agents can be removed when the agents' approval sets have additional structure. For each agent $i$, let
\[
\Gamma_i:=\{g\in M:v_i(g)=1\}
\]
denote her approval set. A family of approval sets is \emph{laminar} if any two sets are disjoint or one contains the other.

\begin{theorem}
\label{thm:laminar_binary_goods}
For normalized binary goods instances with laminar approval sets, \EFone{}+\EQone{} allocations always exist and can be computed in polynomial time, even when the instance has an arbitrary number of agents.
\end{theorem}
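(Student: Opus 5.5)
The plan is to exploit how rigid laminarity is under normalization. In a normalized binary instance every agent approves the same number $k:=|\Gamma_i|$ of goods. If two approval sets are nested and have equal size, they are equal. Laminarity therefore forces any two approval sets to be either \emph{identical} or \emph{disjoint}. So the agents split into groups $G_1,\dots,G_p$ with $c_\ell:=|G_\ell|$ members each. All agents in $G_\ell$ share one approval set $\Gamma^{(\ell)}$ of size $k$, and these sets are pairwise disjoint. Goods approved by nobody form a junk set $J$ and will be assigned arbitrarily at the end. The key consequence is that an agent in $G_\ell$ cares only about how $\Gamma^{(\ell)}$ is spread across bundles, so the groups can be handled almost independently. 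Throughout, I will use the binary reformulation \eqref{eq:binary-fairness-cap} of \EFone{} and \EQone{}.

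\textbf{Step 1 (leveling).} Set $t:=\min_\ell \lfloor k/c_\ell\rfloor$, and let $G^\ast$ be a group attaining this minimum, with $c^\ast$ members. By the choice of $G^\ast$ we have $k<c^\ast(t+1)$. Within each group $G_\ell$, give every member $t$ goods of $\Gamma^{(\ell)}$; this is possible since $c_\ell t\le k$. Then hand out further goods of $\Gamma^{(\ell)}$ one at a time to distinct members, so that as many members as possible reach $t+1$. After this step every realized utility lies in $\{t,t+1\}$, which gives \EQone{}, provided no later step changes utilities.

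\textbf{Step 2 (scattering residuals).} Group $G_\ell$ has residual goods only if all its members already hold $t+1$. In that case there are $r_\ell:=k-c_\ell(t+1)$ residual goods. I assign them to agents outside $G_\ell$, who value them at zero, subject to one cap: every outside bundle receives at most $t+1$ goods of $\Gamma^{(\ell)}$. Caps for different groups concern disjoint sets of goods, so they do not interact. A simple greedy or round-robin fill over the $n-c_\ell$ outside agents works, provided capacity suffices. All scattered goods and junk goods are worth zero to their recipients, so utilities stay in $\{t,t+1\}$ and \EQone{} holds.

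For \EFone{}, take an agent $i\in G_\ell$ and any other agent $j$. Then $v_i(A_j)=|A_j\cap\Gamma^{(\ell)}|$, and this is at most $t+1\le u_i+1$. The bound holds for both kinds of $j$. If $j\in G_\ell$, its share of $\Gamma^{(\ell)}$ is at most $t+1$ from leveling. If $j\notin G_\ell$, it holds by the cap. By \eqref{eq:binary-fairness-cap} this is exactly \EFone{}.

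\textbf{Main obstacle (capacity).} The one nontrivial step is showing that the scattering never runs out of room, i.e.\ $r_\ell\le (n-c_\ell)(t+1)$ whenever $r_\ell>0$. The group $G^\ast$ itself never has residuals: $k<c^\ast(t+1)$ means not all of its members can reach $t+1$. Hence any group $G_\ell$ with $r_\ell>0$ is different from $G^\ast$, so all $c^\ast$ members of $G^\ast$ lie outside $G_\ell$ and $n-c_\ell\ge c^\ast$. This gives
\[
(n-c_\ell)(t+1)\;\ge\; c^\ast(t+1)\;>\;k\;\ge\; r_\ell .
\]
A single group is the special case in which no residuals exist at all. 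Everything consists of counting and greedy assignment, so it runs in polynomial time.
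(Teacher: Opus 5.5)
Your proof is correct and follows essentially the same route as the paper. Your level $t=\min_\ell\lfloor k/c_\ell\rfloor$ equals the paper's $u=\lfloor k/r\rfloor$ with $r$ the largest group size; the paper likewise levels each group to $u$ or $u+1$ and sends leftovers to zero-valuers (specifically a largest group), with at most $u+1$ goods per approval set per bundle, and your explicit capacity check $(n-c_\ell)(t+1)\ge c^\ast(t+1)>k\ge r_\ell$ spells out a step the paper leaves implicit.
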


\begin{proof}
Normalization and laminarity imply that any two agents have either identical or disjoint approval sets. Thus, the agents can be partitioned into groups $N_1,\ldots,N_q$ of identical agents whose approval sets $\Gamma^{(1)},\ldots,\Gamma^{(q)}$ are pairwise disjoint. Since the instance is normalized, every approval set has the same size, say $k$.

Let $r=\max_{\ell}|N_\ell|$ and $u=\lfloor k/r\rfloor$. Within each group, assign every agent either $u$ or $u+1$ goods from the group's approval set whenever possible. Any remaining goods are distributed as evenly as possible among the agents of a group of size $r$.
Every agent consequently receives utility $u$ or $u+1$, yielding \EQone{}. Moreover, no bundle contains more than $u+1$ goods from any single approval set. Hence every agent values every bundle by at most $u+1$ while receiving utility at least $u$, which establishes \EFone{}.
\end{proof}

The binary assumption in \Cref{thm:laminar_binary_goods} is essential: there exists a normalized restricted-additive goods instance with three agents and laminar approval sets that admits no \EFone{}+\EQone{} allocation.

Agent $i$'s \emph{approval set} is $\Gamma_i=\{o\in M:v_i(o)>0\}$, and a family of approval sets is \emph{laminar} if any two of its members are either disjoint or nested.
The proof of \Cref{thm:laminar_binary_goods} uses the following structural observation.

\begin{observation}
\label{sobs:laminar-types}
In a normalized, restricted-additive instance with laminar approval sets, any two agents $i,j$ either have disjoint approval sets or satisfy $v_i=v_j$.
\end{observation}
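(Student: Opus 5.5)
The argument is a short one based on normalization together with the laminar nesting. Fix two agents $i,j$ whose approval sets are not disjoint. By laminarity one contains the other, say $\Gamma_i\subseteq\Gamma_j$. In a restricted-additive instance each item $o$ has a base value $v(o)>0$ on the approval set, so
\[
v_j(M)=\sum_{o\in\Gamma_j}v(o)=\sum_{o\in\Gamma_i}v(o)+\sum_{o\in\Gamma_j\setminus\Gamma_i}v(o)=v_i(M)+\sum_{o\in\Gamma_j\setminus\Gamma_i}v(o).
\]
Normalization gives $v_i(M)=v_j(M)$, so the last sum vanishes. Every term in that sum is strictly positive, so $\Gamma_j\setminus\Gamma_i=\emptyset$, and hence $\Gamma_i=\Gamma_j$.

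With equal approval sets, restricted additivity makes the valuations agree item by item. For $o\in\Gamma_i$ we have $v_i(o)=v(o)=v_j(o)$, and for $o\notin\Gamma_i$ both values are $0$. Therefore $v_i=v_j$.

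The only delicate point is the strict positivity step: $\Gamma_i$ is defined by $v_i(o)>0$, so each base value on $\Gamma_j\setminus\Gamma_i$ is positive. This is exactly what lets a zero sum force an empty difference. Without it, items of base value zero could appear in one set but not the other. Since they lie outside every approval set by definition, this case cannot arise, and no further obstacle is expected.
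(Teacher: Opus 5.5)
Your proof is correct and follows the paper's argument essentially verbatim. Laminarity reduces to the nested case $\Gamma_i\subseteq\Gamma_j$; strict positivity of the base values on $\Gamma_j$, together with normalization, forces $\Gamma_i=\Gamma_j$; and restricted additivity then gives $v_i=v_j$.
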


\begin{proof}
By laminarity, assume $\Gamma_i\subseteq\Gamma_j$ (otherwise the sets are disjoint).
Restricted additivity gives $v_i(M)=\sum_{o\in\Gamma_i}v(o)$ and $v_j(M)=\sum_{o\in\Gamma_j}v(o)$, with every base value $v(o)>0$ for $o\in\Gamma_j$.
If $\Gamma_i\subsetneq\Gamma_j$, then $v_i(M)<v_j(M)$, contradicting normalization.
Thus $\Gamma_i=\Gamma_j$, and both agents value each $o\in\Gamma_i$ at $v(o)$ and everything else at $0$, i.e., $v_i=v_j$.
\end{proof}

Consequently, such an instance partitions the agents into groups of identical agents with pairwise disjoint approval sets, which is exactly the structure used in \Cref{thm:laminar_binary_goods}.
The binary assumption in \Cref{thm:laminar_binary_goods} cannot be dropped:

\begin{proposition}
\label{sprop:laminar-ra}
For three agents with normalized restricted-additive valuations and laminar approval sets, an allocation of goods that is both \EFone{} and \EQone{} may fail to exist.
\end{proposition}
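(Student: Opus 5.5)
The plan is to build an explicit three-agent instance with two groups and derive a contradiction by counting, as in the proof of \Cref{thm:binary-113-counterexample}: \EQone{} caps a utility and \EFone{} caps bundle sizes. By \Cref{sobs:laminar-types}, any normalized restricted-additive laminar instance splits into groups of identical agents with pairwise disjoint approval sets, so the counterexample must have this form.

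\textbf{Construction.} Take agents $1,2$ with identical valuations, approving a single good $x$ with base value $v(x)=2$. Take agent $3$ approving seven goods $z_1,\ldots,z_7$, each with base value $2/7$. Every agent values the remaining goods at $0$. The instance has the following properties.
\begin{itemize}
\item It is restricted additive, since every good has one base value.
\item The approval sets $\{x\}$ and $\{z_1,\ldots,z_7\}$ are disjoint, so the family is laminar.
\item Every agent has total value $2$, so the instance is normalized.
\end{itemize}

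\textbf{Step 1: one of agents 1 and 2 has utility zero.} Agents $1$ and $2$ value only $x$, and only one agent can hold $x$. So at least one of them, say agent $\ell\in\{1,2\}$, has $v_\ell(A_\ell)=0$.

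\textbf{Step 2: \EQone{} lets agent 3 hold at most one $z$-good.} The \EQone{} condition for agent $3$ against agent $\ell$ requires $v_3(A_3\setminus\{g\})\le 0$ for some $g\in A_3$. Hence $A_3$ contains at most one good from $Z=\{z_1,\ldots,z_7\}$, and so $u_3:=v_3(A_3)\le 2/7$.

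\textbf{Step 3: \EFone{} rules out the forced split.} At least six goods of $Z$ lie in $A_1\cup A_2$. By pigeonhole, one of these bundles, say $A_j$, contains at least three of them. Even after removing any single good, agent $3$ still values $A_j$ at least $v_3(A_j)-2/7\ge 4/7>2/7\ge u_3$. This violates \EFone{} for agent $3$ toward $j$, which contradicts the assumption that an \EFone{}+\EQone{} allocation exists.

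\textbf{Checks and main obstacle.} In Step 2 the \EQone{} condition is for agent $3$'s bundle against agent $\ell$'s utility, which is the direction stated in the definition. A $z$-good given to agent $1$ or $2$ adds nothing to their utility, because the instance is restricted additive. So Step 1 holds whatever happens to the $z$-goods. The only real difficulty is tuning the parameters: agent $3$ must have enough small goods that, after keeping at most one, the rest cannot be split between two bundles without one bundle violating \EFone{}. This requires $|Z|-1\ge 2\cdot 2+1$, so seven goods suffice. With five goods the remaining four can be split two and two, and the argument fails.
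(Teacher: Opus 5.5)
Your proof is correct and is essentially the paper's argument: the paper uses one good of base value $6$ approved by agents $1$ and $2$ and six goods of base value $1$ approved by agent $3$, then runs the same chain (one identical agent has utility $0$, \EQone{} caps agent $3$ at one approved good, pigeonhole puts three of her goods in another bundle, and \EFone{} fails). Your own threshold $|Z|-1\ge 5$ shows that six light goods already suffice, which is exactly the paper's choice, so your seventh good is harmless slack.
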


\begin{proof}
Consider $m=7$ goods: one \emph{heavy} good $h$ with base value $v(h)=6$, and six \emph{light} goods $\ell_1,\dots,\ell_6$ with $v(\ell_k)=1$.
The approval sets are
\[
\Gamma_1=\Gamma_2=\{h\},
\qquad
\Gamma_3=\{\ell_1,\dots,\ell_6\},
\]
each agent valuing the goods in her approval set at their base values and all other goods at $0$:
\begin{center}
\begin{tabular}{c|ccccccc}
 & $h$ & $\ell_1$ & $\ell_2$ & $\ell_3$ & $\ell_4$ & $\ell_5$ & $\ell_6$\\
\hline
$a_1$ & $6$ & $0$ & $0$ & $0$ & $0$ & $0$ & $0$\\
$a_2$ & $6$ & $0$ & $0$ & $0$ & $0$ & $0$ & $0$\\
$a_3$ & $0$ & $1$ & $1$ & $1$ & $1$ & $1$ & $1$
\end{tabular}
\end{center}
Every agent values the grand bundle at $6$ (the instance is normalized), the valuations have the form $v_i(o)\in\{0,v(o)\}$ (restricted additive), and the two distinct approval sets $\{h\}$ and $\{\ell_1,\dots,\ell_6\}$ are disjoint, hence laminar.

In any allocation, at least one of agents $a_1,a_2$ realizes utility $0$, since their common approval set contains the single good $h$.
To maintain \EQone{} against this zero-utility agent, agent $a_3$ can receive at most one good from her approval set $\Gamma_3=\{\ell_1,\ldots,\ell_6\}$.
Hence at least five light goods go to agents $a_1$ and $a_2$, so one of them receives at least three light goods.
Agent $a_3$'s value for that bundle is at least $3$, and at least $2$ after the removal of any single good, while her own utility is at most $1$; hence she \EFone{}-envies that agent.
Therefore no allocation is simultaneously \EFone{} and \EQone{}.
\end{proof}

\subsubsection{Chores}
The situation is considerably more favorable for chores. 
In particular, neither normalization nor a bound on the number of agents is required, and both fairness guarantees can be strengthened from their ``up to one'' versions to their ``up to any'' versions.

\begin{theorem}
\label{thm:binary_chores_efx_eqx}
For binary chores instances, an \EFX{}+\EQX{} allocation always exists and can be computed in polynomial time, even when the instance has an arbitrary number of agents and unnormalized valuations.
\end{theorem}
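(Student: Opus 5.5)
The plan is to separate the chores into those that are costless to someone and those that are costly to everyone, and then level the costly ones. Call a chore $c$ \emph{free} if $v_i(c)=0$ for some agent $i$, and \emph{universal} otherwise, so every agent values a universal chore at $-1$. Let $U$ be the set of universal chores, and write $|U|=qn+\rho$ with $0\le\rho<n$. The target allocation has the following form:
\begin{enumerate}
\item each free chore goes to an agent who values it at $0$;
\item every agent receives $q$ or $q+1$ universal chores, with exactly $\rho$ agents receiving $q+1$;
\item no agent receiving $q+1$ universal chores holds a free chore.
\end{enumerate}

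Suppose such an allocation exists. Then agent $i$'s disutility is exactly the number of universal chores she holds, so every realized disutility lies in $\{q,q+1\}$. Moreover, $v_i(A_j)\le -q$ for all $i,j$, since every bundle contains at least $q$ universal chores.

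The verification of \EFX{} and \EQX{} splits into two cases.
\begin{itemize}
\item If $i$ holds $q+1$ universal chores and no free chores, then removing any $c\in A_i$ gives $v_i(A_i\setminus\{c\})=-q$. This is at least $v_i(A_j)$ and at least $v_j(A_j)$ for every $j$.
\item If $i$ holds $q$ universal chores, then already $v_i(A_i)=-q$, which dominates $v_i(A_j)$ and $v_j(A_j)$. This holds whichever $c$ is removed, including a zero-valued one.
\end{itemize}
So the "for every $c$" clause is satisfied in both cases. Normalization is never used, because only the counts of universal chores enter the comparisons.

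The main obstacle is condition (3): we need the free chores to be coverable by the $n-\rho$ agents at level $q$. I would set this up as a bipartite problem. Choose a set $L$ of $n-\rho$ "low" agents such that every free chore has a zero-valuer in $L$; this is a set-cover-type condition, and in general it can fail.

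My first attempt at the failure case is to exploit slack that the analysis above ignores. A free chore $c$ held by agent $j$ adds $-1$ to $v_i(A_j)$ for every $i$ who dislikes it. Hence a "high" agent $i$ holding free chores only violates \EFX{} against low agents $j$ for which $v_i(A_j)=-q$ exactly, that is, $j$ holds nothing that $i$ dislikes beyond its $q$ universal chores. For \EQX{}, however, $i$ is compared with $v_j(A_j)=-q$ directly, so a high agent must not hold a zero-cost chore at all. This confirms that condition (3) is genuinely necessary for the high agents under this accounting.

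When covering by $n-\rho$ agents is impossible, I would instead lower the level. Take $\rho'$ smaller, or $q$ shifted, by reassigning some universal chores. Alternatively, observe that if fewer than $\rho$ agents can be made free-chore-free, then some free chores must sit with agents who would otherwise be high, and we can swap a universal chore from such an agent to an agent holding no free chore. I would run this as an alternating-path augmentation analogous to the flow step of \Cref{alg:level-scatter}, and prove termination by a potential counting high agents that hold free chores.

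If this augmentation argument breaks down, the fallback is a different algorithm. Process universal chores one at a time, always giving the next one to a minimum-disutility agent, with ties broken toward agents holding no free chores. Free chores are assigned lazily to the currently lowest agents who value them at $0$, and the invariant to maintain is exactly conditions (2) and (3). Polynomial running time follows in either version from at most $|M|$ assignments plus polynomially many augmentations.
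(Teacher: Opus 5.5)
Your check that any allocation satisfying (1)--(3) is \EFX{}+\EQX{} is correct. The gap is that the argument needs such an allocation to exist, and it need not. Worse, in the failure case it can happen that \emph{every} \EFX{}+\EQX{} allocation violates (1). So none of your repairs can work: shifting $q$ or $\rho$, swapping universal chores (which only relabels who is high), and the greedy fallback all keep free chores at zero-valuers and maintain (2)--(3).

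Take two agents and chores $u,c_1,c_2$ with $v_1(u)=v_2(u)=-1$, $v_1(c_1)=v_2(c_2)=0$ and $v_2(c_1)=v_1(c_2)=-1$, so $q=0$ and $\rho=1$. Condition (1) forces $c_1\in A_1$ and $c_2\in A_2$. Whoever then gets $u$, say agent~1, has $v_1(A_1\setminus\{c_1\})=-1<0=v_2(A_2)$, which violates \EQX{}. Yet $(\{c_1,c_2\},\{u\})$ is even \EF{}+\EQ{}: both agents are at $-1$ and value the other bundle at $-1$. Agent~1's cost comes from $c_2$, a free chore she dislikes.

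Separately, even when (1)--(3) is achievable, finding $L$ is genuine set cover. Given sets $S_1,\ldots,S_n$ and a bound $K$, let agent $i$ value exactly the chores of $S_i$ at $0$ and add $n-K$ universal chores. Then (1)--(3) is achievable iff $K$ of the sets cover the ground set, so this route would not give polynomial time either.

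The missing idea is that free chores must be usable as levelling chores for agents who dislike them. A chore disliked by every agent of a subset $\overline{N}$ behaves like a universal chore within $\overline{N}$, so ``universal'' has to be relativised as agents are settled. The paper does this as follows:
\begin{enumerate}
\item It splits $M^1$ into bundles of size $k$ or $k+1$.
\item It repeatedly lets an agent $i$ with unplaced zero-valued chores put \emph{all} of them into a bundle she values at exactly $-k$, recording $i$ as that bundle's tracker (its future owner). Afterwards every unplaced chore is disliked by every tracker, and the untracked agents are exactly as many as the poor bundles.
\item If at least $|R|$ chores remain, one goes to each rich bundle. All agents then end at $-(k+1)$, and every active agent values every bundle at most $-(k+1)$. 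Here high agents may hold zero-valued chores, contrary to your (3).
\item Otherwise the poor bundles are finalized to the untracked agents, and the leftovers become the $(k+1)$-st chores of some trackers' bundles. The procedure then recurses on the trackers with $M^1$ replaced by $M^1(\overline{N})$.
\end{enumerate}
Your shape (1)--(3) only covers the situation in which the first pass already places every free chore.
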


\paragraph{Notation.}
For binary chores we use the following notation.
The set of chores that have value $0$ for agent $i$ is denoted by $M_i^0:=\{c\in M:v_i(c)=0\}$.
We write $M^0:=\bigcup_{i\in N}M_i^0$ for the set of chores that are zero-valued for \emph{some} agent, and $M^1:=M\setminus M^0$ for the chores valued $-1$ by \emph{all} agents.
We also need these sets restricted to a subset of \emph{active} agents $\overline{N}\subseteq N$:
\[
\begin{aligned}
M^1(\overline{N})&:=\{c\in M:\ v_i(c)=-1\ \forall i\in\overline{N}\},\\
M^0(\overline{N})&:=M\setminus M^1(\overline{N}).
\end{aligned}
\]

\paragraph{Algorithm description.}
The algorithm first partitions $M^1$, the chores disliked by everyone, as evenly as possible into $n$ bundles, each containing either $k=\lfloor|M^1|/n\rfloor$ or $k+1$ chores; we call bundles of size $k$ \emph{rich} and bundles of size $k+1$ \emph{poor}.
The remaining chores (in $M^0$) are then assigned while preserving the invariant that no bundle's value drops below $-(k+1)$ from any relevant agent's perspective.

To achieve this, the algorithm repeatedly runs the \textsc{Chores-Subroutine} (Algorithm~\ref{alg:chores-subroutine}): whenever an agent $i$ still has unallocated zero-valued chores and values some bundle $B_j$ at exactly $-k$, the subroutine adds \emph{all} currently unallocated chores of $M_i^0$ to $B_j$; agent $i$'s value for $B_j$ is unchanged (still $-k$), and a tracker $t_j$ records that agent $i$ was the last agent to alter $B_j$.
After the subroutine terminates, every remaining unallocated chore is disliked by all agents recorded in some tracker.
If enough chores remain, one chore is added to every rich bundle, which yields \EF{}+\EQ{} among the currently active agents and an \EFX{}+\EQX{} completion overall.
Otherwise, the poor bundles are finalized and the algorithm recurses on the remaining rich bundles and active agents, after returning to the unallocated pool all previously assigned chores that are not universally disliked within the reduced active set.

We first prove three observations about properties of the \textsc{Chores-Subroutine} and the partial allocation obtained from it.

\begin{observation}\label{lem:tracker-bijection}
Suppose that, after an execution of \textsc{Chores-Subroutine}, the unallocated set $U$ is nonempty. Then every rich bundle has a nonzero tracker. Moreover, distinct rich bundles have distinct trackers. Consequently, the trackers of the rich bundles and the rich bundles are in bijection, and the number of untracked active agents equals the number of poor bundles.
\end{observation}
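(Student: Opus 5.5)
The plan is to read off each claim from how \textsc{Chores-Subroutine} operates. I will use three structural facts about the state at which the subroutine is run, and check each against the algorithm: the number of bundles equals the number of active agents; every chore in $M^1(\overline{N})$ is already allocated, so the unallocated set $U$ is contained in $M^0(\overline{N})$; and each rich (respectively poor) bundle initially consists of exactly $k$ (respectively $k+1$) chores that every active agent dislikes. The subroutine repeatedly picks an active agent $i$ with $M_i^0\cap U\neq\emptyset$ and a bundle $B_j$ with $v_i(B_j)=-k$. It moves all of $M_i^0\cap U$ into $B_j$ and sets $t_j\gets i$. It terminates when no such pair $(i,j)$ exists.

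First I show that poor bundles are never altered. A poor bundle contains $k+1$ chores disliked by all active agents, and chores are only ever added. Hence every active agent values it at most $-(k+1)<-k$ throughout, so it is never chosen and its tracker stays zero. Second, each agent acts at most once. After $i$ acts, $M_i^0\cap U=\emptyset$, and since $U$ only shrinks this remains true, so $i$ is never selected again. Because a tracker records the last agent who altered a bundle, and each agent alters at most one bundle, distinct altered bundles have distinct trackers. This gives injectivity.

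The main step is to show that every rich bundle has a nonzero tracker when $U\neq\emptyset$. Take any $c\in U$. Since $c\in M^0(\overline{N})$, there is an active agent $i$ with $v_i(c)=0$, so $M_i^0\cap U\neq\emptyset$ at termination. Suppose some rich bundle $B_j$ was never altered. Then it still consists of exactly $k$ universally disliked chores, so $v_i(B_j)=-k$. The pair $(i,j)$ would therefore still be eligible, contradicting termination. So every rich bundle was altered at least once and carries a nonzero tracker.

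Combining the three steps, the map from rich bundles to their trackers is injective. Its image is exactly the set of tracked agents, because poor bundles carry no tracker. This is the claimed bijection. Since the number of active agents equals the total number of bundles, which is the number of rich plus poor bundles, the number of untracked active agents equals the number of poor bundles.

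The obstacle I expect is not the argument itself but confirming the invariants across recursive calls. In particular, after chores are returned to the pool, every remaining bundle must again consist only of chores in $M^1(\overline{N})$ of the prescribed sizes, and $U\subseteq M^0(\overline{N})$ must hold. I would verify this directly from the recursion step: there, all chores that are not universally disliked by the reduced active set are returned to the pool, and only rich bundles are kept.
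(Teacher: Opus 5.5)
Your proposal is correct and follows essentially the same route as the paper's proof: one action per agent gives distinct trackers; an untracked rich bundle would still be worth $-k$ to an active agent who has a zero-valued chore left in $U$, which contradicts termination; and counting finishes. You are slightly more careful than the paper in two places: you show explicitly that poor bundles are never altered, which the paper's final count uses tacitly, and you use $U\subseteq M^0(\overline{N})$, which is what guarantees an \emph{active} zero-valuer, where the paper writes $M^0(N)$.

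On the invariant you flag, the return step alone only places rich bundles inside $M^1(\overline{N})$. To get exactly $k$ chores you also need that every chore the subroutine added to $B_j$ is zero-valued for its last tracker $t_j$, since each agent acting on $B_j$ must value all earlier additions at $0$. Because $t_j$ stays active, none of these chores survives the intersection with $M^1(\overline{N})$.
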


\begin{algorithm}[htp]
\begin{algorithmic}[1]
\small
\REQUIRE Active agents $\overline{N}\subseteq N$, target values
$-k,-(k+1)$, bundles $\{B_j\}_{j\in \overline{N}}$, unallocated set $U$
\ENSURE Updated bundles $\{B_j\}_{j\in \overline{N}}$, trackers
$\{t_j\}_{j\in \overline{N}}$, and unallocated set $U$
\STATE Initialize trackers $t_j \gets 0$ for all $j\in\overline{N}$.
\WHILE{$\exists\, i,j\in \overline{N}$ s.t. $M_i^0\cap U\ne\emptyset$ and
$v_i(B_j)=-k$}
        \STATE $B_j \gets B_j \cup (M_i^0\cap U)$
        \STATE $U \gets U \setminus M_i^0$
        \STATE $t_j \gets i$
\ENDWHILE
\RETURN $\{B_j\}_{j\in \overline{N}}$, $\{t_j\}_{j\in \overline{N}}$, $U$
\end{algorithmic}
\caption{\textsc{Chores-Subroutine}}
\label{alg:chores-subroutine}
\end{algorithm}

\begin{algorithm}[htp]
\begin{algorithmic}[1]
\small
\REQUIRE A binary chores instance. 
\ENSURE An \EFX{}+\EQX{} allocation $A=(A_1,\ldots,A_n)$.
\STATE Let $k := \lfloor |M^1|/n \rfloor$.
\STATE Partition $M^1$ into bundles $B_1,\ldots,B_n$ with
$|B_j|\in\{k,k+1\}$ for all $j\in[n]$.
\STATE Unallocated set $U \gets M^0$;\quad active agents
$\overline{N}\gets N$.
\STATE Rich bundles $R \gets \{j\in \overline{N} : |B_j|=k\}$ and poor
bundles $P\gets \{j\in \overline{N} : |B_j|=k+1\}$.
\STATE $(\{B_j\},\{t_j\},U) \gets
\textsc{Chores-Subroutine}(\overline{N},k,\{B_j\},U)$
\WHILE{$U\ne\emptyset$}
    \STATE Consider untracked active agents
    $F:=\{i\in \overline{N}:\ \nexists j\ \text{ s.t. } t_j=i\}$; and arbitrarily assign them the poor bundles, i.e.,
    $\{A_i\}_{i\in F}\gets \{B_j\}_{j\in P}$ (by \Cref{lem:tracker-bijection}, $|F| = |P|$).
    \IF{\textbf{(Case 1)} $|U|\ge |R|$}
        \STATE Add one chore from $U$ to each rich bundle
        $\{B_j\}_{j\in R}$; remove these chores from $U$.
        \STATE Assign rich bundles as $A_i\gets B_j$ if $t_j=i$.
        \STATE Distribute each remaining $c\in U$ to any agent $i\in\overline{N}$ with $v_i(c)=0$, by adding $c$ to $A_i$; set $U\gets\emptyset$, and \textbf{return} $A=(A_1, \ldots, A_n)$.
    \ELSE
        \STATE \textbf{(Case 2)} \hfill\textcolor{gray}{// $0<|U|<|R|$}
        \STATE Finalize the agents in $F$ and redefine the active
        agents as $\overline{N}\gets \overline{N}\setminus F$.
        Also relabel each remaining rich bundle by its tracker: for each $j\in R$ with $t_j=i$, rename $B_j$ as $B_i$. 
        \item[] \hfill\textcolor{gray}{// Active agents and bundles now share the index set $\overline{N}$.}
        \STATE Pick any $|U|$ bundles in $\overline{N}$ and give each bundle
        one item from $U$; these are the new poor bundles $P$. Redefine
        $R\gets \overline{N}\setminus P$.
        \item[] \hfill \textcolor{gray}{// $U$ is empty at this stage, but
        the next for-loop returns some items to $U$.}
        \FORALL{$j\in \overline{N}$}
           \STATE Return to $U$ all items previously added to $B_j$ from
           $M^0(\overline{N})$, i.e.,
           $U \gets U \cup (B_j\setminus M^1(\overline{N}))$;\ \
           $B_j \gets B_j\cap M^1(\overline{N})$.
           \hfill \textcolor{gray}{// By Observation~\ref{sobs:chores-2},
           no item added in line 15 is returned to $U$.}
        \ENDFOR
        \STATE $(\{B_j\}_{j\in \overline{N}},\{t_j\}_{j\in \overline{N}},U)
              \gets \textsc{Chores-Subroutine}(\overline{N},k,\{B_j\},U)$
    \ENDIF
\ENDWHILE
\STATE \textbf{Final assignment:}
        For each $j$ with $t_j\ne 0$, set $A_{t_j}\gets B_j$.
        Assign the remaining bundles to the remaining agents via any
        bijection.
\RETURN $A=(A_1,\ldots,A_n)$
\end{algorithmic}
\caption{\EFX{}+\EQX{} for Binary Chores}
\label{alg:ex-post-chores}
\end{algorithm}

\begin{proof}
Distinct bundles have distinct nonzero trackers because an agent can alter at most one bundle in \textsc{Chores-Subroutine}: after agent $i$ acts once, all currently unallocated chores that she values at zero are removed from $U$.

It remains to show that every rich bundle is tracked. At the beginning of each call to \textsc{Chores-Subroutine}, every rich bundle consists of exactly $k$ chores that are disliked by every active agent.
Suppose some rich bundle $B_j$ remains untracked after the subroutine. Since $U\neq\emptyset$, choose $c\in U$. By construction $U\subseteq M^0(N)$, so some active agent $i$ has $v_i(c)=0$. Since $B_j$ is untracked, it was never modified and $v_i(B_j)=-k$. Thus the pair $i,j$ still satisfies the while-loop condition of \textsc{Chores-Subroutine}, contradicting termination.

Hence all rich bundles have distinct trackers. Since $|N|=|R|+|P|$, exactly $|P|$ active agents are untracked.
\end{proof}

\begin{observation}\label{sobs:chores-1}
    Consider the bundles $\{B_j\}_{j\in \overline{N}}$ obtained after
    running the \textsc{Chores-Subroutine}.
    For every $i\in \overline{N}$ such that $M_i^0\cap U \ne \emptyset$, we
    have $v_i(B_j) \le -(k+1)$ for every bundle in
    $\{B_j\}_{j\in\overline{N}}$.
\end{observation}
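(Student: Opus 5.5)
The argument I would give rests on the termination condition of \textsc{Chores-Subroutine} together with a monotonicity fact about bundle values. Fix $i\in\overline{N}$ with $M_i^0\cap U\neq\emptyset$ after termination. The while loop has stopped, so no pair $(i,j)$ satisfies its condition. Since $M_i^0\cap U\neq\emptyset$, this means $v_i(B_j)\neq -k$ for every $j\in\overline{N}$. It therefore suffices to show $v_i(B_j)\le -k$ for every bundle. Values are integers, so $v_i(B_j)\le -k$ and $v_i(B_j)\neq -k$ together give $v_i(B_j)\le -(k+1)$.

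For the bound $v_i(B_j)\le -k$, I would look at the state of the bundles when the subroutine is called. At that point every bundle $B_j$ contains at least $k$ chores from $M^1(\overline{N})$, and these are disliked by every active agent. In the initial call this holds because $B_j\subseteq M^1$ and $|B_j|\in\{k,k+1\}$. In a recursive call (Case 2) it holds because, after the return step, $B_j\cap M^1(\overline{N})$ still contains the original $k$ or $k+1$ universally disliked chores for the reduced active set; the added item from line 15 is fine either way. So $v_i(B_j)\le -k$ for all active $i$ at the start of the call.

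During the subroutine, chores are only ever added to bundles, never removed. With binary chores, adding a chore can only keep or lower $v_i(B_j)$. Hence $v_i(B_j)\le -k$ persists to termination, and the argument above finishes the proof.

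The main obstacle is the invariant in the recursive call: that after Case 2's relabeling and the return of $M^0(\overline{N})$-items, each bundle still contains at least $k$ chores in $M^1(\overline{N})$. I would verify this first. Each surviving rich bundle originally had $k$ chores from $M^1\subseteq M^1(\overline{N})$, and the return step removes only chores outside $M^1(\overline{N})$, so those $k$ chores remain.
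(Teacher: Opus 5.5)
Your proof is correct and follows the same route as the paper. The while-loop termination condition rules out $v_i(B_j)=-k$. Every bundle enters the subroutine with at least $k$ chores disliked by all active agents and only gains chores, so $v_i(B_j)\le -k$, and integrality then gives $v_i(B_j)\le -(k+1)$. Your check that this invariant survives the Case~2 relabeling and return step usefully expands what the paper states in a single parenthetical remark.
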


\begin{proof}
    If there existed a bundle $B_j$ with $v_i(B_j) = -k$, then, by the condition of the while-loop in the subroutine, we would have $M_i^0\cap U = \emptyset$, a contradiction. (No bundle can have value above $-k$ for any agent, since the bundles start with at least $k$ universally disliked chores.)
\end{proof}

Next, notice that an agent $i$ alters at most one bundle in the \textsc{Chores-Subroutine}: after her first iteration, \emph{all} items that are zero-valued for $i$ (i.e., all of $M^0_i\cap U$) are assigned, and the while-loop condition is never again satisfied for $i$.
Thus, unless a tracker equals $0$, no two trackers are marked with the same agent.

\begin{observation}\label{sobs:chores-2}
   Consider the bundles $\{B_j\}_{j\in \overline{N}}$ obtained after running the \textsc{Chores-Subroutine}, and consider an agent $i\in \overline{N}$ such that $t_j=i$ for some $j$.
   Then $v_i(c)=-1$ for every $c\in U$.
\end{observation}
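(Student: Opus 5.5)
The plan is to argue directly from the mechanics of \textsc{Chores-Subroutine}, by tracing the single moment at which agent $i$ became a tracker and then controlling how $U$ can change afterwards.

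Fix an agent $i$ with $t_j=i$ for some bundle $B_j$. By the loop condition, $i$ was recorded as a tracker only in an iteration in which the pair $(i,j)$ was selected. In that iteration the subroutine executes $U\gets U\setminus M_i^0$, so immediately afterwards $M_i^0\cap U=\emptyset$.

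Next I would show that this emptiness is preserved for the rest of the call. Within \textsc{Chores-Subroutine}, the set $U$ only ever shrinks: every iteration removes items from $U$ and never adds any. Hence $M_i^0\cap U$ stays empty until the subroutine terminates. The observation concerns the bundles and $U$ as returned by this very run, and the trackers are reinitialized to $0$ at the start of each call, so any $t_j=i$ was set during the current call. Items returned to $U$ in the for-loop of Case~2 are added only before a fresh call, and are therefore irrelevant here. Consequently every $c\in U$ at termination lies outside $M_i^0$, which means $v_i(c)\neq 0$. Since the valuations are binary chores, this gives $v_i(c)=-1$.

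The only real point of care is the scoping issue: the same agent might have been a tracker in an earlier call, and items could have been re-added to $U$ between calls. This is handled by the reinitialization $t_j\gets 0$ and by the monotone shrinking of $U$ within a single call. I would state both facts explicitly, and also note that $i\in\overline{N}$ is immaterial to the argument.
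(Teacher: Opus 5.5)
Your proof is correct and takes the same approach as the paper's: both fix the iteration in which $t_j$ was set to $i$, observe that $M_i^0\cap U$ is emptied at that point, and use that the subroutine never adds items back to $U$. Your remark that trackers are reset at the start of each call, so that iteration belongs to the current run after any Case~2 returns to $U$, makes explicit a scoping point the paper leaves implicit.
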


\begin{proof}
    Consider the iteration in which $t_j$ was set to $i$.
    In this iteration, all items in $U$ that are zero-valued for $i$ (i.e., $M_i^0\cap U$) are added to the bundle $B_j$.
    Thus every item remaining in $U$ afterwards has value $-1$ for $i$, and the subroutine never adds items back to $U$.
\end{proof}

\begin{proof}[Proof of \Cref{thm:binary_chores_efx_eqx}]
First, we show that each step in the algorithm is well-defined. Then, we argue termination and the two fairness guarantees in the output allocation.

\smallskip\noindent\textbf{Well-defined.}
Whenever $U\ne \emptyset$, \Cref{lem:tracker-bijection} shows that every rich bundle has a distinct tracker and that exactly $|P|$ active agents are untracked. Thus, the poor bundles can be assigned bijectively to the untracked agents. In Case~2, the remaining active agents are exactly the trackers of the rich bundles; thus, relabeling the rich bundles by the indices of new active agents is well-defined.

\smallskip\noindent\textbf{Termination.}
Each iteration of the while-loop in the \textsc{Chores-Subroutine} removes the nonempty set $M_i^0\cap U$ from $U$, strictly decreasing $|U|$; thus the subroutine terminates after at most $|M|$ iterations.
For the main algorithm, observe that whenever Case~1 (line~8) applies, $U$ is set to $\emptyset$ and the algorithm terminates.
It therefore suffices to bound the number of Case~2 executions.
In Case~2, the active agent set $\overline{N}$ is updated to $\overline{N}\setminus F$, and hence decreases by $|F| = |P|$. The only possible Case~2 execution with $P=\emptyset$ is the first one; since Case~2 has $|U|>0$, it creates a nonempty new set of poor bundles. Thus, after at most one Case~2 execution that does not decrease $|\overline{N}|$, every subsequent Case~2 execution strictly decreases it. Hence Case~2 is executed at most $n+1$ times. The algorithm therefore terminates after polynomially many iterations.

\smallskip\noindent\textbf{A structural observation.}
Every item allocated in line~15 belongs to $M^1(\overline{N})$.
Indeed, line~7 assigns the current poor bundles to agents $F=\{i\in\overline{N}:\nexists j \text{ s.t. } t_j=i\}$.
After the agents in $F$ are finalized and removed in Case~2, the remaining active agents are exactly those with $t_j=i$ for some $j$.
For each such agent $i$, Observation~\ref{sobs:chores-2} gives $v_i(c)=-1$ for all $c\in U$.
Every item of $U$ is therefore a chore of value $-1$ to every agent in the current $\overline{N}$, i.e., $U\subseteq M^1(\overline{N})$, so the items drawn from $U$ in line~15 lie in $M^1(\overline{N})$.

\smallskip\noindent\textbf{If the algorithm terminates in Case~1.}
Suppose the algorithm terminates through Case~1.
After one chore from $U$ is added to each rich bundle, every currently active agent has realized utility $-(k+1)$. Moreover, every active agent values every active bundle at most $-(k+1)$: for a tracked agent this follows from \Cref{sobs:chores-2}, while for an untracked agent who has a zero-valued chore in $U$ it follows from \Cref{sobs:chores-1}; if she has no such chore, she dislikes every chore added in Case~1.

Now consider an agent $i$ finalized in an earlier Case~2 execution.
Her bundle was a poor bundle and therefore consists of exactly $k+1$ chores, each valued $-1$ by $i$. Hence, for every $c\in A_i$, $ v_i(A_i\setminus\{c\})=-k$.
Every bundle contains at least the original $k$ chores from $M^1$, which are valued $-1$ by every agent, and therefore $ v_i(A_j)\le -k$ for every $i,j$.
Thus every previously finalized agent satisfies \EFX{} as well.
Moreover, every previously finalized poor bundle consists of $k+1$ chores that were disliked by every agent who remained active at the time it was finalized. Since the current active-agent set is a subset of that set, every currently active agent values every previously finalized bundle at $-(k+1)$.

Finally, the remaining chores distributed in line~11 are zero-valued to their recipients. Hence they do not change any recipient's realized utility and can only weakly decrease how other agents value that recipient's bundle. Consequently, all agents have realized utility $-(k+1)$, so \EQ{} (and hence \EQX{}) holds, and the \EFX{} guarantee established above is preserved. Therefore the final allocation is \EFX{}+\EQX{}.

For the rest of the proof, we assume that the algorithm terminates without executing Case~1.
\smallskip\noindent\textbf{\EQX{}.}
Suppose $v_i(A_i)=-k$ and $v_j(A_j)=-(k+1)$ for two agents $i,j$ (by construction all realized utilities lie in $\{-k,-(k+1)\}$).
We show that $A_j$ consists of exactly $k+1$ items, each of value $-1$ to $j$, which suffices to ensure $A$ satisfies \EQX{}.
Bundle $B_j$ became poor either in line~2 (initial partition of $M^1$) or by acquiring one item in line~15; by the structural observation, this item lies in $M^1(\overline{N})$ and hence has value $-1$ for $j$.
The for-loop in lines~17-19 removes every item not in $M^1(\overline{N})$---in particular every item that is zero-valued for its tracked recipient---from every active bundle before further items are added, and the items distributed in line~11 are always zero-valued for their recipients.
Hence $A_j$ consists of exactly $k+1$ items, each valued $-1$ by $j$, and for every $c\in A_j$, \[v_j(A_j\setminus\{c\})=-k=v_i(A_i),\]
establishing \EQX{}.

\smallskip\noindent\textbf{\EFX{}.}
Every final bundle contains at least $k$ chores from the original set $M^1$, so $ v_i(A_j)\le -k$ for all $i,j$.
Since every realized utility lies in $\{-k,-(k+1)\}$, if agent $i$ envies agent $j$, necessarily $v_i(A_i)=-(k+1)$ and $v_i(A_j)=-k$.
We show $|A_i|=k+1$ with every item of $A_i$ valued $-1$ by $i$, which suffices to ensure that $A$ satisfies \EFX{}.
The for-loop in lines~17-19 removes every zero-valued item from every active bundle, so $A_i$ consists either of $k+1$ items from line~2, or of $k$ items from line~2 plus one item from line~15.
The line-2 items lie in $M^1$ (value $-1$ to all agents), and by the structural observation the line-15 item, if present, lies in $M^1(\overline{N})$ (also value $-1$ to $i$).
Hence every item in $A_i$ has value $-1$ for $i$, and for every $c\in A_i$,
\[
v_i(A_i\setminus\{c\})=-k=v_i(A_j),
\]
establishing \EFX{}.
\end{proof}

\section{Ex-ante and Ex-post Guarantees}\label{sec:exante-expost}

In this section, we seek randomized allocations that are simultaneously \EF{}+\EQ{} ex-ante while retaining approximate fairness ex-post.

For two agents, we obtain ex-post \EFone{}(\Cref{sec:prob_twoagents}).
These results establish a sharp contrast between goods and chores. For normalized binary goods, our general ex-post guarantee is \EQone{} (\Cref{sec:prob_goods}), and \Cref{cor:binary-113-lottery} rules out a universal strengthening to \EFone{}+\EQone{}. For chores, the ex-post guarantee can be strengthened to \EFone{}+\EQone{} without normalization, even for restricted-additive valuations (\Cref{sec:prob_chores}).

\subsection{Two-Agent Instances}
\label{sec:prob_twoagents}

The key is to ensure that the support of the randomized allocation consists only of \EFone{} allocations, regardless of which agent receives which bundle. We achieve this through a discrete version of exact division (consensus halving); for two agents, it always exists and can be computed in polynomial time \citep{kyropoulou2020almost}.

\begin{theorem}\label{prop:ex-ante-EQQ-ex-post-EF1-2-agents} 
    For two-agent normalized instances (goods or chores), an ex-ante \EF{}+\EQ{} and ex-post \EFone{} randomized allocation always exists.
\end{theorem}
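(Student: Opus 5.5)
Use the discrete consensus-halving result of \citet{kyropoulou2020almost}: for two agents with additive valuations (goods or chores), there is a polynomial-time computable partition $(X,Y)$ of $M$ that is \emph{\EFone{} for both agents under either assignment}. Concretely, for each $i\in\{1,2\}$, agent $i$ does not \EFone{}-envy when holding $X$ against $Y$, nor when holding $Y$ against $X$. For goods this means $|v_i(X)-v_i(Y)|$ is bounded by one item's value in the larger side; for chores, the analogous statement holds with removal from one's own bundle.

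The randomized allocation is the uniform lottery over $A^{(1)}=(X,Y)$ and $A^{(2)}=(Y,X)$. Ex-post \EFone{} holds immediately, since both support allocations are \EFone{} by the defining property of the partition.

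For the ex-ante guarantees, agent $i$'s expected utility is $\tfrac12 v_i(X)+\tfrac12 v_i(Y)=\tfrac12 v_i(M)$. Her expected value for the other agent's bundle is also $\tfrac12 v_i(Y)+\tfrac12 v_i(X)=\tfrac12 v_i(M)$. So expected envy is exactly zero, which gives ex-ante \EF{} for goods and chores alike. Normalization, $v_1(M)=v_2(M)$, then makes the two expected utilities equal, giving ex-ante \EQ{}. This is the only place normalization is used.

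The main point requiring care is the first step: confirming that the cited result delivers a partition that is \EFone{} for \emph{both} agents under \emph{both} orientations, including for chores. If the citation is stated only for goods, I would reduce chores to it by applying the goods result to costs $-v_i$. For costs, ``\EFone{} up to removing an item from the other bundle'' translates into ``\EFone{} up to removing an item from one's own bundle'' after swapping the roles of the two sides. This works because, under either orientation, the partition's two-sided guarantee is symmetric in $X$ and $Y$.

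Failing that, I would construct the partition directly. Run a moving-knife style procedure over an ordering of the items: both agents' balance functions $v_i(X)-v_i(Y)$ change sign as items move from $Y$ to $X$. A discrete intermediate-value argument then yields a partition where each agent's imbalance is at most one item in absolute value.
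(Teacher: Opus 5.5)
Your proposal is correct and follows essentially the same route as the paper: you take the two-sided \EFone{} partition of \citet{kyropoulou2020almost} and the uniform lottery over $(X,Y)$ and $(Y,X)$, compute $\mathbb{E}[v_i(\text{own})]=\mathbb{E}[v_i(\text{other})]=v_i(M)/2$, use normalization for ex-ante \EQ{}, and handle chores by applying the goods result to the costs $-v_i$ with the two sides swapped, exactly as the paper does. The moving-knife fallback is unnecessary; as sketched, a single sweep does not by itself make both agents near-balanced at the same partition, so you can drop it.
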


\begin{proof}
We first consider goods.
By the Exact1 result of \citet{kyropoulou2020almost}, there exists a partition $\{X,Y\}$ of $M$ such that \emph{both} agents consider \emph{both} bundles to be \EFone{}; that is, for every agent $i\in\{1,2\}$ and every side $Z\in\{X,Y\}$, if agent $i$ receives $Z$ then there is a good $g$ in the other side with $v_i(Z)\ge v_i((M\setminus Z)\setminus\{g\})$.
Consider the uniform distribution over the two allocations $(X,Y)$ and $(Y,X)$.
Every allocation in the support is \EFone{} by the choice of the partition, so the randomized allocation is ex-post \EFone{}.
For the ex-ante guarantees, each agent receives each side with probability $1/2$, so for both agents $i$ and both bundles $Z$,
\[
\mathbb{E}[v_i(\text{own bundle})]
=\frac{v_i(X)+v_i(Y)}{2}
=\frac{v_i(M)}{2},
\]
and the same computation applies to the other agent's bundle.
By normalization, $v_1(M)=v_2(M)$, so both agents obtain the same expected utility (ex-ante \EQ{}), and each agent's expected value for the other agent's bundle equals the expected value of her own (ex-ante \EF{}).

For chores, apply the goods argument to the nonnegative cost functions $d_i=-v_i$, which are additive and normalized.
The resulting partition $\{X,Y\}$ satisfies, for each agent $i$ and each side $Z$,
\[
d_i(Z)\ \ge\ d_i\big((M\setminus Z)\setminus\{c\}\big) \quad\text{for some }c\in M\setminus Z.
\]
Written for the orientation in which agent $i$ receives the side $M\setminus Z$, this states precisely that there is a chore $c$ in her own bundle with $v_i\big((M\setminus Z)\setminus\{c\}\big)\ge v_i(Z)$, i.e., chores-\EFone{}.
As $Z$ ranges over both sides, both orientations are \EFone{} for both agents, and the uniform distribution over the two orientations is ex-post \EFone{} and, exactly as above, ex-ante \EF{}+\EQ{}.
\end{proof}
The ex-post guarantee in \Cref{prop:ex-ante-EQQ-ex-post-EF1-2-agents} cannot generally be strengthened to \EFX{}. See the counter-example in \Cref{sec:appendix-randomized}.

\subsection{Binary Goods}\label{sec:prob_goods}

For general additive valuations, agent-independent \EFone{} partitions do not extend beyond two agents, as such partitions may not exist for $n\geq3$.
We therefore turn to binary valuations. In this setting, an allocation satisfying ex-ante \EF{} and ex-post \EFone{} can be obtained by computing a fractional maximum Nash welfare (MNW) solution and implementing it as a randomized allocation supported on deterministic MNW allocations \citep{halpern2020fair}.

However, MNW solutions provide no equitability guarantees, either ex-ante or ex-post, since equitability may require sacrificing welfare. This is in contrast to MNW, which maximizes the geometric mean of agents' values, and therefore yields Pareto-optimal allocations.
For normalized binary goods, the uniform fractional allocation is simultaneously \EF{} and \EQ{} ex-ante. We show that it can be implemented as a distribution over deterministic allocations in which every agent receives either the floor or ceiling of her fractional approved utility; hence every realization is \EQone{}. The complete proof, based on total unimodularity, follows.

\begin{theorem}\label{thm:binary-lottery}
For normalized binary goods instances with an arbitrary number of agents, a randomized allocation that is ex-ante \EF{}+\EQ{} and ex-post \EQone{} always exists and can be computed in polynomial time.
\end{theorem}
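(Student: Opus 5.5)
Let $k$ be the common approval count, $\Gamma_i$ agent $i$'s approval set, and $J$ the goods valued by nobody. I will work with the \emph{uniform approved-fractional} allocation: each good $g\notin J$ is split equally among its approvers, $x_{ig}=1/d_g$ for $i$ with $v_i(g)=1$, where $d_g=|\{i:v_i(g)=1\}|$, and $x_{ig}=0$ otherwise. Goods in $J$ are given to a fixed agent, since they are worthless to everyone. Agent $i$'s fractional utility is then $u_i:=\sum_{g\in\Gamma_i}1/d_g$.

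This $x$ need not be ex-ante \EQ{}: normalization fixes $|\Gamma_i|=k$ but not $u_i$. My first step is therefore to replace the naive uniform split with a fractional allocation $x$ that is both \EF{} and \EQ{}. The ``uniform fractional allocation'' in the preamble suggests the intended one is $x_{ig}=1/n$ for \emph{every} good. Under it, each agent's expected value for any bundle is $v_i(M)/n=k/n$, so it is exactly \EF{}+\EQ{} ex-ante by normalization. I will adopt this choice, so agent $i$'s approved fractional utility is $\sum_{g\in\Gamma_i}x_{ig}=k/n$.

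Next I write $x$ as a convex combination of integral allocations $A$ such that, for every $i$,
\[
|A_i\cap\Gamma_i|\in\{\lfloor k/n\rfloor,\lceil k/n\rceil\}.
\]
Every realized utility then lies in two consecutive integers, which gives ex-post \EQone{} by \eqref{eq:binary-fairness-cap}. To do this I set up the bipartite agent--good polytope. Its constraints are: $\sum_i y_{ig}=1$ for each good; $\lfloor k/n\rfloor\le\sum_{g\in\Gamma_i}y_{ig}\le\lceil k/n\rceil$ for each agent; and $0\le y\le1$. It is also convenient to add $\lfloor m/n\rfloor\le\sum_g y_{ig}\le\lceil m/n\rceil$, though this is not needed. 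The approved-goods constraints and the all-goods constraints form two nested families (a laminar family) on each agent's row, while the good constraints form the other side. The constraint matrix is therefore the incidence matrix of a bipartite structure with a laminar family on one side, and such matrices are totally unimodular (equivalently, this is a flow network: source $\to$ goods, good $g\to$ node $(i,\text{approved})$ or $(i,\text{unapproved})$, both $\to i\to$ sink, with interval capacities). Since $x$ lies in this polytope, it is a convex combination of its integral vertices.

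For polynomial time I will use the standard Carathéodory/decomposition argument. Repeatedly find an integral vertex in the face that contains the current point, via a flow computation on the support of the residual fractional solution. Subtract the largest feasible multiple of it, which zeroes a new coordinate or tightens a new constraint. This gives at most $O(nm)$ allocations, exactly as in Birkhoff--von Neumann. Equivalently, I can invoke the bihierarchy decomposition theorem of Budish et al.

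The main obstacle is verifying total unimodularity, or the flow formulation, with the two-level per-agent constraints. I will handle it with the explicit flow network above, where integrality of flows with lower and upper bounds gives the result directly. If the extra all-goods constraint causes trouble, I will drop it; it is not needed for \EQone{}.
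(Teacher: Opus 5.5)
Your proposal is correct and follows essentially the paper's own proof. Both use the uniform point $x_{ig}=1/n$. It lies in the totally unimodular bipartite polytope whose only constraints are a good-assignment equality for each good and bounds $\lfloor k/n\rfloor\le\sum_{g\in\Gamma_i}x_{ig}\le\lceil k/n\rceil$ for each agent. Decomposing $x$ into integral vertices preserves the $1/n$ marginals, which gives ex-ante \EF{}+\EQ{}, and every support allocation has realized utilities in two consecutive integers, which gives ex-post \EQone{}. The optional all-goods bounds and the explicit flow network are harmless but unnecessary, and the approved-fractional split you first considered can simply be dropped.
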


\begin{proof}
Let $\Gamma_i=\{g\in M:v_i(g)=1\}$ be agent $i$'s approved goods.
By normalization, all approval sets have the same size; write $k:=|\Gamma_i|$ for this common value.
Set $\ell=\lfloor k/n\rfloor$ and $u=\lceil k/n\rceil$, and consider the polytope
\[
\mathcal{P}=\left\{x\ge0:\
\begin{aligned}
&\textstyle\sum_i x_{ig}=1 &&(g\in M),\\
&\textstyle\ell\le\sum_{g\in\Gamma_i}x_{ig}\le u &&(i\in N)
\end{aligned}
\right\}.
\]
Its constraint matrix is the node-edge incidence matrix of a bipartite graph whose two sides are the goods and the agents: each variable $x_{ig}$ has a coefficient $1$ in the row of good $g$, and a coefficient $1$ in the row of agent $i$ if and only if $g\in\Gamma_i$ (columns of non-approved pairs have only their good-row entry).
Such matrices are totally unimodular, and all bounds are integral, so every vertex of $\mathcal{P}$ is integral.
An integral point of $\mathcal{P}$ assigns each good to exactly one agent and gives every agent either $\ell$ or $u$ approved goods; hence every vertex corresponds to a deterministic allocation whose realized utilities all lie in $\{\ell,u\}$.

The equal-share point $x^*$ with $x^*_{ig}=1/n$ for all $i,g$ lies in $\mathcal{P}$, because each approved degree equals $\sum_{g\in\Gamma_i}1/n=k/n\in[\ell,u]$.
A standard bipartite-flow decomposition expresses $x^*$, in polynomial time, as a convex combination of polynomially many integral vertices of $\mathcal{P}$.
Take this convex combination as the randomized allocation.

Because the decomposition preserves $x^*$, every agent receives every good with probability exactly $1/n$.
Thus, from any agent $i$'s perspective, every random bundle (her own and every other agent's) has expected value $k/n$; hence the randomized allocation is ex-ante \EF{}, and the expected realized utilities are all equal to $k/n$, so it is also ex-ante \EQ{}.

Finally, consider any allocation in the support and two agents $i,j$.
Realized utilities lie in $\{\ell,u\}$ with $u-\ell\le1$.
If $v_i(A_i)\ge v_j(A_j)$, the \EQone{} condition for the pair $(i,j)$ holds after removing any good (or vacuously if $A_j=\emptyset$).
Otherwise $v_i(A_i)=\ell$ and $v_j(A_j)=u=\ell+1$; then $A_j$ contains a good $g$ approved by $j$, and $v_j(A_j\setminus\{g\})=u-1=\ell\le v_i(A_i)$.
Hence every support allocation is \EQone{}.
\end{proof}

\begin{corollary}
\label{cor:binary-113-lottery}
There is a normalized binary goods instance with 113 agents for which no randomized allocation is ex-post \EFone{}+\EQone{}, even without any ex-ante requirements.
\end{corollary}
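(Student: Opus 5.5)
The plan is to reduce the corollary directly to \Cref{thm:binary-113-counterexample}, using the instance constructed there. Take the normalized binary goods instance with $113$ agents and $341$ goods from that theorem: $31$ nonzero vectors of $\mathbb{F}_2^5$, eleven goods per label, $83$ agents of type $a^\star$ and one agent of every other type. We argue by contradiction. Suppose some randomized allocation $\bm{A}$, given as a distribution over deterministic allocations $A^{(1)},\dots,A^{(k)}$ with positive probabilities, is ex-post \EFone{}+\EQone{}. By the definition of ex-post fairness in the preliminaries, every allocation $A^{(\ell)}$ in the support is \EFone{} and \EQone{}.

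The support is nonempty, since the probabilities sum to one. Also, each $A^{(\ell)}$ is a deterministic allocation, that is, an $n$-partition of $M$, so it is complete. Hence any $A^{(1)}$ in the support would be a complete \EFone{}+\EQone{} allocation of this instance. This contradicts \Cref{thm:binary-113-counterexample}, and so no such randomized allocation exists.

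The argument never uses expected utilities, so the conclusion holds with no ex-ante requirement at all.

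The only point needing care is completeness. The theorem rules out \emph{complete} allocations, so we must confirm that the paper's notion of allocation, and therefore every support allocation, is a full partition of $M$. The preliminaries define it that way, so nothing more is needed.
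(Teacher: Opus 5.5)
Your proposal is correct and matches the paper's one-line argument: every allocation in the support of an ex-post \EFone{}+\EQone{} lottery would itself be a complete \EFone{}+\EQone{} allocation of the 113-agent instance, contradicting \Cref{thm:binary-113-counterexample}. Your remarks that the support is nonempty and that deterministic allocations are complete partitions make explicit what the paper's proof leaves implicit.
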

\begin{proof}
Every allocation in the support would have to satisfy \EFone{}+\EQone{}, but \Cref{thm:binary-113-counterexample} admits no such allocation.
\end{proof}
Thus the ex-post \EQone{} guarantee of \Cref{thm:binary-lottery} cannot universally be strengthened to \EFone{}+\EQone{} for normalized binary goods. This impossibility does not affect the laminar case below.

\paragraph{Laminar binary goods.}
Under the laminar structure of \Cref{thm:laminar_binary_goods}, however, the ex-post guarantee \emph{can} be strengthened to \EFone{}+\EQone{}.
As in the construction of \Cref{thm:binary-lottery}, ex-ante equitability again forces us to hand goods to agents who do not value them.
Under laminarity, however, any two approval sets are identical or disjoint, so we can cut every approval set into the same number of balanced blocks, let each group of identical agents rotate cyclically through the blocks of its own set, and give the surplus blocks to fixed outside agents, who value them at zero; see \Cref{sec:appendix-randomized-laminar-goods} for the complete proof.

\begin{theorem}\label{thm:laminar-binary-lottery}
For normalized binary goods instances with laminar approval sets, an ex-ante \EF{}+\EQ{} and ex-post \EFone{}+\EQone{} randomized allocation always exists and can be computed in polynomial time, even when the instance has an arbitrary number of agents.
\end{theorem}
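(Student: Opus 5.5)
We will build the lottery from the group structure used in \Cref{thm:laminar_binary_goods}. By normalization and laminarity, the agents split into groups $N_1,\ldots,N_q$ of identical agents. Their approval sets $\Gamma^{(1)},\ldots,\Gamma^{(q)}$ are pairwise disjoint and all have size $k$. Goods valued by nobody form a junk set $J$.

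The target is a randomized allocation in which every agent receives every good with probability exactly $1/n$. As in \Cref{thm:binary-lottery}, each agent then values every random bundle (her own and everyone else's) at $k/n$ in expectation. This gives ex-ante \EF{}+\EQ{} at once, and we can spread $J$ uniformly at random without affecting anything. The ex-post requirement is the stronger invariant from \Cref{thm:laminar_binary_goods}. In every support allocation, each agent's realized utility should lie in $\{\lfloor k/n\rfloor,\lceil k/n\rceil\}$, and each bundle should contain at most $\lceil k/n\rceil$ goods from any single approval set. Given this invariant, \EQone{} holds because utilities differ by at most one. \EFone{} holds because $v_i(A_j)\le\lceil k/n\rceil\le v_i(A_i)+1$, and with binary integer values any envy can be removed by deleting one approved good.

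\textbf{Construction.} Cut each $\Gamma^{(\ell)}$ into $n$ blocks $D^{(\ell)}_0,\ldots,D^{(\ell)}_{n-1}$ of sizes $\lfloor k/n\rfloor$ or $\lceil k/n\rceil$. Use the same size pattern for every group, so $|D^{(\ell)}_c|$ depends only on $c$. Draw a uniformly random shift $\sigma\in\{0,\ldots,n-1\}$. Number all agents globally by $0,\ldots,n-1$, and give agent $a$ the block $D^{(\ell)}_{a+\sigma \bmod n}$ of every approval set $\Gamma^{(\ell)}$.

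\textbf{Verification.}
\begin{itemize}
\item Each good lies in exactly one block. Under the uniform shift, each block goes to each agent with probability $1/n$.
\item Each agent receives exactly one block from each approval set, hence at most $\lceil k/n\rceil$ goods of any approval set.
\item Her realized utility is the size of the single block of her own set that she receives, which lies in $\{\lfloor k/n\rfloor,\lceil k/n\rceil\}$.
\end{itemize}
This literally realizes the idea of rotating cyclically through blocks while outsiders take blocks they value at zero. All $n$ blocks of every set are handed out in every realization, so the allocation is complete. The lottery has support size $n$, so it is computable in polynomial time.

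\textbf{Main obstacle.} The delicate point is making sure the block sizes interact correctly with realized utilities. With a common size pattern, the argument above goes through: each agent receives exactly one block of her own set, so her utility is one block size. If instead one insisted on a group-specific cyclic rotation where agents could receive several blocks of their own set, utilities could exceed $\lceil k/n\rceil$. The global single-block-per-set scheme avoids this. I would double-check the degenerate case $k<n$, where some blocks are empty. Utilities are then $0$ or $1$, and every bundle contains at most one approved good per set, so the argument still holds.
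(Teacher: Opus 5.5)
Your construction is correct and is essentially the paper's block-rotation scheme with $n$ blocks per approval set: your global numbering plays the role of both the labels $\sigma_\ell$ and the filler injections $\varphi_\ell$. The paper instead cuts each set into only $r=\max_\ell n_\ell$ blocks, which needs explicit fillers but gives weakly higher utilities ($k/r\ge k/n$ in expectation) and support size $r$. One small fix: assign the junk goods $J$ by a fixed partition rather than ``uniformly at random,'' so that the support really has size $n$ as you claim.
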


In \Cref{sec:expost-limits} we show that the ex-post guarantee cannot be strengthened to \EFX{}, even for laminar approval sets and even when we do not impose any ex-post equitability requirement or ex-ante envy requirement.

\subsection{Chores: Escaping Normalization for Binary Valuations and Beyond}
\label{sec:prob_chores}

For chores, stronger guarantees are available ex-post, and they require neither normalization nor any bound on the number of agents.
This was established by \citet{sun2025randomized} through their group-strategyproof (in-expectation) mechanism \textsc{RandChore}, which for $1$-restricted additive costs---precisely the restricted-additive chores of our model---returns a randomized allocation that is ex-ante \EF{}, \EQ{} and \textsc{PROP} and ex-post \EFone{}, \EQone{} and \textsc{PROP1}, while additionally being ex-ante and ex-post \PO{}.
Restricting their guarantee to the two notions we study yields the following statement.

\begin{theorem}[\citealp{sun2025randomized}, Theorem~4.4]\label{thm:chores-binary-ex-ante-ex-post}
    For restricted-additive chores instances, not necessarily normalized, an ex-ante \EF{}+\EQ{} and ex-post \EFone{}+\EQone{} randomized allocation can be computed in polynomial time for arbitrary number of agents.
\end{theorem}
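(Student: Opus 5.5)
The plan is to split the chores by whether someone can absorb them for free. Write $d(c)\ge 0$ for the base cost, so $v_i(c)\in\{0,-d(c)\}$. Let $M^0$ be the chores that at least one agent values at $0$, and let $M^1=M\setminus M^0$ be the chores every agent values at $-d(c)$.

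Each $c\in M^0$ will be assigned deterministically to one fixed agent who values it at zero. This assignment is the same in every allocation of the support. On $M^1$ the valuations are identical (all equal to $-d$), so the residual problem is a single-valuation instance. On it I will build a lottery in which every agent receives every chore of $M^1$ with probability exactly $1/n$, and every realization is \EFone{} for the common valuation. For identical valuations this is the same as \EQone{}.

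\textbf{The lottery on $M^1$.} Sort $M^1$ by nonincreasing cost $d$. Fix the agent order $1,\dots,n$ and consider its $n$ cyclic shifts. For each shift $\sigma$, run round-robin under the order $\sigma$, where the agent in position $p$ takes the chores of rank $p, p+n, p+2n,\dots$. Take the uniform distribution over these $n$ allocations. Across the $n$ shifts each agent occupies each position exactly once, so each chore goes to each agent with probability $1/n$. The standard round-robin argument for identical costs sorted in decreasing order gives, for agents in positions $p$ and $p'$, that removing the first chore picked by the agent in position $p$ makes her remaining cost at most that of the agent in position $p'$. Hence each realization restricted to $M^1$ is \EFone{} and \EQone{} under the common valuation. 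The support has size $n$, so the construction runs in polynomial time.

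\textbf{Ex-ante guarantees.} Agent $i$ has expected utility $\mathbb{E}[v_i(A_i)]=-d(M^1)/n$, because her $M^0$ chores cost her nothing. This value does not depend on $i$, so the lottery is ex-ante \EQ{} with no normalization needed. For ex-ante \EF{}, agent $i$ values $j$'s random bundle at $-d(M^1)/n+v_i(A_j\cap M^0)\le -d(M^1)/n$.

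\textbf{Ex-post guarantees.} Fix a realization and agents $i,j$ with $A_i\neq\emptyset$. For \EFone{}, suppose first that $A_i\cap M^1\neq\emptyset$. Remove the $M^1$ chore given by the round-robin guarantee; then $v_i(A_i\setminus\{c\}) = -d((A_i\cap M^1)\setminus\{c\})\ge -d(A_j\cap M^1)\ge v_i(A_j)$, where the last step uses that $j$'s $M^0$ chores are weakly costly to $i$. If instead $A_i\subseteq M^0$, then $i$'s utility is $0$ and \EFone{} is trivial. For \EQone{}, realized utilities are $-d(A_i\cap M^1)$, so the \EQone{} comparison reduces to the identical-valuation round-robin guarantee, with the same trivial case when $A_i\cap M^1=\emptyset$.

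\textbf{Main obstacle.} The step needing care is the round-robin \EFone{} claim for chores in every cyclic shift. An agent in a late position must compare with agents in earlier positions, and in the chore setting the removed chore has to come from her own bundle. I will handle this by pairing her $t$-th pick against the other agent's $(t+1)$-th pick (or $t$-th pick, depending on the relative order) after dropping her first pick. The decreasing order of $d$ gives the termwise inequality in both cases.
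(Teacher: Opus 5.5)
Your proof is correct and is essentially the paper's own argument: with the cost ranking fixed, running round-robin over the $n$ cyclic shifts of the agent order just rotates one fixed round-robin partition of $M^1$ among the agents, exactly as in $A_i^{(r)}=B_{\sigma_r(i)}\cup Z_i$, and your explicit check that the fixed $M^0$ chores preserve ex-post \EFone{}+\EQone{} fills in a step the paper leaves implicit. One small correction to your ``main obstacle'': with costs sorted nonincreasingly, the later-position agent already bears weakly less cost than every earlier one. It is therefore the earlier-position agent who must drop her first pick, after which her $(t+1)$-th pick is dominated by the later agent's $t$-th pick.
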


For completeness, we record a short self-contained proof of the statement in this form.
The construction below can be seen as a partial derandomization of \textsc{RandChore}: it fixes, rather than randomizes, the assignment of the chores that some agent values at zero, and it rotates a single round-robin partition instead of drawing a fresh permutation.

\begin{proof}
    Let $M^0$ contain the chores that some agent values at zero, and  
    assign each $c\in M^0$ permanently to such an agent; let $Z_1,\ldots,Z_n$ be the resulting allocation.  
    Every remaining chore is in $M^1:=M\setminus M^0$ and has the same value $v(c)<0$ for every agent.
    We now construct a common \EFone{}+\EQone{} partition of $M^1$, so that a uniform random assignment of this partition among the agents leads to an ex-ante \EF{}+\EQ{} randomized allocation with the desired ex-post guarantees.

    Since agents are identical on $M^1$, every \EFone{} allocation is also \EQone{}. 
    Moreover, any \EFone{} allocation $(B_1, \ldots, B_n)$ on $M^1$ is such that all bundles $B_p$ are \EFone{} for all agents since the valuations are identical; one such allocation can be computed via round robin.

Now we construct a randomized allocation supported on $n$ \EFone{}+\EQone{} allocations as follows: For $r=0,\ldots,n-1$, define $\sigma_r(i)=1+((i-1+r)\bmod n)$ and
\[
 A_i^{(r)}=B_{\sigma_r(i)}\cup Z_i.
\]
The uniform distribution over these $n$ allocations gives each agent every $B_p$ exactly once. 
Therefore $\mathbb E[v_i(A_i)]=v(M^1)/n$ for every $i$, proving ex-ante \EQ{}.  For
$i\ne j$,
\[
\mathbb E[v_i(A_j)]
=\frac{v(M^1)}n+v_i(Z_j)
\le\frac{v(M^1)}n
=\mathbb E[v_i(A_i)],
\]
which proves ex-ante \EF{}.
\end{proof}

In \Cref{sec:expost-limits} we show that the ex-post guarantee is tight, i.e. we present a binary chores instance with two agents and two chores in which no ex-ante \EF{}+\EQ{} randomized allocation is ex-post \EFX{} or ex-post \EQX{}. 
The counter-example, however, is not normalized, and an interesting open problem is whether an ex-ante \EF{}+\EQ{} and ex-post \EFX{}+\EQX{} allocation always exists for binary normalized chores. 

Since every allocation in the support of the randomized allocation constructed in \Cref{thm:chores-binary-ex-ante-ex-post} is simultaneously \EFone{} and \EQone{}, we obtain the following deterministic guarantee as an immediate consequence. 

\begin{corollary}\label{cor:rest-add-chores-ef1-eq1}
    For unnormalized, restricted-additive chores instances, an \EFone{}+\EQone{} allocation always exists, and such allocations can be computed in polynomial time.
\end{corollary}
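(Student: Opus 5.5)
The plan is to read the corollary off the randomized construction in the proof of \Cref{thm:chores-binary-ex-ante-ex-post}, keeping only its support. A lottery that is ex-post \EFone{}+\EQone{} has every support allocation satisfying both notions. So it is enough to exhibit one support allocation, for instance the $r=0$ rotation $A_i=B_i\cup Z_i$, and to check directly that it is \EFone{}+\EQone{}. This check is needed because the proof above only argued the ex-ante part explicitly.

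The construction is as follows.
\begin{enumerate}
\item Each chore in $M^0$ goes to an agent who values it at zero, giving the bundles $Z_i$ with $v_i(Z_i)=0$.
\item Round robin partitions $M^1$, on which all agents share the common valuation $v$, into bundles $B_1,\ldots,B_n$ that are \EFone{} under $v$. With identical valuations, for every $p,q$ with $B_p\neq\emptyset$ there is a chore $c\in B_p$ such that $v(B_p\setminus\{c\})\ge v(B_q)$.
\end{enumerate}

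For \EFone{} of agent $i$ toward agent $j$, the key facts are $v_i(A_i)=v(B_i)$ and $v_i(A_j)=v(B_j)+v_i(Z_j)\le v(B_j)$, since $v_i\le0$ on $Z_j$. If $B_i\neq\emptyset$, removing the chore supplied by round robin gives $v_i(A_i\setminus\{c\})=v(B_i\setminus\{c\})\ge v(B_j)\ge v_i(A_j)$. If $B_i=\emptyset$, then $v_i(A_i)=0\ge v_i(A_j)$ and there is no envy at all. Removing a chore of $Z_i$ does not matter either way, since it is worth zero to $i$.

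For \EQone{}, the realized utilities satisfy $v_j(A_j)=v(B_j)$. The same round-robin inequality, applied with $p=i$ and $q=j$, gives the condition for agent $i$ whenever $B_i\neq\emptyset$, and the condition is trivial when $B_i=\emptyset$.

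Polynomial time follows because assigning $M^0$ and running round robin are both linear in the input size. There is no real obstacle. The only point needing care is the degenerate case $A_i=Z_i$ with $B_i=\emptyset$, where the removed chore has value zero; there agent $i$'s value for her own bundle is already $0$, which is the maximum possible, so both conditions hold.
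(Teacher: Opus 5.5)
Your proposal is correct and takes the same route as the paper: the corollary is read off from a single support allocation of the construction in the proof of \Cref{thm:chores-binary-ex-ante-ex-post}. Your direct check that $B_i\cup Z_i$ is \EFone{}+\EQone{}, using $v_i(Z_i)=0$, $v_i(Z_j)\le 0$, and the common round-robin inequality on $M^1$, fills in the ex-post verification that the paper's self-contained proof leaves implicit.
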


\section{Concluding Remarks}
\label{sec:conclusion}

Our work studies the simultaneous satisfaction of fundamentally different fairness notions, within and across the deterministic and randomized settings, revealing a sharp distinction between goods and chores. Normalized binary goods admit \EFone{}+\EQone{} allocations for at most seven agents, but our 113-agent counterexample shows that simultaneous existence fails in general. The central remaining question is to narrow the gap $8\le N_{\min}\le113$: find a smaller counterexample among population sizes $8,\ldots,112$, or extend the existence and polynomial-time guarantees beyond seven agents. The eight-agent obstruction in \Cref{sec:appendix_7_agent_barriers} concerns a particular scattering step and does not settle existence at eight agents. Broader algorithmic guarantees under structural assumptions, extending those for laminar approval sets, are another natural direction.
Conceptually, our work initiates the study of cross-notion best-of-both-worlds fairness beyond envy and equitability, where ex-ante guarantees for one fairness notion are combined with ex-post guarantees for another.

\section*{AI Usage Disclosure}
The proof for \Cref{thm:binary-7} was obtained with the help of GPT-5.6 and Claude-Fable. 
The authors first established the result for $n=3$ agents themselves, and subsequently extended it to $n=5$ and then $n=7$ agents with AI assistance.
Additionally, the authors used GPT-5.6 for editorial suggestions and writing the draft. 
The 113-agent counterexample in \Cref{thm:binary-113-counterexample} was found by GPT-6 Astra.
The authors retain responsibility for all mathematical arguments and the final text.

\section*{Acknowledgments}
The authors acknowledge the National Science Foundation (NSF) for financial support. HH and SP were supported through CAREER Awards IIS-2144413 and IIS-2107173, and LX and CZ were supported by Awards 2450124, 2517733, and 2518373.
The authors also acknowledge GPT-6 Astra for finding the normalized binary counterexample in \Cref{thm:binary-113-counterexample}.

\bibliographystyle{plainnat}
\bibliography{refs}

\appendix

\section{Omitted Material from Section 3}

\subsection{Fundamental Barriers}\label{sec:appendix_challenges}
\begin{proof}[Complete proof of \Cref{prop:non-exist-3-and-more}]
Consider again the instance depicted in the proof sketch above: there are $n\ge 3$ agents and $m=2n$ goods; for every $i\in[n-1]$, $v_i(g_1)=1$ and $v_i(g_j)=\varepsilon$ for all $j\ge 2$, where $\varepsilon>0$ is sufficiently small so that $(2n-1)\varepsilon<\delta$ with
\[\delta:=\frac{1+(m-1)\varepsilon}{m}=\frac{1+(2n-1)\varepsilon}{2n},\]
and agent $n$ values every good at $\delta$.
Every agent values the grand bundle at $1+(2n-1)\varepsilon$, so the instance is normalized.

Consider any allocation $A$.
Since $n\ge 3$, there exists an agent $i\in[n-1]$ who does not receive $g_1$; hence $v_i(A_i)\le(2n-1)\varepsilon<\delta$.
We claim that agent $n$ can receive at most one good.
Otherwise, after removing any good from $A_n$, agent $n$'s remaining utility is at least $\delta$, contradicting \EQone{} with respect to $i$.

Thus, at least $2n-1$ goods are allocated among the first $n-1$ agents.
By the pigeonhole principle, one of these agents, say $j$, receives at least three goods.
Agent $n$ values her own bundle at most $\delta$, whereas, for every $g\in A_j$, we have $v_n(A_j\setminus\{g\})\ge 2\delta>\delta\ge v_n(A_n)$.
Hence agent $n$ \EFone{}-envies agent $j$, establishing that no allocation is simultaneously \EFone{}+\EQone{}.
\end{proof}

An analogous construction rules out the chores setting.

The following proposition is the chores analogue of \Cref{prop:non-exist-3-and-more}.

\begin{proposition}\label{sprop:non-exist-chores}
For $n\ge 3$ agents with normalized bivalued valuations, an allocation of chores that is both \EFone{} and \EQone{} may fail to exist.
\end{proposition}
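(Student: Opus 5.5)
The natural plan is to mirror the goods construction of \Cref{prop:non-exist-3-and-more}. I would again use $n-1$ identical ``spiky'' agents and one ``flat'' agent $n$ who values every chore at $-\delta$, with $\delta$ fixed by normalization. The argument then combines two ingredients, as in the goods case. First, a bound on the realized (dis)utility of some agent, derived from the identical agents, restricts via \EQone{} how many chores a particular agent may hold. Second, a pigeonhole step forces some bundle to be too large or too small, and this violates \EFone{} for the flat agent or for a spiky agent.

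The direct mirror does not work. Take $v_i(c_1)=-1$ and $v_i(c_j)=-\varepsilon$ for $i<n$, $j\ge2$, and give agent $n$ the bundle $\{c_1\}$. Spread the $\varepsilon$-chores evenly among agents $1,\dots,n-1$. This allocation is \EFone{}+\EQone{}: agent $n$ can drop $c_1$, and the spiky agents' costs stay below $\delta$. So the chores instance must make agent $n$'s bundle the binding one. I would therefore use several heavy chores, each worth $-h$ to the spiky agents, chosen so that no single agent can absorb them cheaply. The key steps, in order, are as follows.
\begin{enumerate}[(i)]
\item Choose $m$ large compared with $n$, for instance $m=2n$ or larger, so that agent $n$ must hold at least one chore. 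If $A_n=\emptyset$, then \EQone{} against utility $0$ forces every spiky agent to hold at most one chore, which is impossible.
\item Take the number and size of the heavy chores so that any agent holding two heavy chores violates \EQone{} against an agent holding only light chores.
\item Use \EQone{} for agent $n$ to get $(|A_n|-1)\delta\le\min_j(-v_j(A_j))$. With $\delta$ large relative to $\varepsilon$, and with some spiky agent holding only light chores, this caps $|A_n|$ at a small number.
\item Use \EFone{} from the flat agent, $(|A_n|-1)\le |A_j|$ for every $j$, together with \EFone{} and \EQone{} among the spiky agents, to bound the sizes $|A_j|$. Pigeonhole on $m$ then gives a contradiction.
\end{enumerate}

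The main obstacle is step (iii). It needs a spiky agent whose bundle consists only of light chores, and that agent must exist in every allocation. For $n\ge4$ this follows by taking at most $n-2$ heavy chores. For $n=3$, agent $n$ can absorb heavy chores, so I would first try to handle $n=3$ separately. One option is a direct case analysis on who holds the heavy chores. Another is to rescale $h$ so that agent $n$ holding a heavy chore already makes her cost exceed the light-only spiky costs by more than one $\delta$. Tuning $\varepsilon,\delta,h$ so that all of these constraints are consistent with normalization, namely $m\delta=(\#\text{heavy})h+(\#\text{light})\varepsilon$, is where I expect the delicate bookkeeping to lie.
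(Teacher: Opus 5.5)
\textbf{Gap: the architecture is wrong, not just the bookkeeping.} With $n-1$ identical spiky agents and a single flat agent, your plan cannot reach a contradiction in the regime that steps (ii)--(iii) are designed for. The reason is the one you already found for the direct mirror: the lone flat agent can absorb one heavy chore. That makes her bundle expensive in every spiky agent's eyes, and since she holds a single chore she envies no one. Adding more heavy chores does not change this.

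Concretely, take $1\le H\le n-2$ heavy chores (cost $h$ to spiky agents) and $L$ light chores (cost $\varepsilon$). Suppose $\delta\ge\varepsilon\lceil L/(n-1)\rceil$; this is the regime in which a balanced light-only spiky agent caps $|A_n|$ via \EQone{}, as step (iii) intends. Build the allocation as follows:
\begin{itemize}
\item give agent $n$ one heavy chore;
\item give the other $H-1$ heavy chores to distinct spiky agents;
\item split the light chores among the $n-1$ spiky agents in counts $q=\lfloor L/(n-1)\rfloor$ or $q+1$, giving the heavy-holders the smaller counts.
\end{itemize}
The minimum cost is then the smallest light-only spiky cost $\mu$. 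Every reduced cost is at most $\mu$, since a heavy-holder drops her heavy chore, so \EQone{} holds. Agent $n$ cannot violate \EFone{}. Each spiky agent's reduced cost is at most $\mu\le\delta<h=-v_i(A_n)$. Among the identical spiky agents, \EFone{} coincides with \EQone{}.

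For instance, take $n=4$, heavy chores $c_1,c_2$ with $h=100$, and light chores $\ell_1,\dots,\ell_6$ with $\varepsilon=1$, so $\delta=25.75$. This meets all your design conditions: $h>L\varepsilon$, $\delta>L\varepsilon$, a light-only spiky agent always exists, and $m=2n$. Yet $(\{c_2,\ell_1,\ell_2\},\{\ell_3,\ell_4\},\{\ell_5,\ell_6\},\{c_1\})$ is \EFone{}+\EQone{}.

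Outside this regime, step (iv) still has no source of contradiction. \EFone{} from agent $n$ only yields lower bounds $|A_j|\ge|A_n|-1$. \EQone{} among spiky agents bounds costs rather than counts, so the spiky agents can absorb all remaining light chores. The only bundle you force to be small is $A_n$, and it can always be made expensive for the spiky agents.

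\textbf{The missing idea: reverse the proportions.} For $n\ge4$ the paper uses $n-2$ \emph{flat} agents with cost $\delta=1/(2n)$ per chore and only \emph{two} spiky agents, with a single heavy chore $c_1$ of cost $1-(2n-1)\varepsilon$. The argument runs as follows.
\begin{itemize}
\item The spiky agent without $c_1$ has cost at most $(2n-1)\varepsilon<\delta$, so \EQone{} limits every flat agent to one chore.
\item The two spiky agents then carry at least $n+2$ chores, so one of them holds at least three and has reduced cost at least $2\varepsilon$.
\item There are at least two flat agents and only one heavy chore, so some flat agent holds at most one light chore.
\item That overloaded spiky agent values this flat bundle at cost at most $\varepsilon$, violating \EFone{}.
\end{itemize}
The essential point is that the agents forced to hold small bundles must be too numerous for all of them to be shielded by a heavy chore.

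This breaks at $n=3$, where there is only one flat agent and she can take $c_1$. The paper therefore handles $n=3$ with a separate eight-chore instance using three distinct bivalued valuations. A case analysis first forces $A_1=\{c_1\}$ and then contradicts \EQone{} between agents 2 and 3. Your $n=3$ paragraph does not yet supply an instance of this kind.
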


\begin{proof}
We first provide an instance for every $n\ge 4$ and then a separate one for $n=3$.

For $n\ge 4$, let $m=2n$.
The first $n-2$ agents (the \emph{flat} agents) have identical disutilities
\[v_i(c)=-\delta:=-\frac{1}{2n}\]
for every chore $c$.
The last two agents, $n-1$ and $n$, assign disutility $-(1-(2n-1)\varepsilon)$ to chore $c_1$ (the \emph{heavy} chore) and disutility $-\varepsilon$ to every other chore (the \emph{light} chores), where $\varepsilon>0$ is chosen so that $(2n-1)\varepsilon<\delta$.
Every agent has total disutility $-1$, so the instance is normalized, and every valuation is bivalued.

Suppose some flat agent receives at least two chores.
After removing any one chore from her bundle, its remaining disutility is at most $-\delta$.
Since the heavy chore $c_1$ can be assigned to at most one of agents $n-1$ and $n$, the other of these two agents receives only light chores and thus incurs cost at most $(2n-1)\varepsilon<\delta$.
Therefore \EQone{} is violated between the flat agent and this agent.
Hence every one of the first $n-2$ agents receives at most one chore.

Consequently, the flat agents together receive at most $n-2$ chores, leaving at least $2n-(n-2)=n+2$ chores for agents $n-1$ and $n$.
By the pigeonhole principle, one of these two agents receives at least $\lceil (n+2)/2\rceil\ge 3$ chores.
Even after removing any single chore, her bundle still contains at least two chores, each costing her at least $\varepsilon$, so its remaining cost is at least $2\varepsilon$.

Since at most one flat agent can receive the heavy chore $c_1$ and there are $n-2\ge 2$ flat agents, some flat agent holds no heavy chore and at most one light chore, so agents $n-1$ and $n$ value her bundle at cost at most $\varepsilon$ (possibly $0$ if the bundle is empty).
Thus the overloaded agent has cost at least $2\varepsilon$ after removing any one chore, while she values this flat agent's bundle at cost at most $\varepsilon$, contradicting \EFone{}.
Hence no allocation simultaneously satisfies \EFone{} and \EQone{}.

Now consider $n=3$.
Fix $0<\varepsilon<1/12$ and take eight chores with the following disutilities:
\[
\begin{array}{c|cc}
 & c_1 & c_j\ (j=2,\ldots,8)\\
\hline
\text{Agent } 1 & -12.5 & -12.5\\
\text{Agent } 2 & -(86-7\varepsilon) & -(2+\varepsilon)\\
\text{Agent } 3 & -93 & -1
\end{array}
\]
Every agent's total disutility is $-100$, so the instance is normalized and bivalued.

Agent~1 cannot receive two or more chores: after removing any one of them, her remaining disutility is at most $-12.5$, whereas one of agents~2 and~3 does not receive the heavy chore $c_1$; since agent~1 holds at least two of the eight chores, that agent holds at most six light chores and incurs cost at most $6(2+\varepsilon)<12.5$ (as $\varepsilon<1/12$), violating \EQone{}.

Moreover, agent~1 cannot receive zero chores.
Otherwise, all eight chores are distributed between agents~2 and~3, so one of them receives at least four chores.
After removing her most costly chore, that agent still incurs strictly positive cost from at least three chores, while agent~1's bundle is empty and has cost $0$, an \EFone{} violation.
Therefore agent~1 receives exactly one chore.

If agent~1 receives a light chore instead of $c_1$, the remaining seven chores are divided between agents~2 and~3, and one of them receives at least four chores.
After removing her most costly chore, the remaining at least three light chores still have strictly greater cost (to her) than agent~1's single light chore, violating \EFone{}.
Hence necessarily $A_1=\{c_1\}$.

The remaining seven light chores must then be split between agents~2 and~3.
If some agent receives at least five of them, she \EFone{}-envies the other even after removing one chore, so one agent receives three chores and the other four.
In either split, agent~3's cost is at most $4$, whereas agent~2's cost after removing one chore is at least $2(2+\varepsilon)=4+2\varepsilon>4$, contradicting \EQone{}.
Therefore, no allocation simultaneously satisfies \EFone{} and \EQone{} for $n=3$.
\end{proof}

\subsection{Computational Barriers}\label{sec:appendix_hardness}

We prove a more general statement, for which we need the ``up to $k$ goods'' versions of our fairness notions.

\begin{definition}\label{sdef:efk-eqk}
For an integer $k\ge 1$, an allocation $A$ is \emph{envy-free up to $k$ goods} (\EF{}-$k$) if, for every pair of agents $i,j\in N$, there exists a subset $S\subseteq A_j$ with $|S|\le k$ such that $v_i(A_i)\ge v_i(A_j\setminus S)$.
Similarly, $A$ is \emph{equitable up to $k$ goods} (\EQ{}-$k$) if, for every pair $i,j\in N$, there exists a subset $S\subseteq A_j$ with $|S|\le k$ such that $v_i(A_i)\ge v_j(A_j\setminus S)$.
\end{definition}

For $k=1$ these notions coincide with \EFone{} and \EQone{}, so the following theorem contains \Cref{thm:hardness} as the special case $k=1$.

\begin{theorem}\label{sthm:hardness-general}
For two agents with additive (possibly unnormalized) valuations and any fixed integer $k\ge 1$, deciding whether an allocation that is both \EF{}-$k$ and \EQ{}-$k$ exists is \NPH{}.
\end{theorem}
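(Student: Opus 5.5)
We reduce from \textsc{Partition}: given positive integers $a_1,\dots,a_p$ with $\sum_\ell a_\ell = 2T$, decide whether some subset sums to exactly $T$. The plan is to build a two-agent instance in which one agent has an inflated scale (exploiting the absence of normalization). Equitability then pins the realized utilities together, while envy-freeness forces an exact split of the ``number'' goods. The reduction must survive removing up to $k$ goods, so every number good will be made tiny relative to a large padding unit, and $k$ ``buffer'' goods will be added whose removal absorbs the slack.

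\textbf{Construction.} Scale all numbers by a large factor $L$ (e.g.\ $L > 2k\cdot 2T$ suffices; to be fixed later). Create number goods $g_1,\dots,g_p$, each valued $L a_\ell$ by both agents. For each agent $i\in\{1,2\}$, add $k$ private \emph{buffer} goods valued exactly $\beta$ by agent $i$ and $0$ by the other agent. Here $\beta$ is chosen slightly larger than any number good (say $\beta = 2TL$), so that under \EF{}-$k$/\EQ{}-$k$ the optimal removal set is always the $k$ buffers.

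\textbf{Forward direction.} Given a perfect partition $S$, give agent $1$ the set $S$ together with its own buffers, and agent $2$ the complement together with its own buffers. Both agents then value both bundles' number parts at $LT$ and their own bundles at $LT+k\beta$. Envy-freeness and equitability therefore hold exactly, hence for every $k$.

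\textbf{Reverse direction.} Take any \EF{}-$k$+\EQ{}-$k$ allocation. We must argue three things.
\begin{enumerate}
\item Each agent holds all of its own buffers. A misplaced buffer is worth $0$ to its holder, and the owner's deficit of $\beta$ cannot be repaired under \EQ{}-$k$ once the other side's buffers are removed.
\item After the other agent's $k$ buffers are removed, \EF{}-$k$ gives $v_1(\text{numbers}_1)+k\beta \ge v_1(\text{numbers}_2)$, and symmetrically for agent $2$.
\end{enumerate}
This alone is too weak, so the construction must give the agents different scales. The plan is to make agent $2$'s buffer value $\beta' = \beta + \Delta$ differ from agent $1$'s, so that \EQ{}-$k$ (removing $k$ goods from the richer side) and \EF{}-$k$ pull the number split in opposite directions. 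Agent $1$'s inequality should force $\text{num}_1 \ge LT$, and agent $2$'s should force $\text{num}_2 \ge LT$. Exactness then follows from integrality, since every number is a multiple of $L$, together with choosing $L$ larger than all slack terms.

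\textbf{Main obstacle.} The hard step is calibrating the padding so that the removal of $k$ goods cannot be exploited in unintended ways. In particular, removing large number goods instead of buffers must never help, and the two one-sided inequalities (one coming from \EF{}-$k$, the other from \EQ{}-$k$) must combine into the exact equality $\text{num}_1 = LT$. I would first try a cleaner variant: give agent $2$ an overall scale factor $c\ne 1$ on the number goods, and make the buffers large and identical for both agents. With this variant, \EF{}-$k$ yields one of the two one-sided bounds and \EQ{}-$k$ (which compares the scaled utility $c\cdot\text{num}_2$ against $\text{num}_1$) yields the other. I would then check the boundary cases by hand before finalizing the constants.
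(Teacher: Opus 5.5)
\paragraph{Verdict.} There is a genuine gap: the proposal never reaches a reduction whose reverse direction holds. Your instinct that \EF{}-$k$ and \EQ{}-$k$ must supply opposite one-sided bounds on the number split is right, but the padding you propose cannot realize it.

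\paragraph{Why your construction fails.} The construction you write down is not merely too weak but vacuous. Suppose each agent keeps her own $k$ buffers. Then for \emph{every} split of the number goods, agent~1 has $v_1(A_1)=\mathrm{num}_1+k\beta\ge 2kTL\ge\mathrm{num}_2=v_1(A_2)$, because agent~2's buffers are worth $0$ to her. So she does not envy at all. \EQ{}-$k$ from her side also holds after deleting agent~2's $k$ buffers, since $\mathrm{num}_1+k\beta\ge\mathrm{num}_2$. The same holds for agent~2, so NO-instances of \textsc{Partition} are mapped to YES-instances.

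Choosing $\beta'\neq\beta$ does not help. In this design the padding an agent values lies only in her own bundle, so it sits as slack $k\beta$ (or $k\beta'$) on the benchmark side of every comparison she is on. Deleting up to $k$ goods from the other bundle cannot remove it.

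The variant with identical buffers and a scale factor $c$ fails for a related reason. With large $\beta$, a comparison can be tight only if the compared bundle holds exactly $k$ more buffers than the benchmark bundle. So both tight comparisons run from the same agent toward the other. They give lower bounds of the \emph{same} sign on her number share (e.g.\ $\mathrm{num}_1\ge\mathrm{num}_2$ and $\mathrm{num}_1\ge c\,\mathrm{num}_2$), never the opposite pair you need. Your assertion that each agent must hold her own buffers is also unproved. Forcing the padding distribution is a substantive part of any correct reduction.

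\paragraph{The missing idea.} You need one-sided padding whose distribution is forced by the fairness constraints themselves. The paper gives both agents value $x_i$ for the number goods, with no scaling. It adds $3k$ dummies worth $D>\sum_i x_i$ to agent~$A$ and $0$ to agent~$B$.

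\begin{itemize}
\item Since $B$'s utility is at most $\sum_i x_i<D$, \EQ{}-$k$ from $B$ forces $A$ to hold at most $k$ dummies.
\item \EF{}-$k$ from $A$ forces $B$ to hold at most $t_A+k$, where $t_A$ is $A$'s dummy count.
\item With $3k$ dummies in total, this pins the split at exactly $k$ and $2k$.
\end{itemize}

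Because $D>\sum_i x_i$, any deletion that leaves a dummy in place is useless. Hence \EQ{}-$k$ (delete $A$'s $k$ dummies, compare against $B$'s padding-free utility) yields $s_A\le s_B$. Likewise \EF{}-$k$ (delete $k$ of $B$'s $2k$ dummies, leaving $kD$ to match $A$'s own $kD$) yields $s_A\ge s_B$.

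These are exactly the two features your symmetric buffers cannot supply. The \EQ{} comparison needs a benchmark utility containing no padding. The \EF{} comparison needs padding that differs by exactly $k$ between the two bundles in the evaluator's eyes. No scaling by $L$ or integrality argument is needed.
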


\begin{proof}
We reduce from the classical \textsc{Partition} problem: given positive integers $x_1,\ldots,x_n$, decide whether they can be partitioned into two subsets with equal sum.
Let $T=\sum_{i=1}^n x_i$; we may assume $T$ is even, as otherwise the instance is trivially a NO-instance.

\smallskip\noindent\textbf{Construction.}
Create a fair-division instance with two agents $A$ and $B$ and the following goods:
\begin{itemize}
  \item $n$ \emph{ordinary} goods $o_1,\dots,o_n$;
  \item $3k$ \emph{dummy} goods $d_1,\dots,d_{3k}$.
\end{itemize}
The additive valuations are
\[
\begin{array}{c|cc}
 & \text{ordinary } o_i & \text{dummy } d_t \\\hline
A & x_i & D\\
B & x_i & 0
\end{array}
\]
where $D$ is any integer with $D>T$ (e.g., $D=T+1$, which has polynomial size).
We claim that the instance admits a complete \EF{}-$k$+\EQ{}-$k$ allocation if and only if the \textsc{Partition} instance is solvable.

\smallskip\noindent\textbf{($\Rightarrow$) From a fair allocation to a partition.}
Let $t_A$ and $t_B$ denote the numbers of dummies allocated to $A$ and $B$, so $t_A+t_B=3k$.

\smallskip
\emph{Claim 1: $t_A\le k$.}
Suppose $t_A>k$ and consider \EQ{}-$k$ for the ordered pair $(B,A)$.
After removing any at most $k$ goods from $A$'s bundle, at least one dummy remains, so the remaining value (to $A$) is at least $D>T$.
But $B$'s realized utility is at most $T$.
Hence $v_B(B)\ge v_A(A\setminus S)$ fails for every admissible $S$, contradicting \EQ{}-$k$.

\smallskip
\emph{Claim 2: $t_B\le t_A+k$.}
Suppose $t_B>t_A+k$ and consider \EF{}-$k$ for the ordered pair $(A,B)$.
After removing any at most $k$ goods from $B$'s bundle, $B$ still retains at least $t_B-k>t_A$ dummies, so $A$ values the remaining bundle at least $(t_B-k)D\ge(t_A+1)D=t_AD+D>t_AD+T$.
On the other hand, $A$'s own total value is at most $t_AD+T$.
Hence $A$ still envies $B$ after every admissible removal, contradicting
\EF{}-$k$.

\smallskip
Combining Claims 1 and 2 with $t_A+t_B=3k$ yields $t_A=k$ and $t_B=2k$.
Now let $S_A$ and $S_B$ be the sets of ordinary goods assigned to $A$ and $B$, and write $s_A=v_A(S_A)$ and $s_B=v_B(S_B)$, so $s_A+s_B=T$.

\smallskip
\emph{Claim 3: $s_A\le s_B$.}
Suppose $s_A>s_B$ and consider \EQ{}-$k$ for the pair $(B,A)$.
The most value-reducing removal of at most $k$ goods from $A$'s bundle removes her $k$ dummies, leaving value $s_A>s_B=v_B(B)$; any other removal leaves at least one dummy and value at least $D>T\ge s_B$.
Either way \EQ{}-$k$ fails.

\smallskip
\emph{Claim 4: $s_A\ge s_B$.}
Suppose $s_A<s_B$ and consider \EF{}-$k$ for the pair $(A,B)$.
The most value-reducing removal (from $A$'s perspective) of at most $k$ goods from $B$'s bundle removes $k$ dummies, leaving value $kD+s_B>kD+s_A=v_A(A)$; any other removal leaves at least $k+1$ dummies and value at least $(k+1)D>kD+T\geq kD+s_A$.
Either way $A$ still envies $B$, contradicting \EF{}-$k$.

\smallskip
Claims 3 and 4 yield $s_A=s_B=T/2$, so the ordinary goods split into two sets of equal sum, i.e., the \textsc{Partition} instance is a YES-instance.

\smallskip\noindent\textbf{($\Leftarrow$) From a partition to a fair allocation.}
Suppose $S_A,S_B$ partition the ordinary goods with equal sums $T/2$.
Allocate to $A$ the goods $S_A$ plus any $k$ dummies, and to $B$ the goods $S_B$ plus the remaining $2k$ dummies.
Then $v_A(A)=T/2+kD$ and $v_B(B)=T/2$.
For \EF{}-$k$: agent $B$ does not envy $A$ (she values $A$'s bundle at $T/2=v_B(B)$), and removing $k$ dummies from $B$'s bundle leaves value $T/2+kD$ to agent $A$, eliminating her envy.
For \EQ{}-$k$: removing the $k$ dummies from $A$'s bundle leaves both realized utilities equal to $T/2$; all other ordered comparisons are immediate.
Thus the allocation is \EF{}-$k$+\EQ{}-$k$.
The construction is computable in polynomial time, which completes the reduction.
\end{proof}

\subsection{Limits of Strengthening \EFone{}+\EQone{}}

The main text notes that, even under normalization and binary valuations, neither of the two guarantees in \EFone{}+\EQone{} can generally be strengthened.
The following construction makes this precise.

\begin{proposition}\label{sprop:six-agent-both}
For six agents with normalized binary valuations, an allocation of goods that is \EFone{}+\EQX{}, and likewise one that is \EFX{}+\EQone{}, may fail to exist.
\end{proposition}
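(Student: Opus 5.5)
We aim to give one explicit normalized binary instance with six agents and show by a short case analysis that it admits neither an \EFone{}+\EQX{} allocation nor an \EFX{}+\EQone{} allocation. If a single instance cannot kill both combinations, we will use two instances. The first step restates the two ``up to any'' notions for binary goods, in the spirit of \eqref{eq:binary-fairness-cap}. Write $u_i=v_i(A_i)$.

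\emph{\EQX{}.} Removing a good that $j$ values at zero does not lower $u_j$. Hence \EQX{} is equivalent to \EQone{} (all $u_i\in\{\ell,\ell+1\}$) together with the rule that every agent with $u_j=\ell+1$ holds only goods she approves.

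\emph{\EFX{}.} Similarly, \EFX{} is equivalent to \EFone{} together with the rule that whenever $v_i(A_j)=u_i+1$, the bundle $A_j$ contains no good outside $\Gamma_i$.

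These reformulations turn both existence questions into purely combinatorial counting problems.

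Next we design the instance, reusing the two forces from \Cref{thm:binary-113-counterexample} and \Cref{sprop:laminar-ra}. First, a block of identical agents whose common approval set is too small forces some agent down to a low utility $\ell$; this caps everyone at $\ell+1$ via \EQone{}. Second, overlapping approval sets force surplus goods either to agents who value them at zero or to agents who then become \EFone{}-envied.

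The laminar computations above show that disjoint approval sets always leave slack: a surplus good can be placed on a minimum-utility agent, where it harms neither \EQX{} nor \EFone{}. The instance must therefore use overlapping sets. The overlap should ensure two things: every surplus good is approved by someone who already sees $u_i+1$ in each candidate recipient's bundle, and every candidate recipient with utility $\ell$ is ruled out.

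Concretely, we would take a pair of identical agents sharing a small approval set, forcing $\ell\le$ about $k/2-1$. The remaining four agents would get approval sets, each of size $k$ with $k$ around $4$, overlapping pairwise so that the goods outside the identical pair's set cannot be split. Such a split would need to keep every bundle at most $\ell+1$ in every approver's eyes while also satisfying the exclusivity rules. We would find and verify the exact table by a small exhaustive search and present it as a matrix, as in the proof of \Cref{prop:non-exist-3-and-more}.

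The proof itself would then proceed by cases on the set of agents attaining the minimum utility $\ell$.

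\emph{(i) \EFone{}+\EQX{}.} Agents with utility $\ell+1$ can hold only approved goods, so every surplus good must go to an agent with utility $\ell$. We then count: the number of such agents, times the number of surplus goods each can absorb before some approver of those goods values the bundle above $u_i+1$, falls short of the number of surplus goods.

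\emph{(ii) \EFX{}+\EQone{}.} The symmetric argument applies. An envied-at-the-cap bundle may contain no good outside the envier's approval set, so the zero-valued ``padding'' that rescued \EQone{} in the laminar case becomes unavailable. A pigeonhole count again exceeds capacity.

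The main obstacle is finding the concrete overlap pattern, since the counting is routine once the table exists. With six agents and no large field structure available, we expect to search computationally over small $k$ (say $k\le 5$, $m\le 12$) and then hand-verify the case split on $\ell$ in the paper.
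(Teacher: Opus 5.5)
You have the two reformulations right: for binary goods, \EQX{} means \EQone{} plus ``an agent at utility $\mu+1$ holds only approved goods,'' and \EFX{} means \EFone{} plus ``a bundle valued at $u_i+1$ by $i$ lies inside $\Gamma_i$.'' These are exactly the rules the paper's proof uses. There is a genuine gap, though. The statement is witnessed by one explicit instance, and you supply neither the instance nor the case analysis; both are deferred to a search.

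More seriously, the one design principle you commit to is false. You argue that disjoint approval sets ``always leave slack,'' so overlapping sets are needed. The paper's witness is laminar: two disjoint sets $L,R$ of nine goods, agent~$1$ approving $L$ and agents $2,\ldots,6$ approving $R$. This single instance defeats both combinations. Also, normalization gives every approval set the same size $k$, so a ``small'' shared set is not an available lever. What matters is the multiplicity of identical agents relative to $k$. A pair only forces a minimum utility of at most $\lfloor k/2\rfloor$, whereas five copies on nine goods force $\mu\le1$, since $5\cdot 2>9$.

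The numbers are tight, and your window $k\le5$, $m\le12$ misses this instance ($k=9$, $m=18$). With the same one-versus-five structure and $k=5$, an \EF{}+\EQ{} allocation exists: give each $R$-agent one $R$-good, agent~$1$ one $L$-good, and the other four $L$-goods to distinct $R$-agents.

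The missing idea is to track where the lone agent's surplus goes. Once $\mu=0$ is excluded by a short argument, $\mu=1$ and agent~$1$ keeps $x\le2$ goods of $L$. So at least seven $L$-goods must be dumped on $R$-agents, who value them at zero.

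\emph{For \EFone{}+\EQX{}.} Three constraints combine:
\begin{itemize}
\item by your \EQX{} rule, every holder of an $L$-good must be at utility $\mu$;
\item \EFone{} from agent~$1$ caps each holder at $x+1$ goods of $L$;
\item $\sum_p b_p=9-y$ together with $y\le b_p+1$ bounds how many $R$-agents can sit at utility $1$.
\end{itemize}
Splitting on $y$ closes the case. If $y=0$, a single utility-$1$ agent would need $\ge7>x+1$ goods of $L$. If $y>0$, then $x=1$ and the $1+y$ holders of capacity $2$ need $y\ge3$, contradicting $y\le2$.

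\emph{For \EFX{}+\EQone{}.} This is not the mirror image you suggest.
\begin{itemize}
\item Every $A_p$ contains an $R$-good that agent~$1$ values at zero, so \EFX{} gives $a_p\le x$, which forces $x=2$.
\item Then $A_1$ contains an $L$-good, so \EFX{} from each $p$ gives $b_p\ge y$, hence $y\le1$.
\item Utility-$2$ $R$-agents cannot hold $L$-goods, since a utility-$1$ $R$-agent would \EFX{}-envy them.
\item The one or two utility-$1$ $R$-agents absorb at most $2x=4<7$ goods of $L$.
\end{itemize}
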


Both parts use the same instance; the two claims are established as Propositions~\ref{sprop:six-agent-ef1-eqx}~and~\ref{sprop:six-agent-efx-eq1} below.
Let $n=6$ and $m=18$, and let $L$ and $R$ be disjoint sets of nine goods.
Agent~1 approves exactly the goods in $L$:
\[
v_1=(\underbrace{1,\ldots,1}_{L},\underbrace{0,\ldots,0}_{R}),
\]
while every agent $p\in\{2,\ldots,6\}$ approves exactly the goods in $R$:
\[
v_p=(\underbrace{0,\ldots,0}_{L},\underbrace{1,\ldots,1}_{R}).
\]
All valuations are binary and normalized.
For an allocation $A$, write
\[
\begin{aligned}
x&=|A_1\cap L|,& y&=|A_1\cap R|,\\
a_p&=|A_p\cap L|,& b_p&=|A_p\cap R|
\end{aligned}
\]
for $p\in\{2,\ldots,6\}$.
The realized utilities are $u_1=x$ and $u_p=b_p$, and
\begin{equation}
x+\sum_{p=2}^6 a_p=9,
\qquad
y+\sum_{p=2}^6 b_p=9.
\label{eq:six-agent-counts}
\end{equation}

\begin{proposition}
\label{sprop:six-agent-ef1-eqx}
The above instance admits no \EFone{}+\EQX{} allocation.
\end{proposition}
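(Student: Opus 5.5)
The plan is a counting argument. EQX caps every realized utility using the competition among agents $2,\ldots,6$ for $R$. EQX, together with the zero-valued goods that the agents $p\ge2$ must absorb from $L$, forces tighter caps on those agents. EF1 then caps how many goods of each side can sit in a single bundle. Finally I compare the capacity with the $9$ goods of $R$ via \eqref{eq:six-agent-counts}.

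\textbf{Step 1 (minimum utility).} Let $\mu:=\min_i u_i$. Since $\sum_{p=2}^6 b_p\le 9<2\cdot 5$, some $p$ has $b_p\le1$, so $\mu\le1$.

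\textbf{Step 2 (EQX caps).} The definition of \EQX{} demands $v_i(A_i)\ge v_j(A_j\setminus\{g\})$ for \emph{every} $g\in A_j$, including goods that $j$ values at zero. Two consequences follow.
\begin{itemize}
\item Every agent has $u_j\le\mu+1$.
\item If $A_j$ contains a good worth zero to $j$, then $u_j\le\mu$.
\end{itemize}
Each good of $L$ held by some $p\ge2$ is worth zero to $p$, so $a_p>0$ implies $b_p\le\mu$. Likewise, $y>0$ implies $x\le\mu$. In all cases $x\le\mu+1$ and $b_p\le\mu+1$.

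\textbf{Step 3 (EF1 caps).} Agent~1 values $A_p$ at $a_p$, so \EFone{} gives $a_p\le x+1$. An agent $p\ge2$ values $A_1$ at $y$, so \EFone{} gives $y\le b_p+1$ for every $p\ge2$.

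\textbf{Step 4 (case analysis on $\mu\in\{0,1\}$ and on whether $y>0$).} In each case I show that the goods of $R$ exceed the available capacity.
\begin{itemize}
\item If $y>0$, then $x\le\mu$ by Step~2. Hence $a_p\le\mu+1$, and $\sum_p a_p=9-x$ forces many agents $p$ to hold $L$-goods, each with $b_p\le\mu$. The remaining agents have $b_p\le\mu+1$, and $y\le b_p+1$ for a minimum-$b_p$ agent $p$.
\begin{itemize}
\item For $\mu=0$: $a_p\le1$ and $\sum_p a_p=9>5$, which is immediately impossible.
\item For $\mu=1$: at least four agents $p$ have $a_p\ge1$ and hence $b_p\le1$, so $\sum_p b_p\le 4\cdot1+2=6$. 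Also $y\le2$, giving $y+\sum_p b_p\le8<9$.
\end{itemize}
\item If $y=0$, all $9$ goods of $R$ go to agents $p\ge2$, and $\sum_p a_p=9-x\ge 8-\mu$ with $a_p\le\mu+2$.
\begin{itemize}
\item For $\mu=1$: at least three agents have $a_p>0$ and hence $b_p\le1$, so $\sum_p b_p\le 3+2\cdot2=7<9$.
\item For $\mu=0$: $a_p\le2$ forces at least four agents with $b_p=0$, and the remaining agent has $b_p\le1$, which cannot cover $9$ goods.
\end{itemize}
\end{itemize}

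The main obstacle is Step~2. I must use correctly that \EQX{} (unlike \EQone{}) lets one remove a zero-valued good, which pins the holders of cross-side goods to utility at most $\mu$. This is exactly what separates this statement from the existence of \EFone{}+\EQone{} allocations. The remaining steps are bookkeeping over the four cases with \eqref{eq:six-agent-counts}.
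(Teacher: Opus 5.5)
Your proof is correct and follows essentially the paper's argument. It uses the same minimum utility $\mu$, the same \EQX{} rule that any holder of a zero-valued good has utility exactly $\mu$, the same \EFone{} caps $a_p\le x+1$ and $y\le b_p+1$, and the same split on $y=0$ versus $y>0$. The only difference is the direction of the final count: you bound $y+\sum_p b_p$ below $9$ using the agents forced to hold $L$-goods, whereas the paper first pins down the $b_p$ profile and then shows the remaining $L$-goods cannot be absorbed.
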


\begin{proof}
Suppose, for contradiction, that an allocation $A$ satisfies both \EFone{} and \EQX{}.
Let $\mu=\min\{x,b_2,\ldots,b_6\}$ denote the minimum realized utility.
Since utilities are integral, \EQX{} implies that every agent has utility either $\mu$ or $\mu+1$.
Moreover, if an agent holds a good that she values at $0$, removing that good does not change the value of her bundle; hence such an agent must have utility exactly $\mu$.
Consequently, for every $p\in\{2,\ldots,6\}$,
\begin{equation}
y>0 \Longrightarrow x=\mu,
\qquad
a_p>0 \Longrightarrow b_p=\mu.
\label{eq:six-agent-zero-rule}
\end{equation}
The \EFone{} comparisons of agent~1 against agent $p$, and of agent $p$
against agent~1, give
\begin{equation}
a_p\le x+1,
\qquad
y\le b_p+1
\qquad (p\in\{2,\ldots,6\}).
\label{eq:six-agent-ef1}
\end{equation}

\smallskip\noindent\textbf{Case 1: $y=0$.}
Then $\sum_{p=2}^6 b_p=9$ by \eqref{eq:six-agent-counts}.
If $\mu=0$, then every utility is at most $1$, so $\sum_p b_p\le 5<9$, which is impossible; hence $\mu\ge 1$. Since every $b_p\in\{\mu,\mu+1\}$ and $\sum_p b_p=9$, we get $\mu=1$ and $(b_2,\ldots,b_6)$ consists of four $2$'s and one $1$, and $x\in\{1,2\}$.
By \eqref{eq:six-agent-zero-rule}, the four agents with utility $2$ cannot hold any goods from $L$.
Therefore all $9-x\ge 7$ goods from $L$ not assigned to agent~1 belong to the unique agent with utility $1$, i.e., $a_p\ge 7>x+1$ for that agent, contradicting \eqref{eq:six-agent-ef1}.

\smallskip\noindent\textbf{Case 2: $y>0$.}
Then $x=\mu$ by \eqref{eq:six-agent-zero-rule}.
If $x=0$, then every $b_p\in\{0,1\}$, so $9-y=\sum_p b_p\le 5$ and hence $y\ge 4$, contradicting $y\le b_p+1\le 2$ from \eqref{eq:six-agent-ef1}.
Moreover, $5\mu\le\sum_p b_p=9-y\le 8$ gives $\mu\le 1$; therefore $x=\mu=1$, and every $b_p\in\{1,2\}$.
Let $h$ be the number of agents with $b_p=2$.
Then $5+h=\sum_p b_p=9-y$, so $h=4-y$, and $5-h=1+y$ agents have utility $1$.
By \eqref{eq:six-agent-zero-rule}, only these $1+y$ agents may receive goods from $L$, and by \eqref{eq:six-agent-ef1} each receives at most $x+1=2$ of them.
Since they must receive all $9-x=8$ goods of $L$ not held by agent~1, we get $2(1+y)\ge 8$, i.e., $y\ge 3$.
But a utility-$1$ agent exists, and \eqref{eq:six-agent-ef1} requires $y\le b_p+1=2$ for her, a contradiction.
\end{proof}

\begin{proposition}\label{sprop:six-agent-efx-eq1}
The same instance admits no \EFX{}+\EQone{} allocation.
\end{proposition}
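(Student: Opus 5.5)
The plan is to reduce \EFX{} and \EQone{} to integer inequalities in the counts $x,y,a_p,b_p$ of \eqref{eq:six-agent-counts}, in the style of Proposition~\ref{sprop:six-agent-ef1-eqx}, and then derive a counting contradiction.

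\textbf{Step 1: translate the two notions.} For binary goods, \EQone{} is equivalent to all realized utilities lying in $\{\mu,\mu+1\}$, where $\mu=\min\{x,b_2,\ldots,b_6\}$. This is the second condition of \eqref{eq:binary-fairness-cap}.

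For \EFX{} with binary values, consider agent $i$ looking at bundle $A_j$. If $A_j$ contains a good that $i$ values at $0$, removing that good must already eliminate envy, so $v_i(A_j)\le u_i$. Otherwise $v_i(A_j)\le u_i+1$. In our instance a bundle $A_p$ contains a good worth $0$ to agent~1 iff $b_p>0$. Likewise $A_1$ contains a good worth $0$ to agent $p$ iff $x>0$, and $A_q$ contains a good worth $0$ to agent $p$ iff $a_q>0$. This yields the constraints:
\begin{itemize}
\item $a_p\le x$ if $b_p>0$, and $a_p\le x+1$ otherwise;
\item $y\le b_p$ if $x>0$, and $y\le b_p+1$ otherwise;
\item $b_q\le b_p$ for all $p$ whenever $a_q>0$, and $b_q\le b_p+1$ otherwise.
\end{itemize}

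\textbf{Step 2: rule out $\mu=0$.} If $\mu=0$, every utility is at most $1$, so $\sum_p b_p\le 5$ and hence $y\ge 4$. If $x>0$, then $y\le\min_p b_p\le 1$. If $x=0$, then $y\le b_p+1\le 2$. Either way this contradicts $y\ge 4$.

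\textbf{Step 3: pin down $\mu$ and $x$.} Now $\mu\ge1$, so every $b_p\ge1$ and $x\ge1$. The constraints of Step~1 then simplify to $a_p\le x$ and $y\le\min_p b_p$. Summing $a_p\le x$ gives $9-x\le 5x$, so $x\ge2$. Since $b_p\ge\mu$ and $\sum_p b_p\le 9$, we get $5\mu\le 9$, so $\mu=1$. Because utilities lie in $\{1,2\}$, this forces $x=2$ and $b_p\in\{1,2\}$.

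Not all $b_p$ can equal $2$, since then $\sum_p b_p=10>9$. Hence some agent has $b_p=1$, and therefore $y\le 1$.

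\textbf{Step 4 (the key step): the counting contradiction.} Because some agent has $b_p=1$, the third \EFX{} constraint shows that any agent $q$ with $a_q>0$ must satisfy $b_q\le 1$, i.e.\ $b_q=1$. Let $h_1$ be the number of utility-$1$ agents among $2,\dots,6$. From $\sum_p b_p=9-y$ we get $h_1+2(5-h_1)=9-y$, so $h_1=1+y\le 2$.

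These at most two agents must absorb all $9-x=7$ goods of $L$ not held by agent~1, yet each receives at most $x=2$ of them. That caps their total at $4<7$, a contradiction. Hence no \EFX{}+\EQone{} allocation exists.

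I expect the main obstacle to be the case bookkeeping in Step~1, specifically correctly deciding which bundles contain goods the viewer values at zero. Once that is right, Steps~2--4 are short.
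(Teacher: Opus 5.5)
Your proof is correct and follows essentially the same route as the paper's: rule out $\mu=0$; use $b_p\ge 1$ to get $a_p\le x$ and hence $x=2$; use $x\ge 1$ to get $y\le b_p$ and hence $y\le 1$; observe that only utility-$1$ agents may hold $L$-goods; and conclude with the capacity bound $2x=4<7$. The differences are cosmetic: you tabulate all \EFX{} constraints up front, and you obtain $y\le 1$ from the existence of a utility-$1$ agent rather than by summing $b_p\ge y$ (your third constraint should be stated for $p\neq q$, which is harmless here).
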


\begin{proof}
Suppose an allocation $A$ satisfies both \EFX{} and \EQone{}, and let $\mu$ be the minimum realized utility.
Under \EQone{}, all utilities lie in $\{\mu,\mu+1\}$.
Since $\sum_{p}b_p\le 9$ over five agents, $\mu\le 1$.

Suppose $\mu=0$.
Then $x,b_p\le1$ for all $p$, so $9-y=\sum_p b_p\le 5$ and $y\ge4$.
If $x>0$, then $A_1$ contains an $L$-good, which every agent $p$ values at zero; \EFX{} from a utility-$0$ agent $p$ toward agent~1 would require $b_p\ge v_p(A_1\setminus\{\text{$L$-good}\})=y\ge 4$, which is impossible.
Thus $x=0$, and removing any $R$-good from $A_1$ leaves value $y-1\ge3$ to agent $p$, still above $b_p\le1$, again contradicting \EFX{}.
Hence $\mu=1$, so $x,b_p\in\{1,2\}$.

Every bundle $A_p$ contains an $R$-good, which agent~1 values at zero.
Consequently, \EFX{} from agent~1 toward $p$ requires $a_p=v_1(A_p\setminus\{\text{$R$-good}\})\le x$ for every $p$.
Since $\sum_pa_p=9-x$, the case $x=1$ would give $9-x=8>5\ge\sum_p a_p$, a contradiction; therefore $x=2$.

Since $x=2$, the bundle $A_1$ contains an $L$-good, which every agent $p$ values at zero, so \EFX{} from $p$ toward agent~1 requires $b_p\ge y$ for every $p$.
Summing gives $9-y\ge5y$, hence $y\le1$.
If $y=0$, the values $(b_2,\ldots,b_6)$ are four $2$'s and one $1$; if $y=1$, they are three $2$'s and two $1$'s.
A utility-$2$ agent cannot receive any $L$-good: otherwise a utility-$1$ agent $q$ would still see both of that agent's $R$-goods after removing the zero-valued $L$-good, i.e., $v_q(A_p\setminus\{\text{$L$-good}\})\ge 2>1=b_q$, violating \EFX{}.
Thus all $9-x=7$ goods of $L$ not held by agent~1 must go to the one or two utility-$1$ agents, whose combined capacity is at most $2x=4<7$ by the bound $a_p\le x$.
This final contradiction completes the proof.
\end{proof}

\subsection{Two-Agent Instances}\label{sec:appendix_two_agent_deterministic}

\paragraph{Existence for goods.}
Assume without loss of generality that the common grand-bundle value is positive and scale it to one; if $v_1(M)=v_2(M)=0$, then all goods are zero-valued for both agents (values are nonnegative) and any allocation is trivially \EFX{}+\EQX{}.
Following \citet{plaut2020almost}, a \emph{leximin++ allocation} is one that (i) maximizes the minimum realized utility, (ii) subject to that, maximizes the number of items held by an agent with the minimum utility, and (iii) subject to that, maximizes the larger utility.

\begin{proposition}
\label{sprop:leximin-two-agents}
For two agents with normalized valuations, every leximin++ allocation of goods is both \EFX{} and \EQX{}.
\end{proposition}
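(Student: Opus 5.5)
We take a leximin++ allocation $A=(A_1,A_2)$ with $v_1(M)=v_2(M)=1$, and assume without loss of generality that $v_1(A_1)\le v_2(A_2)$, so agent~1 is the worst-off agent. The proof has two parts.

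\textbf{Envy-freeness for the worst-off agent.} We first show that $v_1(A_1)\ge 1/2$, so agent~1 does not envy agent~2. Suppose instead that $v_1(A_1)<1/2$. Then $v_1(A_2)=1-v_1(A_1)>1/2$. By normalization, either $v_2(A_1)\ge 1/2$ or $v_2(A_2)\ge 1/2$.
\begin{itemize}
\item If $v_2(A_1)\ge 1/2$, swapping the bundles gives minimum utility $\min\{v_1(A_2),v_2(A_1)\}>v_1(A_1)$. This contradicts condition (i) of leximin++.
\item If $v_2(A_1)< 1/2$, then $v_2(A_2)>1/2$. In this branch we argue through \EQX{} below, together with a standard exchange argument.
\end{itemize}
A cleaner route is to use the two conditions directly. \EQX{} from agent~1 toward agent~2 gives $v_1(A_1)\ge v_2(A_2\setminus\{g\})$ for every $g$. \EFX{} from agent~2 toward agent~1 is handled separately.

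\textbf{\EQX{} toward agent~2 via the leximin++ moves.} Take any $g\in A_2$ and suppose $v_2(A_2\setminus\{g\})>v_1(A_1)$. Move $g$ to agent~1. Agent~1's utility weakly increases and agent~2's stays above the old minimum, so the minimum does not drop. If the new minimum is strictly larger, condition (i) is violated. Otherwise $v_1(g)=0$, the minimum is unchanged, and agent~1 is still a minimum-utility agent, now holding more items. This violates condition (ii). Zero-valued goods need care here: if $v_1(g)=0$, agent~1's bundle grows while her utility stays the same, which is exactly what (ii) forbids. \EQX{} from agent~2 toward agent~1 is trivial, since $v_2(A_2)\ge v_1(A_1)\ge v_1(A_1\setminus\{g\})$.

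\textbf{\EFX{} in both directions.} We now reduce envy to the equitability comparison. Agent~2 does not envy at all: if $v_2(A_1)>v_2(A_2)$, then swapping gives agent~2 more than $v_2(A_2)\ge v_1(A_1)$. For agent~1 in the swapped allocation we need $v_1(A_2)\ge v_1(A_1)$, which holds when agent~1 envies. If agent~1 does not envy, we argue instead via the complement: by normalization, $v_2(A_1)>1/2$ and $v_2(A_2)<1/2$, so $v_1(A_1)<1/2$ and $v_1(A_2)>1/2$. The swap then strictly raises the minimum, contradicting (i).

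For agent~1's \EFX{} toward agent~2, suppose $v_1(A_2\setminus\{g\})>v_1(A_1)$ for some $g\in A_2$. Move $g$ to agent~1 and split into cases:
\begin{itemize}
\item If the new minimum is agent~1's, it is weakly larger; a strict increase violates (i), and equality forces $v_1(g)=0$ with more items, violating (ii).
\item If agent~2's new utility $v_2(A_2\setminus\{g\})$ falls below $v_1(A_1)$, consider the swapped allocation $(A_2\setminus\{g\},\,A_1\cup\{g\})$. Agent~1 gets $v_1(A_2\setminus\{g\})>v_1(A_1)$. Agent~2 gets $v_2(A_1\cup\{g\})=1-v_2(A_2\setminus\{g\})>1-v_1(A_1)$, and this exceeds $v_1(A_1)$ whenever $v_1(A_1)<1/2$, which again contradicts (i).
\item If $v_1(A_1)\ge 1/2$, agent~1 cannot envy at all, since $v_1(A_2)\le 1/2$.
\end{itemize}
So every case yields a contradiction with (i) or (ii).

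\textbf{Main obstacle.} The delicate step is the last case analysis, where agent~2 may become the new minimum after the transfer. Normalization, through the $1/2$ threshold and the swapped allocation, is what closes this case. Throughout, zero-valued items are handled by condition (ii).
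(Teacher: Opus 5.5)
Your \EQX{} argument and your proof that agent~2 never envies are the same as the paper's, and both are correct. The gap is in your case analysis for agent~1's \EFX{}. After moving $g$ to agent~1, your bullets cover (a) agent~1 is the new minimum, and (b) $v_2(A_2\setminus\{g\})<v_1(A_1)$. They do not cover the case where agent~2 becomes the new minimum with $v_2(A_2\setminus\{g\})\ge v_1(A_1)$.

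The strict subcase there is a trivial violation of (i). The boundary $v_2(A_2\setminus\{g\})=v_1(A_1)<v_1(A_1\cup\{g\})$ is not. There the transfer leaves the minimum unchanged, now attained by agent~2 with $|A_2|-1$ items, so the transferred allocation violates neither (i) nor (ii). The same boundary also creates a problem in bullet (a) when $v_1(g)=0$: both agents are then tied at the minimum, and your appeal to (ii) depends on how ties are broken. So your conclusion that ``every case yields a contradiction'' is not justified as written.

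The repair uses your own swap. It works whenever $v_2(A_2\setminus\{g\})\le v_1(A_1)$, not only for strict inequality, because then $v_2(A_1\cup\{g\})=1-v_2(A_2\setminus\{g\})\ge 1-v_1(A_1)>v_1(A_1)$, while $v_1(A_2\setminus\{g\})>v_1(A_1)$. When $v_2(A_2\setminus\{g\})>v_1(A_1)$, the transfer argument goes through exactly as in your \EQX{} step, with agent~1 as the unique minimizer.

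The paper avoids the split entirely: it lets agent~2 choose the better of $A_1\cup\{g\}$ and $A_2\setminus\{g\}$. That guarantees her at least $1/2>v_1(A_1)$, which removes every tie in one stroke.

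Separately, delete your opening claim that $v_1(A_1)\ge 1/2$. It is false: with a single good valued by both agents, a leximin++ allocation leaves the empty-handed agent envious. You never prove it, and nothing later needs it. What you actually use is the converse fact from your third bullet, that envy by agent~1 forces $v_1(A_1)<1/2$. State that before the swap bullet that relies on it.
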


\begin{proof}
Let $A=(A_1,A_2)$ be a leximin++ allocation and relabel the agents so that $v_1(A_1)\le v_2(A_2)$.

\smallskip\noindent\emph{\EQX{}.}
Suppose \EQX{} fails.
Then there exists $g\in A_2$ with $v_1(A_1)<v_2(A_2\setminus\{g\})$.
Transfer $g$ to agent~1, i.e., consider $A'=(A_1\cup\{g\},A_2\setminus\{g\})$. Agent~2's new utility remains strictly above the old minimum $v_1(A_1)$.
If $v_1(g)>0$, agent~1's utility strictly increases, so the minimum utility strictly increases.
If $v_1(g)=0$, the minimum utility is unchanged while the minimum-utility agent holds strictly more items.
Both cases contradict leximin++ optimality, so $A$ is \EQX{}.

\smallskip\noindent\emph{\EFX{}.}
First, the two agents cannot envy each other simultaneously: swapping their bundles would strictly increase both realized utilities, contradicting leximin++ optimality.
If agent~2 envied agent~1 while agent~1 did not envy agent~2, normalization would give $v_2(A_2)<1/2$ and $v_1(A_1)\ge1/2$, contradicting $v_1(A_1)\le v_2(A_2)$. Hence any envy is from agent~1 toward agent~2, in which case $v_1(A_1)<1/2$.

Suppose \EFX{} fails, i.e., there is $g\in A_2$ with $v_1(A_1)<v_1(A_2\setminus\{g\})$.
Consider the two parts $X=A_1\cup\{g\}$ and $Y=A_2\setminus\{g\}$, and let agent~2 choose her preferred part, with agent~1 receiving the other.
Agent~2's new utility is at least $\max\{v_2(X),v_2(Y)\}\ge v_2(M)/2=1/2$.
Agent~1 receives either $Y$, which she values strictly above $A_1$, or $X$, which weakly improves her utility and, in case of a tie ($v_1(g)=0$), strictly increases her number of items.
In every case the leximin++ objective strictly improves, a contradiction.
Hence $A$ is \EFX{}.
\end{proof}

\begin{remark}
Although computing a leximin++ allocation is computationally hard in general, it can be computed in polynomial time for binary valuations.
\end{remark}

\paragraph{Non-existence for chores.}
We now show that the guarantee of Proposition~\ref{sprop:leximin-two-agents} does not carry over to chores: under the all-item convention for \EFX{}/\EQX{}, even two agents with normalized valuations may admit no \EFX{}+\EQX{} allocation.

\begin{proposition}
\label{sprop:chores-efx-eqx-nonexist}
For two agents with normalized valuations, an allocation of chores that is both \EFX{} and \EQX{} may fail to exist, already with three chores.
\end{proposition}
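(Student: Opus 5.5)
The idea is to exhibit one explicit normalized two-agent instance with three chores and rule out every allocation by a short case analysis on bundle sizes. I write costs $d_i=-v_i\ge 0$. I would take chores $a,b,c$ with
\[
\begin{array}{c|ccc}
 & a & b & c\\
\hline
d_1 & 0.6 & 0.2 & 0.2\\
d_2 & 0.1 & 0.45 & 0.45
\end{array}
\]
Both rows sum to $1$, so the instance is normalized. All costs are strictly positive, which prevents a zero-cost chore from making an ``up to any'' removal trivially cheap.

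For chores, the \EFX{} and \EQX{} conditions for an agent $i$ holding $A_i\neq\emptyset$ say the following. Removing the cheapest chore (in $d_i$) of $A_i$ leaves a cost at most $d_i(A_j)$ for \EFX{}, and at most $d_j(A_j)$ for \EQX{}. If $|A_i|=1$, both conditions hold automatically, because the remaining cost is $0$. So the analysis only needs to consider agents holding at least two chores. I would split into three cases.

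\textbf{Case 1: one agent holds all three chores.} That agent still has positive cost after removing any one chore. The other agent's bundle is empty and has cost $0$ in both valuations. Hence both \EFX{} and \EQX{} fail.

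\textbf{Case 2: agent~1 holds two chores and agent~2 holds the remaining chore $z$.} Agent~1's remaining cost after removing her cheapest chore equals the larger cost of the pair. This must be at most $\min\{d_1(z),d_2(z)\}$, where the $d_1(z)$ bound comes from \EFX{} and the $d_2(z)$ bound from \EQX{}. I would check the three choices of $z$:
\begin{itemize}
\item $z=a$: the pair is $\{b,c\}$, with remaining cost $0.2>\min\{0.6,0.1\}=0.1$.
\item $z=b$: the pair is $\{a,c\}$, with remaining cost $0.6>\min\{0.2,0.45\}=0.2$.
\item $z=c$: symmetric to $z=b$, again $0.6>0.2$.
\end{itemize}
So every allocation in this case fails.

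\textbf{Case 3: agent~2 holds two chores and agent~1 holds the remaining chore $z$.} By the same reasoning, agent~2's remaining cost must be at most $\min\{d_2(z),d_1(z)\}$:
\begin{itemize}
\item $z=a$: the pair is $\{b,c\}$, with remaining cost $0.45>\min\{0.1,0.6\}=0.1$.
\item $z=b$: the pair is $\{a,c\}$, with remaining cost $0.45>\min\{0.45,0.2\}=0.2$.
\item $z=c$: symmetric to $z=b$, again $0.45>0.2$.
\end{itemize}
So every allocation in this case fails as well.

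Cases 1--3 cover all eight allocations, so no allocation is \EFX{}+\EQX{}.

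The only real difficulty is choosing the numbers. The design must ensure that, for each of the three ways of splitting off a singleton, the larger cost of the complementary pair beats either the \EFX{} bound or the \EQX{} bound. This leads to a cross structure: agent~1 concentrates her cost on $a$, while agent~2 treats $a$ as cheap. I would present the case table explicitly so the reader can verify each of the eight allocations directly.
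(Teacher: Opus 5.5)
Your proof is correct. Both cost rows sum to $1$, and singleton and empty bundles impose no constraint. All eight allocations are checked correctly: the violation is \EQX{} for $(\{b,c\},\{a\})$, $(\{b\},\{a,c\})$ and $(\{c\},\{a,b\})$, and \EFX{} in the remaining five.

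The paper takes a different route. It uses the integer instance $v_1=(0,-1,-2)$, $v_2=(-2,-1,0)$. A symmetry that swaps the two agents together with $c_1\leftrightarrow c_3$ reduces the check to the four allocations with $|A_1|\le 1$. In the cases $A_1=\{c_1\}$ and $A_1=\{c_2\}$, the exhibited \EQX{} violation comes from deleting $c_3$, which agent~2 values at $0$.

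As the paper's own remark after the proposition concedes, its nonexistence therefore depends on the all-item convention. If only negatively valued chores may be removed, that instance does have an \EFX{}+\EQX{} allocation.

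Because all your costs are strictly positive, ``removing any chore'' and ``removing any costly chore'' coincide in your instance. Your counterexample therefore holds under either convention and proves a slightly stronger statement; incidentally, both of your agents are also bivalued. The paper's route buys only an integral instance and a case analysis halved by symmetry.
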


\begin{proof}
Consider three chores $c_1,c_2,c_3$ and the valuations
\begin{center}
\begin{tabular}{c|ccc}
 & $c_1$ & $c_2$ & $c_3$\\
\hline
Agent $1$ & $0$ & $-1$ & $-2$\\
Agent $2$ & $-2$ & $-1$ & $0$
\end{tabular}
\end{center}
Both agents value the grand bundle at $-3$, so the instance is normalized.

The instance is symmetric under simultaneously exchanging the two agents and the chores $c_1\leftrightarrow c_3$ (keeping $c_2$ fixed): this relabeling maps each valuation function onto the other, and it maps an allocation $(A_1,A_2)$ to an allocation whose bundles have the same values to their owners and to the other agent.
Hence an allocation is \EFX{} (resp.\ \EQX{}) if and only if its image is, and it suffices to rule out the allocations with $|A_1|\le 1$; the cases $|A_1|\ge2$ are their images under the symmetry.

\smallskip\noindent\emph{Case $A_1=\emptyset$.}
Then $A_2=\{c_1,c_2,c_3\}$ with $v_2(A_2)=-3$, while $v_1(A_1)=0$.
Removing the zero-valued chore $c_3$ from $A_2$ leaves $v_2(A_2\setminus\{c_3\})=-3<0=v_1(A_1)$, so \EQX{} fails.

\smallskip\noindent\emph{Case $A_1=\{c_1\}$.}
Then $v_1(A_1)=0$ and $A_2=\{c_2,c_3\}$ with $v_2(A_2)=-1$.
Removing the zero-valued chore $c_3$ from $A_2$ leaves $v_2(A_2\setminus\{c_3\})=-1<0=v_1(A_1)$, so \EQX{} fails.

\smallskip\noindent\emph{Case $A_1=\{c_2\}$.}
Then $v_1(A_1)=-1$ and $A_2=\{c_1,c_3\}$ with $v_2(A_2)=-2$.
Removing $c_3$ from $A_2$ leaves $v_2(A_2\setminus\{c_3\})=-2<-1=v_1(A_1)$, so \EQX{} fails.

\smallskip\noindent\emph{Case $A_1=\{c_3\}$.}
Then $A_2=\{c_1,c_2\}$, and agent~2 values agent~1's bundle at $v_2(A_1)=v_2(\{c_3\})=0$, while $v_2(A_2)=-3$, so agent~2 envies agent~1.
Removing $c_2$ (agent~2's least costly chore in her bundle) leaves $v_2(A_2\setminus\{c_2\})=-2<0=v_2(A_1)$, so the envy persists after the removal of some chore, and \EFX{} fails.

\smallskip
In every case, \EFX{} or \EQX{} is violated; by the symmetry noted above, the same holds for all remaining allocations.
Hence no allocation of this instance is simultaneously \EFX{} and \EQX{}.
\end{proof}

\begin{remark}
The violations above hinge on zero-valued chores and the all-item convention adopted in this paper.
Under the weaker convention in which only negatively valued chores may be removed, the instance of
Proposition~\ref{sprop:chores-efx-eqx-nonexist} does admit an \EFX{}+\EQX{} allocation, e.g., $A_1=\{c_1\}$ and $A_2=\{c_2,c_3\}$.
\end{remark}

The algorithm for chores mirrors \textsc{Greedy-Balance} for goods (\Cref{alg:two_agent_goods}), with the roles of the two agents adjusted for disutilities: the next chore is assigned to the agent with the currently \emph{higher} utility (i.e., smaller burden), and each agent takes a remaining chore with the largest value difference in her favor.

\begin{algorithm}[htp]
\caption{\textsc{Greedy-Balance} for two-agent chores}
\label{alg:utility-balance-chores}
\textbf{Input}: A normalized two-agent chores instance\\
\textbf{Output}: An allocation $(A_1,A_2)$
\begin{algorithmic}[1]
\STATE Initialize $A_1,A_2\leftarrow\emptyset$, $U_1,U_2\leftarrow 0$, and
$R\leftarrow M$.
\WHILE{$R\neq\emptyset$}
    \IF{$U_1\ge U_2$}
        \STATE Choose $c^\star\in\arg\max_{c\in R}\{v_1(c)-v_2(c)\}$.
        \STATE $A_1\leftarrow A_1\cup\{c^\star\}$ and
        $U_1\leftarrow U_1+v_1(c^\star)$.
    \ELSE
        \STATE Choose $c^\star\in\arg\max_{c\in R}\{v_2(c)-v_1(c)\}$.
        \STATE $A_2\leftarrow A_2\cup\{c^\star\}$ and
        $U_2\leftarrow U_2+v_2(c^\star)$.
    \ENDIF
    \STATE $R\leftarrow R\setminus\{c^\star\}$.
\ENDWHILE
\RETURN $(A_1,A_2)$.
\end{algorithmic}
\end{algorithm}

The cross-bundle dominance invariant of \Cref{lem:cross_bundle_dominance} holds verbatim; its proof is independent of the signs of the values.

\begin{lemma}
\label{slem:aw-chores}
At every stage of Algorithm~\ref{alg:utility-balance-chores},
\[
v_1(A_1)\ge v_2(A_1) \qquad\text{and}\qquad v_2(A_2)\ge v_1(A_2).
\]
\end{lemma}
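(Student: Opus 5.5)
I will mirror the proof of \Cref{lem:cross_bundle_dominance}, since that argument never uses the sign of the values. I prove the first inequality; the second follows by exchanging the roles of the agents.

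Define $\Delta(c)=v_1(c)-v_2(c)$ for every chore $c$. Normalization gives $v_1(M)=v_2(M)$, hence $\Delta(M)=0$. The first inequality is equivalent to $\Delta(A_1)\ge 0$.

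Suppose that at some stage $\Delta(A_1)<0$. Then some $c\in A_1$ has $\Delta(c)<0$. Since $\Delta(M)=0$ and $\Delta(A_1)<0$, we get $\Delta(M\setminus A_1)>0$, so some $h\notin A_1$ has $\Delta(h)>0$. Thus $\Delta(c)<0<\Delta(h)$.

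Next, consider the iteration at which $c$ was given to agent~1. At that iteration agent~1 picks a remaining chore maximizing $\Delta$. Since $\Delta(h)>\Delta(c)$, the chore $h$ cannot have been available, so it was already allocated. Because $h\notin A_1$, it went to agent~2 at an earlier iteration.

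At that earlier iteration $c$ was still in $R$. Agent~2 picks a chore maximizing $v_2-v_1=-\Delta$, so she would prefer $c$, since $-\Delta(c)>0>-\Delta(h)$. This contradicts her having chosen $h$. Hence $\Delta(A_1)\ge0$ at every stage.

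The only point needing care is that the selection rules in Algorithm~\ref{alg:utility-balance-chores} are the same argmax rules on $\pm\Delta$ as in the goods algorithm. The different branch condition ($U_1\ge U_2$) only decides who picks next, and the argument never uses it. Strict inequalities rule out tie-breaking issues.
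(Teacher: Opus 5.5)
Your proof is correct and follows the paper's argument essentially verbatim. It uses the same $\Delta$, obtains $c\in A_1$ with $\Delta(c)<0$ and $h\notin A_1$ with $\Delta(h)>0$ from $\Delta(M)=0$, and reaches the same two-step contradiction from agent~1's $\arg\max\Delta$ rule and agent~2's $\arg\max(-\Delta)$ rule; your remark that the branch condition $U_1\ge U_2$ plays no role matches the paper's observation that the goods invariant carries over independently of the signs of the values.
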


\begin{proof}
We prove the first inequality; the second is symmetric.
Define $\Delta(c)=v_1(c)-v_2(c)$; normalization gives $\Delta(M)=0$.
Suppose that at some stage $\Delta(A_1)=v_1(A_1)-v_2(A_1)<0$.
Since $\Delta(M)=0$, there exist $c\in A_1$ with $\Delta(c)<0$ and $d\notin A_1$ with $\Delta(d)>0$.
Consider the iteration in which $c$ was assigned to agent~1.
At that moment $d$ must already have been allocated: otherwise the algorithm, which assigns to agent~1 a remaining chore maximizing $\Delta(\cdot)$, would have chosen $d$ instead of $c$.
Thus $d$ was assigned earlier to agent~2.
At that earlier iteration $c$ was still available with
$\Delta(c)<0<\Delta(d)$; since agent~2 receives a remaining chore \emph{minimizing} $\Delta(\cdot)$, the algorithm would have chosen $c$ rather than $d$, a contradiction.
\end{proof}

\begin{proof}[Proof of \Cref{cor:two_agents_chores}]
Let $A=(A_1,A_2)$ be the returned allocation.

\smallskip\noindent\emph{\EQone{}.}
Without loss of generality, assume $v_1(A_1)\le v_2(A_2)$.
The ordered comparison from agent~2 is immediate: since values are nonpositive, removing any chore from $A_2$ only increases its value, so $v_2(A_2\setminus\{c\})\ge v_2(A_2)\ge v_1(A_1)$ for every $c\in A_2$ (and the condition is vacuous if $A_2=\emptyset$). For the comparison from agent~1, if $A_1=\emptyset$ then $0=v_1(A_1)\le v_2(A_2)\le 0$ forces $v_2(A_2)=0$ and the condition holds trivially.
Otherwise, let $c_\ell$ be the \emph{last} chore assigned to agent~1, at some iteration $t_\ell$, and let $A_1(t_\ell),A_2(t_\ell)$ denote the bundles held immediately before that assignment.
Since the algorithm assigned $c_\ell$ to agent~1, we have $v_1(A_1(t_\ell))=U_1\ge U_2=v_2(A_2(t_\ell))$ at that moment.
Moreover $A_1\setminus\{c_\ell\}=A_1(t_\ell)$, and every chore allocated after iteration $t_\ell$ went to agent~2, which can only decrease her utility, so $v_2(A_2)\le v_2(A_2(t_\ell))$.
Combining,
\[
v_1(A_1\setminus\{c_\ell\}) = v_1(A_1(t_\ell)) \ge v_2(A_2(t_\ell)) \ge v_2(A_2),
\]
which is the \EQone{} condition for agent~1.

\smallskip\noindent\emph{\EFone{}.}
Suppose agent~1 envies agent~2, i.e., $v_1(A_1)<v_1(A_2)$.
By Lemma~\ref{slem:aw-chores}, $v_2(A_2)\ge v_1(A_2)$, hence $v_1(A_1)<v_2(A_2)$, and in particular $A_1\neq\emptyset$.
The \EQone{} argument above then yields a chore $c_\ell\in A_1$ with $v_1(A_1\setminus\{c_\ell\})\ge v_2(A_2)\ge v_1(A_2)$, which is exactly the \EFone{} condition.
The case in which agent~2 envies agent~1 is symmetric.
The algorithm performs one greedy step per chore and thus runs in polynomial time.
\end{proof}

\subsection{Why the Barrier at Seven Agents?} \label{sec:appendix_7_agent_barriers}

\begin{remark}
\label{srem:beyond-seven}
The analysis of \Cref{alg:level-scatter} is \emph{local}: when a residual good with valuer set $T$ ($|T|=q$) cannot be scattered, every candidate bundle already holds $t+1$ goods valued by some member of $T$, so by pigeonhole one valuer sees at least $(t+1)\lceil(n-q)/q\rceil$ goods in other bundles---contradicting the mass bound $k\le r(t+1)-1$ precisely when $\lceil(n-q)/q\rceil\ge r-1$ (\Cref{lem:small-population-seven}).  The number of blocking configurations available to an adversary grows with $s\cdot r\le\lfloor n^{2}/4\rfloor$, while the saturation mass that the normalized budgets can afford grows only linearly, as $2n-2$; since $\lfloor n^{2}/4\rfloor\le 2n-2$ holds exactly for $n\le7$---with equality at $n=7$---the counting argument closes at seven with zero slack and first fails at $n=8$ (already for $q=2$).

\begin{figure*}[t]
\centering
\scriptsize
\setlength{\tabcolsep}{3.4pt}
\renewcommand{\arraystretch}{1.15}
\begin{tabular}{r|cc|cc|cc|cc|cc|cc|cc|c||cc}
 & \multicolumn{2}{c|}{$P_u$} & \multicolumn{2}{c|}{$P_v$}
 & \multicolumn{2}{c|}{$P_{j_1}$} & \multicolumn{2}{c|}{$P_{j_2}$}
 & \multicolumn{2}{c|}{$P_{j_3}$} & \multicolumn{2}{c|}{$P_{j_4}$}
 & \multicolumn{2}{c|}{$P_{j_5}$} & \multicolumn{1}{c||}{$P_{j^{\dagger}}$}
 & \multicolumn{2}{c}{$R$}\\
 & $z^{u}_{1}$ & $z^{u}_{2}$ & $z^{v}_{1}$ & $z^{v}_{2}$
 & $o_{1}$ & $o_{2}$ & $o_{3}$ & $o_{4}$ & $o_{5}$ & $o_{6}$
 & $o_{7}$ & $o_{8}$ & $o_{9}$ & $o_{10}$ & $o_{11}$
 & $g_{1}$ & $g_{2}$\\
\hline
$u$ & $\mathbf{1}$ & $\mathbf{1}$ & 0 & 0
 & \cellcolor{red!18}1 & \cellcolor{red!18}1
 & \cellcolor{red!18}1 & \cellcolor{red!18}1
 & \cellcolor{red!18}1 & \cellcolor{red!18}1
 & 0 & 0 & 0 & 0
 & \cellcolor{yellow!35}1
 & 1 & 1\\
$v$ & 0 & 0 & $\mathbf{1}$ & $\mathbf{1}$
 & \cellcolor{blue!15}1 & \cellcolor{blue!15}1
 & 0 & 0 & 0 & 0
 & \cellcolor{blue!15}1 & \cellcolor{blue!15}1
 & \cellcolor{blue!15}1 & \cellcolor{blue!15}1
 & \cellcolor{yellow!35}1
 & 1 & 1\\
\hline
$j_1$ & 0 & 0 & 0 & 0 & $\mathbf{1}$ & $\mathbf{1}$ & 1 & 1 & 1 & 1 & 1 & 1 & 1 & 1 & 1 & 0 & 0\\
$j_2$ & 0 & 0 & 0 & 0 & 1 & 1 & $\mathbf{1}$ & $\mathbf{1}$ & 1 & 1 & 1 & 1 & 1 & 1 & 1 & 0 & 0\\
$j_3$ & 0 & 0 & 0 & 0 & 1 & 1 & 1 & 1 & $\mathbf{1}$ & $\mathbf{1}$ & 1 & 1 & 1 & 1 & 1 & 0 & 0\\
$j_4$ & 0 & 0 & 0 & 0 & 1 & 1 & 1 & 1 & 1 & 1 & $\mathbf{1}$ & $\mathbf{1}$ & 1 & 1 & 1 & 0 & 0\\
$j_5$ & 0 & 0 & 0 & 0 & 1 & 1 & 1 & 1 & 1 & 1 & 1 & 1 & $\mathbf{1}$ & $\mathbf{1}$ & 1 & 0 & 0\\
$j^{\dagger}$ & 0 & 0 & 0 & 0 & 1 & 1 & 1 & 1 & 1 & 1 & 1 & 1 & 1 & 1 & $\mathbf{1}$ & 0 & 0\\
\hline\hline
$c_{u,\cdot}$ & \multicolumn{2}{c|}{$2$} & \multicolumn{2}{c|}{$0$}
 & \multicolumn{2}{c|}{\cellcolor{red!18}$\;2=t{+}1\;$}
 & \multicolumn{2}{c|}{\cellcolor{red!18}$\;2=t{+}1\;$}
 & \multicolumn{2}{c|}{\cellcolor{red!18}$\;2=t{+}1\;$}
 & \multicolumn{2}{c|}{$0$} & \multicolumn{2}{c|}{$0$}
 & \cellcolor{yellow!35}$1$
 & \multicolumn{2}{c}{}\\
$c_{v,\cdot}$ & \multicolumn{2}{c|}{$0$} & \multicolumn{2}{c|}{$2$}
 & \multicolumn{2}{c|}{\cellcolor{blue!15}$\;2=t{+}1\;$}
 & \multicolumn{2}{c|}{$0$} & \multicolumn{2}{c|}{$0$}
 & \multicolumn{2}{c|}{\cellcolor{blue!15}$\;2=t{+}1\;$}
 & \multicolumn{2}{c|}{\cellcolor{blue!15}$\;2=t{+}1\;$}
 & \cellcolor{yellow!35}$1$
 & \multicolumn{2}{c}{}\\
\end{tabular}
\caption{An eight-agent instance ($m=17$, $k=11$) on which the scattering phase stalls from the displayed maximum partial allocation.  Columns are grouped by bundle; bold entries mark each good's owner.  The six outside agents $j_1,\dots,j_5,j^{\dagger}$ collectively value only $o_1,\dots,o_{11}$, so the egalitarian level is $t=1$, every bundle holds $t+1=2$ goods except $P_{j^{\dagger}}$, and the goods $g_1,g_2$ (type $\{u,v\}$) are residual.  The two bottom rows give the counts $c_{u,P_j}:=v_u(P_j)$ and $c_{v,P_j}:=v_v(P_j)$: every bundle is saturated for $u$ (red) or for $v$ (blue), in an anti-correlated pattern---$u$'s slack sits at $P_{j_4},P_{j_5}$, $v$'s at $P_{j_2},P_{j_3}$---while a $\{u,v\}$-good needs room in \emph{both} rows at the same bundle.  Only $P_{j^{\dagger}}$ (yellow) offers such a slot, and only once: after $g_1$ enters it, $g_2$ cannot be scattered, although every per-agent count is far below its budget.  An \EFone{}+\EQone{} allocation exists nonetheless (assign $g_1,g_2$ to $u$, releasing the freely scatterable singletons $z^{u}_{1},z^{u}_{2}$). Notice that the barrier is stemming from the \emph{choice} of partial allocation, not the instance.}
\label{fig:barrier-instance}
\end{figure*}
 
Importantly, beyond seven agents, scattering no longer succeeds from \emph{every} maximum partial allocation.  Consider $n=8$ (\Cref{fig:barrier-instance}): agents $u,v$ and $j_1,\dots,j_5,j^{\dagger}$, and $m=17$ goods with $k=11$: goods $z^{u}_{1},z^{u}_{2}$ valued by $\{u\}$; $z^{v}_{1},z^{v}_{2}$ by $\{v\}$; $g_{1},g_{2}$ by $\{u,v\}$; and $o_{1},\dots,o_{11}$, each valued by all six outside agents, where $o_{1},o_{2},o_{11}$ are additionally valued by both $u$ and $v$, $o_{3},\dots,o_{6}$ by $u$, and $o_{7},\dots,o_{10}$ by $v$.  The outside agents collectively value only eleven goods, so $t=1$.  In the maximum partial allocation that gives the $z$-goods to their valuers, $o_{1,2}$ to $j_{1}$, $o_{3,4}$ to $j_{2}$, $o_{5,6}$ to $j_{3}$, $o_{7,8}$ to $j_{4}$, $o_{9,10}$ to $j_{5}$, and $o_{11}$ to $j^{\dagger}$, the residual goods are $g_{1},g_{2}$, and every bundle is saturated for $u$ or for $v$ in an \emph{anti-correlated} pattern: $u$'s slack sits at $j_{4},j_{5}$, $v$'s at $j_{2},j_{3}$, yet a $\{u,v\}$-good needs room in both rows \emph{at the same bundle}, which only $j^{\dagger}$ offers---once.  After $g_{1}$ enters $j^{\dagger}$, the good $g_{2}$ cannot be placed and the scattering phase stalls, even though every per-agent count is good; here the swap phase rescues the run (the residual type equals $S=\{u,v\}$, so $g_{1},g_{2}$ can be swapped into $u$'s bundle, releasing freely scatterable singletons).  

The example shows that beyond seven agents, the scattering phase need not succeed from every maximum partial allocation, even when an \EFone{}+\EQone{} allocation exists. It is not a nonexistence example. Extending the approach to more agents may require stronger structural transformations before scattering or a more careful choice of the initial maximum partial allocation. In view of the 113-agent nonexistence result in \Cref{thm:binary-113-counterexample}, no such rule can work on all normalized binary instances. The relevant challenge is to improve the population bound within $8,\ldots,112$, or to handle broader structural classes, as stated in \Cref{op:binary-gap}.
\end{remark}

\section{A Simpler Algorithm for Five Agents}
\label{sec:five-supp}

For five agents, the swap phase and the use of non-valuers inside $S$ are not needed: the following simpler version of \textsc{Level-and-Scatter} assigns every residual good to an agent outside $S$.
We include it because both the algorithm and its case analysis are simpler than those required for \Cref{thm:binary-7}.

\begin{algorithm}[htp]
\caption{\textsc{Simple-Level-and-Scatter} for five agents}
\label{alg:level-scatter-five}
\begin{algorithmic}[1]
\small
\REQUIRE A normalized binary goods instance with five agents
\ENSURE An allocation $(A_1,\ldots,A_5)$
\STATE Remove all junk goods from $M$ and store them in $J$.
\STATE Compute the maximum feasible egalitarian level $t$.
\STATE Compute a maximum partial allocation $P=(P_1,\ldots,P_5)$ such that
every good in $P_i$ is valued by $i$ and $t\le |P_i|\le t+1$ for every $i$.
\STATE Let $R\gets M\setminus\bigcup_{i\in N}P_i$.
\STATE Construct the residual alternating graph and let $S$ be the set of
agents reachable from $R$.
\STATE Initialize $A_i\gets P_i$ for every $i\in N$.
\FOR{each $g\in R$}
    \STATE Choose $j\in N\setminus S$ such that
    $v_i(A_j\cup\{g\})\le t+1$ for every $i\in S$.
    \STATE $A_j\gets A_j\cup\{g\}$.
\ENDFOR
\STATE Distribute the goods in $J$ arbitrarily.
\RETURN $(A_1,\ldots,A_5)$.
\end{algorithmic}
\end{algorithm}

\begin{theorem}
\label{sthm:five-agents}
For five agents with normalized binary valuations, Algorithm~\ref{alg:level-scatter-five} computes, in polynomial time, an allocation of goods satisfying both \EFone{} and \EQone{}.
\end{theorem}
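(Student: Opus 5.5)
The plan is to reuse the structural machinery developed for \Cref{alg:level-scatter} and to note that, for five agents, one crude counting bound already guarantees a feasible recipient outside $S$ for every residual good. The partial allocation $P$, the residual set $R$ and the reachable set $S$ are defined exactly as before, so \Cref{lem:closure} and \Cref{lem:outsideS} apply verbatim. If $R=\emptyset$, the argument in the first paragraph of the proof of \Cref{thm:binary-7} already gives \EFone{}+\EQone{}. Otherwise write $s=|S|\ge1$, $r=|N\setminus S|\ge2$ and $B=t+1$; since $s+r=5$, we have $(s,r)\in\{(1,4),(2,3),(3,2)\}$. By \Cref{lem:closure}(3), every agent approves $k\le rB-1$ goods.

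First I will establish the invariant $v_i(A_j)\le B$ for all $i,j$ immediately before each scattering step, as in \Cref{lem:blocking-seven}. Initially $|P_j|\le B$. By \Cref{lem:closure}(2), each good of $R$ is valued only by agents in $S$. Its recipient lies outside $S$ and so values it at zero, while the selection rule in line~8 keeps the cap for every valuer in $S$.

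The key step is to show that line~8 always has a feasible choice. Suppose $g\in R$ has no feasible recipient. Then each of the $r$ bundles $A_j$ with $j\notin S$ has some $i\in S$ with $v_i(A_j)=B$. Charging each such bundle to one saturating agent, some $i\in S$ is charged at least $c:=\lceil r/s\rceil$ bundles. These bundles are disjoint from $P_i$ (since $i\in S$) and from $g$ (since $g$ is still unassigned). Agent $i$ also values $g$, because $\tau(g)\subseteq S$ is nonempty; if $i\notin\tau(g)$, I would instead charge only to agents in $\tau(g)$, exactly as in \Cref{lem:blocking-seven}, with $q=|\tau(g)|\le s$, so $\lceil r/q\rceil\ge c$ only helps. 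Counting the goods approved by $i$ then gives
\[
k\;\ge\; B + cB + 1 .
\]
Combined with $k\le rB-1$, this is a contradiction as soon as $c\ge r-1$. A direct check of the three pairs settles this: $(1,4)$ gives $c=4\ge3$, $(2,3)$ gives $c=2\ge2$, and $(3,2)$ gives $c=1\ge1$. I expect this case check to be the only delicate point, namely verifying that the extra $B$ from $P_i$ makes the outside-$S$ recipient set suffice without any swap phase. It holds for $n\le5$, and for $n\ge6$ it would already fail at $(2,4)$.

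Finally, completeness and fairness follow exactly as in the proof of \Cref{thm:binary-7}. Every scattered good and every junk good is worth zero to its recipient, so realized utilities stay in $\{t,t+1\}$, which gives \EQone{}. The invariant persists to the end, so $v_i(A_j)\le t+1\le v_i(A_i)+1$ for all $i,j$, and by \eqref{eq:binary-fairness-cap} this gives \EFone{}. The running time is polynomial: computing $t$ and $P$ is a flow computation, $S$ is found by graph search, and the algorithm performs at most $|M|$ scattering steps, each with polynomially many value checks.
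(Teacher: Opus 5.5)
Your proof is correct and is essentially the paper's argument. The paper also splits into $|S|\in\{1,2,3\}$, and in each case it plays the mass bound $k-B\le(r-1)B-1$ from \Cref{lem:closure}(3) against the $\lceil r/q\rceil B$ approved goods that saturated outside bundles would force outside $P_i$ (using pigeonhole for doubly-valued goods when $|S|=2$); this is exactly your uniform check $\lceil r/s\rceil\ge r-1$, and your hedge about $i\notin\tau(g)$ is unnecessary, since under the invariant only a valuer of $g$ can block a bundle, so the charging automatically goes to agents in $\tau(g)$.
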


\begin{proof}
Let $k$ be the common number of goods valued by each agent.
If $R$ is empty, every utility is $t$ or $t+1$, and each partial bundle $P_j$ has cardinality at most $t+1$.
Adding junk goods does not affect any value.
The allocation is therefore \EQone{}, and $v_i(A_j)\le t+1\le v_i(A_i)+1$ for every $i,j$, which implies \EFone{} under binary valuations.

Suppose that $R\neq\emptyset$.
Lemma~\ref{lem:outsideS} gives $|N\setminus S|\ge2$, so $|S|\in\{1,2,3\}$.
We show that the scattering step is feasible in all three cases.

\smallskip\noindent
\emph{Case 1: $|S|=3$.}
Let $N\setminus S=\{p,q\}$.
Lemma~\ref{lem:closure}(3), together with the fact that at least one outside agent receives only $t$ goods, gives
\[
    k\le |P_p|+|P_q|\le2t+1.
\]
Each $i\in S$ therefore values at most $k-(t+1)\le t$ goods outside $P_i$, including all residual goods.
Assigning any residual good to either $p$ or $q$ consequently keeps both outside bundles worth at most $t+1$ to every agent in $S$.

\smallskip\noindent
\emph{Case 2: $|S|=1$.}
Write $S=\{i\}$.
Every residual good is valued only by $i$, and Lemma~\ref{lem:closure}(3) gives
\[
    k\le\sum_{j\notin S}|P_j|\le4t+3.
\]
Thus, outside $P_i$, there are at most $k-(t+1)\le3t+2$ goods valued by $i$.
The four outside bundles have combined capacity $4(t+1)$ under the cap $t+1$.
If no outside bundle could receive the next $i$-valued residual good, all four bundles would already have value $t+1$ for $i$, requiring $4(t+1)>3t+2$ such goods outside $P_i$, a contradiction.

\smallskip\noindent
\emph{Case 3: $|S|=2$.}
Write $S=\{1,2\}$.
The three outside bundles contain at most $3t+2$ goods in total, so for each
$i\in S$,
\[
    k-(t+1)\le2t+1.
\]
If a residual good is valued by only one agent $i\in S$, failure would mean that all three outside bundles already have value $t+1$ for $i$, requiring $3(t+1)>2t+1$ goods valued by $i$ outside $P_i$, a contradiction.
If the good is valued by both agents and no outside bundle can receive it, each of the three outside bundles is saturated for agent~1 or agent~2.
By the pigeonhole principle, one of these agents saturates at least two bundles and therefore values at least $2(t+1)=2t+2$ goods outside her own bundle, contradicting the bound $2t+1$.

\smallskip
Thus every residual good can be assigned. By Lemma~\ref{lem:closure}(2), agents outside $S$ value every residual good at zero, so no realized utility changes and all utilities remain in $\{t,t+1\}$; hence \EQone{} holds.
The construction also keeps every bundle worth at most $t+1$ to every agent, so $v_i(A_j)\le v_i(A_i)+1$ for all $i,j$.
As in the proof of \Cref{thm:binary-7}, binary valuations make this inequality equivalent to \EFone{}.
The flow computation, graph search, and scattering steps are all polynomial-time operations.
\end{proof}

\section{An Alternative Argument for Three Agents}
\label{sec:three-supp}

For three agents, we record a self-contained argument based on Hall's theorem; it is subsumed by \Cref{thm:binary-7} but uses considerably lighter machinery.

\begin{theorem}
\label{sthm:three-agents}
For three agents with normalized binary valuations, an allocation of goods satisfying both \EFone{} and \EQone{} always exists.
\end{theorem}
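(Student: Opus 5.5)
Since \Cref{thm:binary-7} already covers $n\le 7$, the statement follows formally, but the point here is a self-contained Hall-type argument for three agents. I would avoid the residual/scatter machinery and directly construct an allocation in which every realized utility lies in $\{t,t+1\}$ and every agent values every bundle at most $t+1$. By \eqref{eq:binary-fairness-cap}, these two properties give \EQone{} and \EFone{} respectively. Let $k$ be the common approval size, and discard junk goods, which can be placed anywhere at the end without changing any value.

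\textbf{Step 1 (egalitarian level via Hall).} Let $t$ be the largest integer such that the bipartite graph in which agent $i$ is replicated $t$ times and joined to $\Gamma_i$ has an agent-saturating matching. Hall's condition reads $t\,|X|\le|\bigcup_{i\in X}\Gamma_i|$ for all $X\subseteq N$. Since $|\Gamma_i|=k$, singletons never bind once $t\le k$. Hence $t$ is determined by the pair and triple unions:
\[
t=\min\bigl\{k,\ \lfloor|\Gamma_i\cup\Gamma_j|/2\rfloor,\ \lfloor|\Gamma_1\cup\Gamma_2\cup\Gamma_3|/3\rfloor\bigr\}.
\]
Then I extend the matching as far as possible subject to $|P_i|\le t+1$, exactly as in the main algorithm.

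\textbf{Step 2 (case split on the tight Hall set).} The leftover goods $R$ are the problem. If $R=\emptyset$ we are done, as in the first paragraph of the proof of \Cref{thm:binary-7}. Otherwise, by \Cref{lem:outsideS}, at least two agents lie outside the reachable set $S$, so $|S|=1$. Say $S=\{i\}$, so every residual good is valued only by $i$ (\Cref{lem:closure}(2)). Let the two outside agents be $p,q$. By \eqref{eq:mass-bound-seven}, $k\le|P_p|+|P_q|\le 2t+1$, so agent $i$ values at most $k-(t+1)\le t$ goods outside $P_i$.

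\textbf{Step 3 (placement).} Give every residual good to $p$ or $q$, who value it at zero by \Cref{lem:closure}(2). No realized utility changes, and $p,q$ see no new value. For $i$, every outside bundle contains at most $t$ goods she values in total, so no bundle exceeds $t+1$ in her eyes. Thus any placement works, and this is just the case $s=1,r=2$ of \Cref{lem:small-population-seven}, verified by hand.

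The main obstacle is Step 2: justifying from Hall's theorem alone, without the alternating-graph language, that a nonempty residual forces exactly one agent to be saturated at $t+1$ while the other two are jointly tight. I would argue this via the tight Hall set $X$ witnessing maximality of $t$. The agents in $X$ share at most $t|X|+|X|-1$ approved goods. A pair or triple tight set with a surplus must then confine every unassigned approved good to the complementary agent. I would do this by a short case analysis on $|X|\in\{2,3\}$, using normalization ($|\Gamma_i|=k$) to show the residual goods avoid $\bigcup_{j\in X}\Gamma_j$.
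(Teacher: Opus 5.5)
Your argument is correct, but it is not the route the paper takes for this statement. What you wrote is exactly the case $(s,r)=(1,2)$ of the proof of \Cref{thm:binary-7}, and it mirrors Case~1 of \Cref{sthm:five-agents}. With a maximum $[t,t+1]$-partial allocation, \Cref{lem:outsideS} forces $|S|=1$. The mass bound \eqref{eq:mass-bound-seven} then gives $k\le 2t+1$, so the reachable agent values at most $t$ goods outside $P_i$. Hence \emph{every} placement of the residual goods with the two outside agents works, and no case analysis is needed.

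The paper's appendix proof deliberately avoids the alternating-graph lemmas. It fixes a maximal \emph{balanced} partial allocation in which each agent receives exactly $u$ approved goods, and shows $u\ge\lfloor k/3\rfloor$ by Hall. It then splits on the residual approval counts $d_1\le d_2\le d_3$. Systems of distinct representatives either contradict the maximality of $u$ or finish the allocation explicitly, with a capacity argument when the leftovers are approved by a single agent.

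So the trade-off is this. Your route is shorter and uniform, but it is not lighter machinery: it inherits the augmenting-path argument of \Cref{lem:closure}(1). For that reason, ``extend as far as possible'' must mean a maximum-cardinality partial allocation, not a greedy maximal extension. The paper's route uses only Hall's theorem, at the cost of many cases.

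Your closing paragraph is not needed, since Step~2 already follows from the cited lemmas. Its stated target is also inaccurate: an outside agent can hold $t+1$ goods too. For example, take $\Gamma_p=\Gamma_q=\{1,2,3\}$ and $\Gamma_i=\{4,5,6\}$, so $t=1$, with $P_p=\{1,2\}$, $P_q=\{3\}$, $P_i=\{4,5\}$ and $R=\{6\}$.

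If you want a genuinely Hall-only Step~2, it is short. When $R\neq\emptyset$ one has $k\ge t+1$, so the set witnessing infeasibility of level $t+1$ is a pair or the triple.
\begin{itemize}
\item For a tight pair $\{p,q\}$, a residual good approved by $p$ would force $|P_p|=t+1$. Together with $P_q$ and that good, this gives $|\Gamma_p\cup\Gamma_q|\ge 2t+2$, a contradiction. So residual goods avoid $\Gamma_p\cup\Gamma_q$, and $k\le 2t+1$ follows.
\item For a tight triple there is no ``complementary agent''. But counting with $|\Gamma_1\cup\Gamma_2\cup\Gamma_3|\le 3t+2$ forces exactly one agent at level $t+1$ and $|R|=1$. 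The single leftover then fits into either size-$t$ bundle.
\end{itemize}
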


\begin{proof}
Let $N=\{1,2,3\}$ be the agents and $M$ the goods.
Since the valuations are binary and normalized, there exists an integer $t$ such that every agent values exactly $t$ goods at $1$.
We may assume that every good is valued by at least one agent: goods valued at $0$ by all agents can be set aside and distributed arbitrarily at the end, since they affect neither utilities nor \EFone{}/\EQone{}.

We first construct a maximal balanced partial allocation.
Let $u$ be the largest integer such that there exists a partial allocation in which every agent receives exactly $u$ goods that she values at $1$; equivalently, a maximum balanced $b$-matching in the bipartite graph between agents and goods.

We claim $u\ge\lfloor t/3\rfloor$.
Suppose no balanced partial allocation of size $k$ exists.
By Hall's condition for $b$-matchings, there exists a nonempty subset of agents $X$ such that $|N(X)|<k|X|$, where $N(X)$ is the set of goods valued by agents in $X$.
Since every agent values exactly $t$ goods, $|N(X)|\ge t$, and since $|X|\le3$, it follows that $t\le|N(X)|<3k$, i.e., $k>t/3$.
Hence a balanced partial allocation of size $\lfloor t/3\rfloor$ always exists.

Let $(A_1,A_2,A_3)$ be a maximal balanced partial allocation; each bundle contains exactly $u$ goods valued by its owner, so $v_i(A_j)\le|A_j|=u$ for all $i,j\in N$, and the partial allocation is both \EF{} and \EQ{}.
Let $S$ be the set of unallocated goods, and for each agent $i$ let $d_i=|\{g\in S:v_i(g)=1\}|$.
Without loss of generality assume $d_1\le d_2\le d_3$.
We complete the allocation by case analysis; throughout, ``capacity'' arguments refer to keeping $v_i(A_j)\le u+1$ for all $i,j$, which, together with all realized utilities lying in $\{u,u+1\}$, implies \EQone{} and (under binary valuations) \EFone{} exactly as in the proof of \Cref{thm:binary-7}.

\smallskip\noindent
\textbf{Case 1: $d_1\ge 3$.}
Every agent has at least three valued goods in $S$, so by Hall's theorem there is a system of distinct representatives assigning one additional valued good to each agent.
This yields a balanced partial allocation of size $u+1$, contradicting the maximality of $u$.

\smallskip\noindent
\textbf{Case 2: $d_1=2$.}
If $d_3\ge 3$, Hall's theorem again yields a balanced partial allocation of size $u+1$, a contradiction; hence $d_1=d_2=d_3=2$.
The unique barrier to a system of distinct representatives occurs when $|S|=2$ and all three agents value both remaining goods.
In this case, allocate one remaining good to agent~1 and the other to agent~2.
The realized utilities become $(u+1,u+1,u)$, so \EQone{} holds; moreover every agent values every bundle at most $u+1$ while realizing at least $u$, so \EFone{} holds as well.

\smallskip\noindent
\textbf{Case 3: $d_1=1$.}
If $d_2\ge 3$, or $d_2=2$ and $d_3\ge3$, Hall's theorem yields a contradiction with maximality.
Two subcases remain.

\smallskip
\emph{Subcase 3.1: $d_2=d_3=2$.}
The only obstruction occurs when $|S|=2$, agent~1 values exactly one remaining good, and agents~2 and~3 value both.
Allocate the good valued by agent~1 to agent~1 and the other good to agent~2.
The utilities are $(u+1,u+1,u)$ and the same reasoning as in Case~2 applies.

\smallskip
\emph{Subcase 3.2: $d_2=1$.}
First allocate one valued good from $S$ to agent~1; if agent~2 still has a valued unallocated good, allocate one such good to agent~2.
All remaining goods are valued only by agent~3.
We distribute them between $A_1$ and $A_2$ so that $v_3(A_1)\le u+1$ and $v_3(A_2)\le u+1$.
Such a distribution exists: agent~3 values exactly $t$ goods, of which $u$ lie in $A_3$, so at most $t-u$ remaining goods are valued by agent~3, while the bundles $A_1,A_2$ currently satisfy $v_3(A_1),v_3(A_2)\le u$ and thus have combined capacity $2(u+1)$; from $u\ge\lfloor t/3\rfloor$ we get $3u+2\ge t$, i.e., $t-u\le 2u+2$, so a greedy distribution respects both caps.
Every agent's realized utility lies in $\{u,u+1\}$, which gives \EQone{}, and every bundle is worth at most $u+1$ to every agent, which gives \EFone{}.

\smallskip\noindent
\textbf{Case 4: $d_1=0$.}

\smallskip
\emph{Subcase 4.1: $d_2\ge 2$ and $d_3\ge 3$.}
We derive a contradiction with the maximality of $u$.
If agent~2's bundle contains a good valued by agent~1, reassign that good to agent~1, then assign two valued goods from $S$ to agent~2 and one valued good from $S$ to agent~3.
Otherwise agent~3's bundle must contain a good valued by agent~1 (agent~1 values $t\ge u+1$ goods, hence some valued good lies outside $A_1$; since $d_1=0$, it lies in $A_2\cup A_3$): reassign it to agent~1, then assign one valued good from $S$ to agent~2 and two valued goods from $S$ to agent~3.
In either case every agent's utility increases to $u+1$, and by Hall's theorem the required distinct goods in $S$ exist because $d_2\ge2$ and $d_3\ge3$.
This contradicts maximality.

\smallskip
\emph{Subcase 4.2: $d_2=d_3=2$.}
If $|S|=2$, allocate one remaining good to agent~2 and the other to agent~3; the utilities become $(u,u+1,u+1)$, which satisfies \EQone{} and \EFone{} as before.
If $|S|\ge3$, arguments analogous to Subcase~4.1 again produce a balanced partial allocation of size $u+1$, contradicting maximality.

\smallskip
\emph{Subcase 4.3: $d_2=1$.}
Allocate one valued good to agent~2.
The remaining goods are valued only by agent~3; distribute them between $A_1$ and $A_2$ subject to $v_3(A_1),v_3(A_2)\le u+1$, which is possible by the same capacity argument as in Subcase~3.2.
All utilities lie in $\{u,u+1\}$ and all bundles are worth at most $u+1$ to every agent, so \EQone{} and \EFone{} hold.

\smallskip
\emph{Subcase 4.4: $d_2=0$.}
All remaining goods are valued only by agent~3; distribute them between $A_1$ and $A_2$ as in Subcase~4.3.
The same argument applies.

\smallskip
In every case we obtain a complete allocation satisfying \EFone{} and \EQone{}.
\end{proof}

\section[Omitted Material from Ex-ante and Ex-post Guarantees]{Omitted Material from \Cref{sec:exante-expost}} \label{sec:appendix-randomized}

\begin{proposition}\label{prop:no-ex-ante-EQ-ex-post-EFX}
    For two agents with normalized valuations, a randomized allocation that is ex-ante \EQ{} and ex-post \EFX{} may fail to exist, even in goods instances.
\end{proposition}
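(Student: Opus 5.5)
The approach is to find a single normalized two-agent goods instance in which every \EFX{} allocation gives one agent strictly more realized utility than the other, and always the same agent, say $v_1(A_1)>v_2(A_2)$ for every \EFX{} allocation $A$. The conclusion then follows by averaging. If a randomized allocation $\bm{A}$ is ex-post \EFX{}, every allocation $A^{(\ell)}$ in its support satisfies $v_1(A^{(\ell)}_1)>v_2(A^{(\ell)}_2)$. Hence $\sum_\ell p_\ell v_1(A^{(\ell)}_1)>\sum_\ell p_\ell v_2(A^{(\ell)}_2)$, which contradicts ex-ante \EQ{}. So the whole proof reduces to a finite verification on one small instance, with no probabilistic reasoning beyond this step.

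The construction has to avoid the failure mode I see in the first candidates. Take agent~1 with a dominant good $g_1$ and agent~2 with flat values, for example $(1,0,0)$ against $(1/3,1/3,1/3)$. Then the allocation $A_2=\{g_1\}$ is always \EFX{}, since a singleton bundle is trivially \EFX{}-safe, and it gives agent~2 the larger utility. Meanwhile allocations where agent~1 keeps $g_1$ give agent~1 the larger utility, so mixing restores ex-ante \EQ{}. The design principle is therefore to make \emph{both} directions of \EFX{} bind in a way that excludes every allocation in which the weakly disadvantaged agent is agent~1.

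The main tool is goods worth zero to one agent but positive to the other. Under the ``any good'' convention, removing a good that agent~$i$ values at zero from $A_j$ leaves $v_i(A_j)$ unchanged, so \EFX{} then demands exact non-envy. I would give agent~1 a few goods she values at zero and agent~2 values positively. This forces any bundle of agent~2 that contains such a good to be entirely unenvied by agent~1. It also makes it costly for agent~2 to hold a single heavy good while agent~1 sits at utility $0$.

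Concretely, I would start from four or five goods with values chosen as small integers, normalized to a common total. I would aim for a configuration in which the \EFX{} constraints from agent~1 force her bundle to contain most of her valued goods, while agent~2's utility stays capped below that level.

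The key steps, in order, are as follows. First, fix the instance and check that it is normalized. Second, enumerate all $2^m$ allocations up to the obvious symmetries among identically valued goods, and for each one either exhibit a violated \EFX{} comparison or record that $v_1(A_1)>v_2(A_2)$. Third, conclude with the averaging argument above.

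The main obstacle is the first step, choosing the numbers. The danger is always an overlooked \EFX{} allocation with the reversed inequality, typically one where the agent holding few items has a singleton bundle. I would first try a small parametric family, such as agent~1 valuing goods at $(a,a,0,0)$ and agent~2 at $(b,b,c,c)$ with $2a=2b+2c$. I would tune $a,b,c$ so that every allocation with $v_1(A_1)\le v_2(A_2)$ is blocked by agent~1's envy toward a bundle containing one of her zero-valued goods. If no member of this family works, I would fall back to a brute-force search over small integer instances with $m\le 5$ to locate a certificate, and then present the resulting case table in the proof.
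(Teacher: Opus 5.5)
Your averaging step is correct and is exactly how the paper concludes. If every \EFX{} allocation of a normalized instance puts the same agent strictly ahead, then no lottery supported on \EFX{} allocations is ex-ante \EQ{}.

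The gap is that for an existential nonexistence statement, the witness instance and its case check \emph{are} the proof, and you provide neither. The one concrete family you propose also fails for every choice of parameters. With $v_1=(a,a,0,0)$, $v_2=(b,b,c,c)$ and $a=b+c$, the allocation $(\{g_1,g_3\},\{g_2,g_4\})$ gives both agents utility $a$. Each agent also values the other's bundle at exactly $a$, so the allocation is \EF{} and \EQ{}, and the point mass on it is an ex-ante \EQ{}, ex-post \EFX{} lottery. The same split works whenever the goods come in pairs with identical value vectors, so duplicated goods are ruled out from the start. A deferred brute-force search is a plan for finding a proof, not a proof.

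A working witness is a small repair of the dominant-good example you discarded. Keep agent~1 at $(1,0,0)$, but replace the flat agent's $(1/3,1/3,1/3)$ by $(0.1,0.5,0.4)$; this is the paper's instance with the agents relabeled. The key point is that a singleton is \EFX{}-safe only as the \emph{object} of envy, not for its holder.
\begin{itemize}
\item If $A_2=\{g_1\}$, agent~2 values $\{g_2,g_3\}$ at $0.9$. After either removal she still values it at $0.4$ or $0.5$, both above her own $0.1$, so \EFX{} fails.
\item If $g_1\in A_2$ together with another good, agent~1 has utility $0$. She keeps envying after removing a good she values at zero, so \EFX{} fails.
\item Hence $g_1\in A_1$. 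Then $g_2\in A_2$: otherwise agent~2 has at most $0.4$ but sees at least $0.5$ in $A_1\setminus\{g_1\}$.
\end{itemize}
So the only \EFX{} allocations are $(\{g_1\},\{g_2,g_3\})$ and $(\{g_1,g_3\},\{g_2\})$, with utilities $(1,0.9)$ and $(1,0.5)$. Agent~1 always receives her entire value. Agent~2 values $g_1$ positively but never gets it, so she is capped at $0.9$. Your averaging argument then finishes the proof.
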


\begin{proof}
Take three goods with
\[
\begin{aligned}
(v_1(g_1),v_1(g_2),v_1(g_3))&=(0.1,\,0.5,\,0.4),\\
(v_2(g_1),v_2(g_2),v_2(g_3))&=(1,\,0,\,0).
\end{aligned}
\]
Both grand-bundle values equal one, so the instance is normalized.

We first show that in any \EFX{} allocation, $g_1$ must go to agent~2.
Suppose instead that $g_1\in A_1$.
If $A_1$ also contained $g_2$ or $g_3$, then $A_1$ would contain a good that agent~2 values at zero; removing it would leave agent~2's value for $A_1$ at least $1>v_2(A_2)=0$, so agent~2's envy would persist, violating \EFX{}.
If $A_1=\{g_1\}$, then agent~1's utility is $0.1$, while $v_1(A_2\setminus\{g_2\})=0.4$ and $v_1(A_2\setminus\{g_3\})=0.5$, so agent~1's envy survives every single-good removal, again violating \EFX{}.

Next, $g_2$ must go to agent~1: if $A_2\supseteq\{g_1,g_2\}$, then agent~1's own value is at most $v_1(g_3)=0.4$, while removing $g_1$ from $A_2$ leaves value at least $v_1(g_2)=0.5$, so \EFX{} fails.

Consequently, the only \EFX{} allocations are
\[
A=(\{g_2\},\{g_1,g_3\})
\qquad\text{and}\qquad
A'=(\{g_2,g_3\},\{g_1\}),
\]
and both are indeed \EFX{}.
Their realized utility pairs are $(0.5,1)$ and $(0.9,1)$, respectively.
Every randomized allocation supported on these allocations gives agent~1 expected realized utility at most $0.9$, strictly below agent~2's expected realized utility of $1$.
Hence no randomized allocation supported on \EFX{} allocations is ex-ante \EQ{}.
\end{proof}

\subsection{Randomized Guarantees for Laminar Binary Goods Instances}\label{sec:appendix-randomized-laminar-goods}

\paragraph{Setup.}
Since the valuations are binary, $v_i(M)=|\Gamma_i|$ for every agent $i$, so normalization means that all approval sets have the same cardinality; write $k:=|\Gamma_i|$, a single constant shared by all agents.
Binary valuations are restricted additive, so \Cref{sobs:laminar-types} applies: any two agents have either identical or disjoint approval sets.
Let $\Gamma^{(1)},\ldots,\Gamma^{(T)}$ be the distinct approval sets---so they are pairwise disjoint and $|\Gamma^{(\ell)}|=k$ for every $\ell$---and let $N_\ell:=\{i\in N: \Gamma_i=\Gamma^{(\ell)}\}$ be the corresponding group of identical agents, $n_\ell:=|N_\ell|$, so that $\sum_\ell n_\ell=n$.
Let
\[
    J:=M\setminus\bigcup\nolimits_{\ell\in[T]}\Gamma^{(\ell)}
\]
denote the goods that no agent values; thus $M$ is the disjoint union of $J$ and $\Gamma^{(1)},\ldots,\Gamma^{(T)}$, and $v_i(g)=0$ for every $g\in J$ and every $i$.
Finally, set
\begin{gather*}
    r:=\max_{\ell\in[T]} n_\ell,
    \qquad
    u:=\left\lfloor \frac{k}{r}\right\rfloor,\\
    s:=k-ru\in\{0,1,\ldots,r-1\},
\end{gather*}
and note that $r\le\sum_\ell n_\ell=n$.
All part indices below live in $\mathbb{Z}_r$, i.e., they are taken modulo $r$.

\paragraph{The construction.}
\begin{description}
\item[Step 1 (parts).] For every $\ell\in[T]$, fix an ordered partition of $\Gamma^{(\ell)}$ into $r$ pairwise disjoint, \emph{possibly empty} blocks $B^\ell_0,\ldots,B^\ell_{r-1}$ of which exactly $s$ have size $u+1$ and the remaining $r-s$ have size $u$.
This is feasible because $s(u+1)+(r-s)u=ru+s=k$.
(When $k<r$ we have $u=0$ and $r-k$ of the blocks are empty.)
Note that \emph{every} approval set is cut into the same number $r$ of blocks, including those of groups with $n_\ell<r$.

\item[Step 2 (labels).] For every $\ell$, fix an injective map
$\sigma_\ell:N_\ell\to\mathbb{Z}_r$ and put $S_\ell:=\sigma_\ell(N_\ell)$, so $|S_\ell|=n_\ell\le r$.

\item[Step 3 (fillers).] For every $\ell$, fix an injective map
\[
    \varphi_\ell:\ \mathbb{Z}_r\setminus S_\ell\ \longrightarrow\ N\setminus N_\ell .
\]
Such a map exists: $|\mathbb{Z}_r\setminus S_\ell|=r-n_\ell$, $|N\setminus N_\ell|=n-n_\ell$, and $r\le n$ gives $r-n_\ell\le n-n_\ell$. (If $T=1$, then $r=n_1=n$ and $S_1=\mathbb{Z}_r$, so $\varphi_1$ is the empty map; and whenever $r-n_\ell\ge1$ we have $n_\ell<r\le n$, so $N\setminus N_\ell\neq\emptyset$.)
We emphasise the two properties of $\varphi_\ell$ that the proof uses: it is injective, and its image avoids $N_\ell$.
No relation between $\varphi_\ell$ and $\varphi_{\ell'}$ is required for $\ell\neq\ell'$; a single agent may serve as a filler for many groups.

\item[Step 4 (the $r$ allocations).] Fix an arbitrary partition $(J_1,\ldots,J_n)$ of the junk goods $J$, the same in every round.
For every $t\in\mathbb{Z}_r$ and every agent $i\in N_\ell$, define
\[
    A^{(t)}_i
    := B^{\ell}_{\sigma_\ell(i)+t}
    \ \cup\!\!\!\bigcup_{\ell':\, i\in\varphi_{\ell'}(\mathbb{Z}_r\setminus S_{\ell'})}\!\!\!
      B^{\ell'}_{\varphi_{\ell'}^{-1}(i)+t}
    \ \cup\ J_i .
\]
\end{description}
The randomized allocation $\bm{A}$ is the \emph{uniform} distribution over $A^{(0)},\ldots,A^{(r-1)}$; its support has size $r\le n$.

\begin{lemma}[Validity]
\label{slem:laminar-validity}
For every $t\in\mathbb{Z}_r$, $A^{(t)}$ is an allocation of $M$.
\end{lemma}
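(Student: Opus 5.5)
The plan is to show that the sets $A^{(t)}_i$, $i\in N$, are pairwise disjoint and that their union is $M$. Since $M$ is the disjoint union of $J$ and $\Gamma^{(1)},\ldots,\Gamma^{(T)}$, and each $\Gamma^{(\ell)}$ is the disjoint union of its blocks $B^\ell_0,\ldots,B^\ell_{r-1}$, it suffices to prove two things. First, each junk good $g\in J$ lies in exactly one bundle. Second, for every group $\ell$ and every index $p\in\mathbb{Z}_r$, the block $B^\ell_p$ is placed into exactly one bundle $A^{(t)}_i$ (empty blocks are harmless either way). The junk part is immediate: $J_i$ enters only $A^{(t)}_i$, and $(J_1,\ldots,J_n)$ partitions $J$.

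For the blocks I fix $\ell$ and $t$, and record which agent receives each block of group $\ell$. The defining formula puts a block $B^\ell_p$ into $A^{(t)}_i$ in exactly two ways. In the first, $i\in N_\ell$ and $p=\sigma_\ell(i)+t$. In the second, $i\in\varphi_\ell(\mathbb{Z}_r\setminus S_\ell)$ and $p=\varphi_\ell^{-1}(i)+t$. No other term of the formula mentions blocks of $\Gamma^{(\ell)}$: the union over $\ell'$ uses only blocks $B^{\ell'}$, and the own-group term uses $\Gamma^{(\ell_i)}$ for the unique group containing $i$. Write $q:=p-t\in\mathbb{Z}_r$, so that the translation $p\mapsto p-t$ is a bijection of $\mathbb{Z}_r$. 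The block $B^\ell_p$ then goes to $\sigma_\ell^{-1}(q)$ when $q\in S_\ell$, and to $\varphi_\ell(q)$ when $q\notin S_\ell$. Since $S_\ell$ and $\mathbb{Z}_r\setminus S_\ell$ partition $\mathbb{Z}_r$, exactly one case applies. In each case the recipient is unique because $\sigma_\ell$ and $\varphi_\ell$ are injective, so each inverse is single-valued on its image.

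I also need to rule out a block being counted twice within one bundle through different clauses. The own-group clause and the filler clause for the same $\ell$ cannot both apply to one agent $i$, because the image of $\varphi_\ell$ avoids $N_\ell$. Clauses for different $\ell'$ involve disjoint approval sets. Hence all the constituent pieces are disjoint, and each item of $M$ appears in exactly one bundle.

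The only real subtlety is this bookkeeping. One point to keep straight is that an agent may act as a filler for several groups, which is fine because those groups use disjoint sets. Another is that $\varphi_\ell^{-1}(i)$ is well defined whenever it is used. Both follow directly from the stated properties of $\sigma_\ell$ and $\varphi_\ell$, so I do not expect a genuine obstacle.
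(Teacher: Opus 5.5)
Your proposal is correct and is essentially the paper's argument. For fixed $\ell$ and $t$, translation by $t$ splits the indices into $S_\ell+t$, issued to members of $N_\ell$ and distinct by injectivity of $\sigma_\ell$, and $(\mathbb{Z}_r\setminus S_\ell)+t$, issued to fillers and distinct by injectivity of $\varphi_\ell$; this is then combined with the disjoint decomposition of $M$ into $\Gamma^{(1)},\ldots,\Gamma^{(T)},J$ and the fixed partition of $J$. Your final check that no bundle picks up a block twice is a more explicit version of the paper's closing remark, and it already follows from your ``exactly one case applies'' step.
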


\begin{proof}
Fix $t$ and $\ell$.
The translation $x\mapsto x+t$ is a bijection of $\mathbb{Z}_r$, so it maps the partition $\{S_\ell,\ \mathbb{Z}_r\setminus S_\ell\}$ of $\mathbb{Z}_r$ to the partition $\{S_\ell+t,\ (\mathbb{Z}_r\setminus S_\ell)+t\}$ of $\mathbb{Z}_r$.
In round $t$ the blocks of $\Gamma^{(\ell)}$ that are handed out are exactly those indexed by $S_\ell+t$---one to each member of $N_\ell$, distinct because $\sigma_\ell$ is injective---together with those indexed by $(\mathbb{Z}_r\setminus S_\ell)+t$---one to each filler, distinct because $\varphi_\ell$ is injective.
Hence every index $p\in\mathbb{Z}_r$ is issued exactly once, so the blocks of $\Gamma^{(\ell)}$ are distributed exactly once each.
As $\ell$ ranges over $[T]$ and the sets
$\Gamma^{(1)},\ldots,\Gamma^{(T)},J$ are pairwise disjoint with union $M$, and $(J_1,\ldots,J_n)$ partitions $J$, every good is assigned to exactly one agent.
Finally, the pieces constituting a single bundle $A^{(t)}_i$ are blocks of distinct approval sets together with $J_i$, hence pairwise disjoint.
\end{proof}

\begin{lemma}[Realized utilities and the cap]
\label{slem:laminar-cap}
Fix $t\in\mathbb{Z}_r$ and let $i\in N_\ell$.
Then
\[
    v_i\bigl(A^{(t)}_i\bigr)=\bigl|B^{\ell}_{\sigma_\ell(i)+t}\bigr|\in\{u,u+1\},
\]
and $v_i\bigl(A^{(t)}_j\bigr)\le u+1$ for every agent $j$.
\end{lemma}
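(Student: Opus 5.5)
The plan is to read off both claims directly from the definition of $A^{(t)}_i$ in Step~4. We use two structural facts. First, $v_i$ is the indicator of $\Gamma_i=\Gamma^{(\ell)}$. Second, the approval sets $\Gamma^{(1)},\ldots,\Gamma^{(T)}$ and the junk set $J$ are pairwise disjoint, by \Cref{sobs:laminar-types} and the setup. Consequently, for any bundle $X$ we have $v_i(X)=|X\cap\Gamma^{(\ell)}|$. So both claims reduce to counting how many goods of $\Gamma^{(\ell)}$ lie in a given bundle.

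\emph{Realized utility.} The bundle $A^{(t)}_i$ is the union of three kinds of pieces.
\begin{itemize}
\item The own block $B^\ell_{\sigma_\ell(i)+t}\subseteq\Gamma^{(\ell)}$.
\item Filler blocks $B^{\ell'}_{\varphi_{\ell'}^{-1}(i)+t}\subseteq\Gamma^{(\ell')}$, one for each $\ell'$ with $i\in\varphi_{\ell'}(\mathbb{Z}_r\setminus S_{\ell'})$.
\item The junk set $J_i\subseteq J$.
\end{itemize}
Since the image of $\varphi_{\ell'}$ avoids $N_{\ell'}$, every filler index satisfies $\ell'\neq\ell$. By disjointness of distinct approval sets, each filler block contributes $0$ to $v_i$, and so does $J_i$. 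Hence $v_i(A^{(t)}_i)=|B^\ell_{\sigma_\ell(i)+t}|$, and Step~1 guarantees this lies in $\{u,u+1\}$.

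\emph{The cap.} Fix any agent $j$, say $j\in N_{\ell''}$. Then $A^{(t)}_j\cap\Gamma^{(\ell)}$ is the union of those pieces of $A^{(t)}_j$ that are blocks of $\Gamma^{(\ell)}$. We split into cases on $\ell''$.
\begin{itemize}
\item If $\ell''=\ell$, only the own block of $j$ can be a block of $\Gamma^{(\ell)}$. No filler block is, since $\varphi_\ell$ avoids $N_\ell\ni j$.
\item If $\ell''\neq\ell$, the own block of $j$ lies in $\Gamma^{(\ell'')}$ and is disjoint from $\Gamma^{(\ell)}$. Among the filler blocks, at most one is a block of $\Gamma^{(\ell)}$: the one indexed by $\ell'=\ell$, present exactly when $j$ is in the image of $\varphi_\ell$. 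Because $\varphi_\ell$ is a function with a single preimage $\varphi_\ell^{-1}(j)$, there is at most one such block.
\end{itemize}
In both cases $A^{(t)}_j\cap\Gamma^{(\ell)}$ is either empty or a single block $B^\ell_p$. Therefore $v_i(A^{(t)}_j)\le\max_p|B^\ell_p|\le u+1$.

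The only delicate point is the filler term in the cap. We must make sure an agent $j$ cannot collect two blocks of the same approval set $\Gamma^{(\ell)}$. This is where the Step~3 properties are used: the union in Step~4 is indexed by distinct $\ell'$, contributing one block per $\ell'$, and the image of $\varphi_\ell$ avoids $N_\ell$. Together these rule out an own block and a filler block of the same set, as well as two filler blocks of the same set. Everything else is bookkeeping, using the fact from \Cref{slem:laminar-validity} that the pieces are disjoint.
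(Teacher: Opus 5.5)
Your proof is correct and follows the paper's argument. You reduce to $v_i(X)=|X\cap\Gamma^{(\ell)}|$ and then show that every bundle contains at most one block of $\Gamma^{(\ell)}$, using that the image of $\varphi_\ell$ avoids $N_\ell$ and that the filler index is unique. One small wording fix: the uniqueness of $\varphi_\ell^{-1}(j)$ comes from the injectivity of $\varphi_\ell$, not merely from $\varphi_\ell$ being a function.
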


\begin{proof}
Agent $i$ values only the goods of $\Gamma^{(\ell)}$, so blocks of other approval sets and junk goods contribute $0$ to any value computed by $i$; consequently $v_i(A^{(t)}_j)=|A^{(t)}_j\cap\Gamma^{(\ell)}|$ for every $j$.

We claim that every bundle contains at most one block of $\Gamma^{(\ell)}$.
Indeed, a member $j\in N_\ell$ receives her own block $B^{\ell}_{\sigma_\ell(j)+t}$ and no filler block of her own group, because the image of $\varphi_\ell$ avoids $N_\ell$; and an agent $j\notin N_\ell$ receives a block of $\Gamma^{(\ell)}$ only if $j=\varphi_\ell(q)$, in which case $q$ is unique by injectivity of $\varphi_\ell$, so she receives exactly the one block $B^{\ell}_{q+t}$.
Since every block has size at most $u+1$, this proves $v_i(A^{(t)}_j)\le u+1$.
Applying the claim to $j=i$ gives $v_i(A^{(t)}_i)=|B^{\ell}_{\sigma_\ell(i)+t}|$, which lies in $\{u,u+1\}$ by Step 1.
\end{proof}

\begin{lemma}[Ex-post guarantees]
\label{slem:laminar-expost}
Every $A^{(t)}$ is simultaneously \EFone{} and \EQone{}.
\end{lemma}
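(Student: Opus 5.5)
The plan is to derive both properties directly from \Cref{slem:laminar-cap}, using the binary reformulation \eqref{eq:binary-fairness-cap}. By that equivalence, an allocation of binary goods is \EFone{}+\EQone{} exactly when $v_i(A_j)\le u_i+1$ and $u_j\le u_i+1$ for all $i,j$, where $u_i=v_i(A_i)$. So we will fix a round $t\in\mathbb{Z}_r$ and verify these two inequalities for the allocation $A^{(t)}$, which is a genuine allocation by \Cref{slem:laminar-validity}.

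For \EQone{}, note that \Cref{slem:laminar-cap} gives $u_i=v_i(A^{(t)}_i)\in\{u,u+1\}$ for every agent $i$. The group index $\ell$ is determined by $i$, and every agent belongs to some group. Hence any two realized utilities differ by at most one, which yields $u_j\le u+1\le u_i+1$.

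For \EFone{}, \Cref{slem:laminar-cap} also gives $v_i(A^{(t)}_j)\le u+1$ for every $j$. Combined with $u_i\ge u$, this gives $v_i(A^{(t)}_j)\le u_i+1$.

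If we prefer to avoid appealing to \eqref{eq:binary-fairness-cap} as a black box, we can spell out the removal step. Suppose $i$ envies $j$. Integrality forces $v_i(A^{(t)}_j)=u_i+1\ge1$, so $A^{(t)}_j$ contains a good $g$ that $i$ values at $1$. Removing $g$ gives $v_i(A^{(t)}_j\setminus\{g\})=u_i$, so the envy disappears. The \EQone{} case works the same way: if $u_j>u_i$, then $u_j=u_i+1\ge1$, and removing a good of $A^{(t)}_j$ that $j$ values at $1$ closes the gap. When $A^{(t)}_j=\emptyset$, both conditions hold vacuously.

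There is no real obstacle here. The substantive work, namely that each bundle contains at most one block of any approval set, was already done in \Cref{slem:laminar-cap}. The only point needing care is that the cap applies across groups: an agent $i$ of group $\ell$ evaluates filler bundles and bundles of other groups only through their intersection with $\Gamma^{(\ell)}$, and \Cref{slem:laminar-cap} states the bound for every $j$, which covers this.
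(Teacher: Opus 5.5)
Your proof is correct and matches the paper's argument. Both derive everything from \Cref{slem:laminar-cap}, using that realized utilities lie in $\{u,u+1\}$ and that $v_i(A^{(t)}_j)\le u+1$; in the only nontrivial case, where the compared value is $u+1$ and $v_i(A^{(t)}_i)=u$, both remove a good approved by $i$ for \EFone{} (respectively by $j$ for \EQone{}). The paper does this case split directly instead of citing \eqref{eq:binary-fairness-cap}, and your spelled-out removal step is exactly that argument.
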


\begin{proof}
Fix $t$, write $A:=A^{(t)}$, and let $i\in N_\ell$ and $j$ be agents with $A_j\neq\emptyset$.

\emph{\EFone{}.}
If $v_i(A_j)\le v_i(A_i)$, then any $g\in A_j$ satisfies $v_i(A_i)\ge v_i(A_j\setminus\{g\})$ because values are nonnegative.
Otherwise \Cref{slem:laminar-cap} forces $v_i(A_j)=u+1$ and $v_i(A_i)=u$.
In particular $v_i(A_j)\ge1$, so $A_j$ contains a good $g$ with $v_i(g)=1$; for this $g$,
\[
    v_i(A_j\setminus\{g\})=u=v_i(A_i).
\]
Note that $g$ must be chosen among the goods that $i$ approves: deleting an arbitrary good of $A_j$---for instance a junk good, or a good of another group's approval set---need not decrease $v_i(A_j)$ at all.

\emph{\EQone{}.}
If $v_j(A_j)\le v_i(A_i)$, any $g\in A_j$ works, again by non-negativity.
(This case covers, in particular, a nonempty bundle $A_j$ with $v_j(A_j)=0$, which contains no good that $j$ approves; there the removed good is an arbitrary, zero-valued one.)
Otherwise \Cref{slem:laminar-cap}, applied to $j$ and to $i$, gives $v_j(A_j)=u+1$ and $v_i(A_i)=u$.
Then $A_j$ contains a good $g$ with $v_j(g)=1$, and
\[
    v_j(A_j\setminus\{g\})=u=v_i(A_i). \qedhere
\]
\end{proof}

\begin{lemma}[Ex-ante guarantees]
\label{slem:laminar-exante}
The randomized allocation $\bm{A}$ is ex-ante \EQ{} and ex-ante \EF{}; indeed every agent has expected utility exactly $k/r$.
\end{lemma}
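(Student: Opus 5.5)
The plan is to compute, for each agent $i\in N_\ell$, her expected value for every bundle, using the rotation structure of Step~4. First I will treat her own bundle. By \Cref{slem:laminar-cap}, $v_i(A^{(t)}_i)=|B^\ell_{\sigma_\ell(i)+t}|$. As $t$ ranges over $\mathbb{Z}_r$, the index $\sigma_\ell(i)+t$ ranges over all of $\mathbb{Z}_r$ exactly once, because translation is a bijection. Hence
\[
\mathbb{E}\bigl[v_i(A_i)\bigr]=\frac1r\sum_{p\in\mathbb{Z}_r}|B^\ell_p|=\frac{|\Gamma^{(\ell)}|}{r}=\frac kr,
\]
by Step~1, since the blocks partition $\Gamma^{(\ell)}$. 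The value $k/r$ does not depend on $\ell$ or $i$, so ex-ante \EQ{} follows immediately.

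For ex-ante \EF{}, I fix $i\in N_\ell$ and another agent $j$. As in the proof of \Cref{slem:laminar-cap}, $v_i(A^{(t)}_j)=|A^{(t)}_j\cap\Gamma^{(\ell)}|$, and $A^{(t)}_j$ contains at most one block of $\Gamma^{(\ell)}$. The argument splits into three cases.
\begin{enumerate}
\item If $j\in N_\ell$, that block is $B^\ell_{\sigma_\ell(j)+t}$, and the same rotation argument gives expectation $k/r$.
\item If $j=\varphi_\ell(q)$ for some $q\in\mathbb{Z}_r\setminus S_\ell$, then $q$ is unique by injectivity, the block is $B^\ell_{q+t}$, and again the expectation is $k/r$.
\item Otherwise $j$ never receives a good of $\Gamma^{(\ell)}$, so the expectation is $0$.
\end{enumerate}
In every case $\mathbb{E}[v_i(A_j)]\le k/r=\mathbb{E}[v_i(A_i)]$, which gives ex-ante \EF{}. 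Junk goods contribute nothing to any of these values.

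No step here is a real obstacle; the argument is bookkeeping. The one point needing care is the claim that a bundle $A^{(t)}_j$ meets $\Gamma^{(\ell)}$ in at most one block, and that this block's index is a fixed offset plus $t$. Both facts come from the injectivity of $\sigma_\ell$ and $\varphi_\ell$ and from the image of $\varphi_\ell$ avoiding $N_\ell$, exactly as already argued in \Cref{slem:laminar-cap}. I would cite that argument rather than repeat it, and conclude that the uniform distribution over the $r$ rounds is ex-ante \EF{}+\EQ{}.
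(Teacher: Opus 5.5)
Your proposal is correct and follows the paper's proof essentially step for step. The rotation argument gives every agent expected own utility $k/r$, independent of her group, which yields ex-ante \EQ{}. The same three-case split ($j\in N_\ell$; $j=\varphi_\ell(q)$ a filler for group $\ell$; otherwise) yields ex-ante \EF{}, with the at-most-one-block fact cited from \Cref{slem:laminar-cap}.
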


\begin{proof}
Fix $i\in N_\ell$.
For fixed $\sigma_\ell(i)$, the map $t\mapsto\sigma_\ell(i)+t$ is a bijection of $\mathbb{Z}_r$, so by \Cref{slem:laminar-cap},
\[
    \sum_{t\in\mathbb{Z}_r} v_i\bigl(A^{(t)}_i\bigr)
    =\sum_{t\in\mathbb{Z}_r}\bigl|B^{\ell}_{\sigma_\ell(i)+t}\bigr|
    =\sum_{p\in\mathbb{Z}_r}\bigl|B^{\ell}_{p}\bigr|
    =\bigl|\Gamma^{(\ell)}\bigr|=k .
\]
As the randomized allocation is uniform over the $r$ rounds, $\mathbb{E}[v_i(A_i)]=k/r$.
Crucially, this value does not depend on $i$ or on $\ell$, because every approval set has the same size $k$; hence $\bm{A}$ is ex-ante \EQ{}.
(here cutting every $\Gamma^{(\ell)}$ into the same number $r$ of blocks is used: cutting $\Gamma^{(\ell)}$ into $n_\ell$ blocks would give the members of a group with $n_\ell<r$ the larger expected utility $k/n_\ell$.)

For ex-ante \EF{}, fix $j\neq i$ and recall from the proof of \Cref{slem:laminar-cap} that $v_i(A^{(t)}_j)=|A^{(t)}_j\cap\Gamma^{(\ell)}|$ and that $A^{(t)}_j$ contains at most one block of $\Gamma^{(\ell)}$.
We distinguish three cases.
\begin{enumerate}[(a)]
    \item $j\in N_\ell$. Then $j$ receives the block
    $B^{\ell}_{\sigma_\ell(j)+t}$ in round $t$, and the computation above
    with $\sigma_\ell(j)$ in place of $\sigma_\ell(i)$ gives
    $\mathbb{E}[v_i(A_j)]=k/r$.
    \item $j=\varphi_\ell(q)$ for some (necessarily unique)
    $q\in\mathbb{Z}_r\setminus S_\ell$. Then $j$ receives the block
    $B^{\ell}_{q+t}$ in round $t$, and since $t\mapsto q+t$ is again a
    bijection of $\mathbb{Z}_r$,
    \[
        \mathbb{E}[v_i(A_j)]
        =\frac1r\sum_{t\in\mathbb{Z}_r}\bigl|B^{\ell}_{q+t}\bigr|
        =\frac{k}{r}.
    \]
    \item Otherwise $j$ receives no block of $\Gamma^{(\ell)}$ in any round,
    so $\mathbb{E}[v_i(A_j)]=0$.
\end{enumerate}
In every case $\mathbb{E}[v_i(A_j)]\le k/r=\mathbb{E}[v_i(A_i)]$, which is
ex-ante \EF{}.
\end{proof}

\begin{proof}[Proof of \Cref{thm:laminar-binary-lottery}]
By \Cref{slem:laminar-validity} the randomized allocation is supported on $r\le n$ deterministic allocations, by \Cref{slem:laminar-expost} every one of them is \EFone{}+\EQone{}, and by \Cref{slem:laminar-exante} the randomized allocation is ex-ante \EF{}+\EQ{}.
For the running time, the groups $N_\ell$ and the sets $\Gamma^{(\ell)}$ are obtained by comparing the $n$ approval sets pairwise, the balanced partitions of Step 1 and the injections of Steps 2 and 3 are constructed greedily, and the $r\le n$ allocations of Step 4 are then written down directly; every step is polynomial in $n$ and $m$.
\end{proof}

\begin{remark}
Ex-ante envy-freeness holds with equality rather than strictly: from agent $i$'s perspective, the $r$ agents in $N_\ell\cup\varphi_\ell(\mathbb{Z}_r\setminus S_\ell)$ all have expected value exactly $k/r$, and every other agent has expected value $0$.
Note also that our construction need not satisfy the ex-post guarantee \EQX{}: an agent with realized utility $u+1$ may also hold junk goods or blocks approved only by other groups, and removing such a zero-valued good from her bundle does not lower her utility, so an agent with realized utility $u$ still fails the \EQX{} comparison against her.
\end{remark}

\subsection{Tight Examples for Ex-post Guarantees in Randomized Allocations}
\label{sec:expost-limits}

Both \Cref{thm:laminar-binary-lottery} and \Cref{thm:chores-binary-ex-ante-ex-post} pair exact ex-ante \EF{}+\EQ{} with ex-post \EFone{}+\EQone{}.
In this section, we show that these ex-post guarantees cannot be strengthened to their ``up to any item'' counterparts.
The incompatibility arises from combining the ex-ante with the ex-post and not due to the stronger deterministic ex-post guarantees; \Cref{thm:binary_chores_efx_eqx} always guarantees a deterministic \EFX{}+\EQX{} allocation.

\begin{proposition}\label{prop:laminar-expost-efx-tight}
There is a normalized binary goods instance with four agents, five goods, and laminar approval sets in which no ex-ante \EQ{} randomized allocation is ex-post \EFX{}.
Consequently, the ex-post \EFone{} guarantee of \Cref{thm:laminar-binary-lottery} cannot be strengthened to \EFX{}.
\end{proposition}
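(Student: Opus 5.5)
Build a normalized laminar binary instance in which the agents sharing one approval set are forced, ex ante, to low expected utility. Then show that ex-post \EFX{} forces a lone agent with a different approval set to have realized utility at least one in \emph{every} support allocation, which breaks ex-ante \EQ{}.

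\textbf{The instance.} There are four agents and five goods $a,b,c,d,e$.
\begin{itemize}
\item Agents $1,2,3$ approve exactly $\{a,b\}$.
\item Agent $4$ approves exactly $\{c,d\}$.
\item The good $e$ is approved by nobody.
\end{itemize}
Every agent approves two goods, so the instance is normalized. The two distinct approval sets are disjoint, so it is laminar.

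\textbf{Step 1: an ex-ante bound for agents $1,2,3$.} These three agents have identical valuations and share only two approved goods. So in every deterministic allocation their realized utilities sum to at most $2$. Taking expectations, some agent among them has expected utility at most $2/3$. Ex-ante \EQ{} then forces agent $4$'s expected utility to be at most $2/3$ as well.

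\textbf{Step 2: the structure of every \EFX{} allocation.} Fix an \EFX{} allocation. Since three agents share two approved goods, some $i\in\{1,2,3\}$ has utility $0$.
\begin{itemize}
\item \emph{$a$ and $b$ are alone.} If a bundle contains $a$ (or $b$) together with any other good $g$, remove $g$. Agent $i$ still values the rest at least $1>0$, which violates \EFX{}. Hence $a$ and $b$ each form singleton bundles, held by two distinct agents.
\item \emph{Agent $4$ holds neither singleton.} Suppose agent $4$ holds $\{a\}$ or $\{b\}$, so her utility is $0$. Then $c,d,e$ go to the remaining two agents, and one of them gets at least two of these goods. That bundle contains $c$ or $d$ plus another good. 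Removing the other good leaves agent $4$ value at least $1>0$, which violates \EFX{}.
\item \emph{Agent $4$ gets utility at least $1$.} So agent $4$ and one other agent split $\{c,d,e\}$. If agent $4$'s utility were $0$, the other agent would hold both $c$ and $d$. Removing either one leaves agent $4$ value at least $1>0$, again violating \EFX{}.
\end{itemize}
Therefore agent $4$ has realized utility at least $1$ in every \EFX{} allocation.

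\textbf{Step 3: conclusion.} In any ex-post \EFX{} randomized allocation, agent $4$ has expected utility at least $1>2/3$. This contradicts the ex-ante \EQ{} bound from Step 1. The "consequently" part follows because the instance is normalized, binary and laminar, so it falls under \Cref{thm:laminar-binary-lottery}.

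The step requiring the most care is Step 2. The \EFX{} convention allows removing a zero-valued good, and this is exactly what forces $a$ and $b$ into singletons. The case analysis must also cover agent $4$ receiving a singleton as well as empty bundles.
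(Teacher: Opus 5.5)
Your proof is correct and follows the paper's route: the same instance, the same key fact that $v_4(A_4)\ge 1$ in every \EFX{} allocation (your case analysis starts from the zero-utility agent among $1,2,3$ rather than from assuming $v_4(A_4)=0$, but it does the same job), and the same expectation argument against ex-ante \EQ{}. The only substantive difference is a small simplification. The paper's second claim, $\sum_{i\le 3}v_i(A_i)=2$ in every \EFX{} allocation, is proved there using \EFX{}, whereas you correctly observe that the trivial bound $\sum_{i\le 3}v_i(A_i)\le 2$, which holds for every allocation, already gives the contradiction $\mu\le 2/3<1\le\mu$.
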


\begin{proof}
Let $N=\{1,2,3,4\}$ and $M=\{a_1,a_2,b_1,b_2,z\}$, where agents $1,2,3$ approve $\Gamma_1=\Gamma_2=\Gamma_3=\{a_1,a_2\}$, agent~$4$ approves $\Gamma_4=\{b_1,b_2\}$, and the good $z$ is approved by nobody.
Every agent values $M$ at $2$, so the instance is normalized, and any two approval sets are either identical or disjoint, and hence the family is laminar.
We claim that every \EFX{} allocation $A$ satisfies
\begin{equation}\label{eq:laminar-tight}
v_4(A_4)\ \ge\ 1
\qquad\text{and}\qquad
\sum_{i=1}^{3} v_i(A_i)\ =\ 2 .
\end{equation}

\emph{The first part of \eqref{eq:laminar-tight}.}
Suppose $v_4(A_4)=0$, i.e., $b_1,b_2\notin A_4$.
If a single agent $i$ held both $b_1$ and $b_2$, then $i\neq4$ and, taking $g=b_1$, we would get $v_4(A_i\setminus\{g\})\ge v_4(b_2)=1>0=v_4(A_4)$, contradicting \EFX{} for the pair $(4,i)$.
Hence $b_1\in A_{i_1}$ and $b_2\in A_{i_2}$ for two distinct agents $i_1,i_2\in\{1,2,3\}$.
Moreover, if $A_{i_1}$ contained a good $g\neq b_1$, then $v_4(A_{i_1}\setminus\{g\})\ge v_4(b_1)=1>0$, again contradicting \EFX{} for $(4,i_1)$.
Therefore $A_{i_1}=\{b_1\}$ and, symmetrically, $A_{i_2}=\{b_2\}$; in particular $v_{i_1}(A_{i_1})=0$.
The remaining goods $a_1,a_2,z$ are thus split between agent~$4$ and the unique remaining agent $y\in\{1,2,3\}$.
For $x\in\{4,y\}$, \EFX{} for the pair $(i_1,x)$ requires $v_{i_1}(A_x\setminus\{g\})\le v_{i_1}(A_{i_1})=0$ for every $g\in A_x$; consequently, if $A_x$ contains one of $a_1,a_2$, then $A_x$ contains nothing else.
Consequently, if either $A_4$ or $A_y$ contains one of $a_1, a_2$, that bundle must be a singleton. But $A_4\cup A_y$ must contain all three goods $a_1, a_2, z$.
Since these three goods must be distributed between only two bundles, some bundle must contain one of $a_1,a_2$ together with another good, a contradiction.

\emph{The second part of \eqref{eq:laminar-tight}.}
No agent holds both $a_1$ and $a_2$: if agent $x$ did, pick any $i\in\{1,2,3\}\setminus\{x\}$, which exists because $|\{1,2,3\}|=3$; then $v_i(A_i)=0$ while $v_i(A_x\setminus\{a_1\})\ge v_i(a_2)=1$, contradicting \EFX{} for $(i,x)$.
Next, agent~$4$ holds neither $a_1$ nor $a_2$: suppose $a_1\in A_4$.
By the first part, $A_4$ also contains a good of $\{b_1,b_2\}$, so $|A_4|\ge2$; and since $a_2\notin A_4$, at least two agents of $\{1,2,3\}$ receive no good of $\{a_1,a_2\}$ and hence have utility $0$.
Picking such an agent $i$ and any $g\in A_4\setminus\{a_1\}$ gives $v_i(A_4\setminus\{g\})\ge v_i(a_1)=1>0=v_i(A_i)$, contradicting \EFX{} for $(i,4)$.
Hence $a_1$ and $a_2$ are held by two distinct agents of $\{1,2,3\}$, so exactly two of these agents have utility $1$ and the third has utility $0$, which gives $\sum_{i\le3}v_i(A_i)=2$.

Now let $\bm{A}$ be any ex-post \EFX{} randomized allocation and write $\mu_i=\mathbb{E}[v_i(A_i)]$.
Taking expectations in \eqref{eq:laminar-tight} yields $\mu_1+\mu_2+\mu_3=2$ and $\mu_4\ge1$.
If $\bm{A}$ were ex-ante \EQ{}, all four expectations would equal a common value $\mu$, so that $3\mu=2$ and simultaneously $\mu\ge1$, which is impossible.
\end{proof}

\begin{proposition}\label{prop:chores-expost-tight}
There is a binary chores instance with two agents and two chores in which no ex-ante \EF{}+\EQ{} randomized allocation is ex-post \EFX{}, and none is ex-post \EQX{}.
Consequently, neither the ex-post \EFone{} nor the ex-post \EQone{} guarantee of \Cref{thm:chores-binary-ex-ante-ex-post} can be strengthened to \EFX{} or \EQX{}, respectively.
\end{proposition}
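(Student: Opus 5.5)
We will prove the statement with an explicit unnormalized binary chores instance. There are two agents and two chores $c_1,c_2$. Agent~1 values both chores at $-1$. Agent~2 values $c_1$ at $-1$ and $c_2$ at $0$. Normalization fails because $v_1(M)=-2\neq-1=v_2(M)$, which the text preceding the statement anticipates. The plan is to list the four deterministic allocations (indexed by $A_1\in\{\emptyset,\{c_1\},\{c_2\},\{c_1,c_2\}\}$), determine which of them can appear in the support of an ex-post \EFX{} or an ex-post \EQX{} lottery, and then show that no lottery supported on those allocations is ex-ante \EF{}+\EQ{}.

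\emph{Step 1: the two extreme allocations violate both \EFX{} and \EQX{}.}

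For $(\emptyset,\{c_1,c_2\})$, take $c_2$, which agent~2 values at zero. Removing it from agent~2's bundle leaves $v_2(\{c_1\})=-1$. This is below $v_2(\emptyset)=0$, which breaks \EFX{}, and below $v_1(\emptyset)=0$, which breaks \EQX{}. Here we use the all-item convention of the paper, so zero-valued chores may be removed.

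For $(\{c_1,c_2\},\emptyset)$, removing either chore from agent~1's bundle leaves $-1<0$. This holds both against $v_1(\emptyset)$ and against $v_2(\emptyset)$, so \EFX{} and \EQX{} both fail.

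Consequently, any ex-post \EFX{} lottery and any ex-post \EQX{} lottery is supported on the two middle allocations $(\{c_1\},\{c_2\})$ and $(\{c_2\},\{c_1\})$. We only need this inclusion, not a check that the middle allocations actually satisfy the notions. That check is routine in any case, and it confirms that the obstruction is the ex-ante/ex-post combination, consistent with \Cref{thm:binary_chores_efx_eqx}.

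\emph{Step 2: no ex-ante \EF{}+\EQ{} lottery lives on the middle allocations.} Let $q$ be the probability of $(\{c_1\},\{c_2\})$, so $(\{c_2\},\{c_1\})$ has probability $1-q$.

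Agent~1's realized utility is $-1$ in both allocations. Agent~2's realized utility is $0$ in the first and $-1$ in the second, so its expectation is $-(1-q)$. Ex-ante \EQ{} therefore forces $-(1-q)=-1$, that is, $q=0$, and the lottery is the deterministic allocation $(\{c_2\},\{c_1\})$.

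In that allocation, agent~2 values her own bundle at $-1$ but values agent~1's bundle $\{c_2\}$ at $0$. This violates ex-ante \EF{}, a contradiction. The same argument applies to both the \EFX{} and the \EQX{} claims, since Step~1 restricts the support identically in both cases. This yields the ``consequently'' part of the proposition, because \Cref{thm:chores-binary-ex-ante-ex-post} covers this unnormalized binary instance.

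The main obstacle is choosing the instance; once it is fixed, every step is a finite case check. The design principle is that the zero-valued chore $c_2$ makes the lopsided allocations fail the ``any item'' notions. At the same time, the asymmetry in total disutility between the agents pins the remaining lottery down through ex-ante \EQ{}, and ex-ante \EF{} then rules out what is left.
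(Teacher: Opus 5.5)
Your proof is correct and is essentially the paper's argument. Your instance is the paper's with the two agents relabeled; like the paper, you use the zero-valued chore under the all-item convention to exclude the two lopsided allocations, then let ex-ante \EQ{} pin the lottery to a single middle allocation that violates ex-ante \EF{} (you also rightly note that checking the middle allocations are \EFX{}+\EQX{} is not logically required, though the paper does check it).
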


\begin{proof}
Let $N=\{1,2\}$ and $M=\{c_1,c_2\}$ with $v_1(c_1)=v_2(c_1)=v_2(c_2)=-1$ and $v_1(c_2)=0$.
The instance is binary, hence restricted additive with base value $v(c)=-1$, and it is unnormalized.
If agent~$1$ receives both chores, then both \EFX{} and \EQX{} fail for the pair $(1,2)$ with $c=c_2$, because $v_1(A_1\setminus\{c_2\})=-1<0=v_1(A_2)=v_2(A_2)$; symmetrically, both fail with $c=c_1$ when agent~$2$ receives both chores.
The two remaining allocations,
\[
A^{\ast}=(\{c_1\},\{c_2\})
\qquad\text{and}\qquad
A^{\dagger}=(\{c_2\},\{c_1\}),
\]
are \EFX{} and \EQX{}: every bundle is a singleton, so deleting its unique chore leaves the empty bundle, of value $0$, which is at least the value of any bundle in a chores instance.
Hence every ex-post \EFX{} (or ex-post \EQX{}) randomized allocation is supported on $\{A^{\ast},A^{\dagger}\}$.
Writing $p$ for the probability of $A^{\ast}$, we obtain $\mathbb{E}[v_1(A_1)]=-p$ and $\mathbb{E}[v_2(A_2)]=-1$, so ex-ante \EQ{} forces $p=1$, i.e., the deterministic allocation $A^{\ast}$.
But $A^{\ast}$ is not ex-ante \EF{}, since $v_1(A_1)=-1<0=v_1(A_2)$.
\end{proof}

\begin{remark}\label{rem:laminar-eqx-open}
\Cref{prop:laminar-expost-efx-tight} settles only the envy coordinate for laminar binary goods, and we did not find a corresponding obstruction for equitability.
An exhaustive search over all normalized binary goods instances with laminar approval sets, at most four agents and at most seven goods, together with selected larger instances with up to six agents, produced no instance in which ex-ante \EF{}+\EQ{} is incompatible with ex-post \EFone{}+\EQX{}.
Whether \Cref{thm:laminar-binary-lottery} can be strengthened to ex-post \EFone{}+\EQX{} is an intriguing open question.
\end{remark}

\end{document}